\definecolor{DarkRed}{rgb}{0.5,0.1,0.1}
\definecolor{DarkBlue}{rgb}{0.1,0.1,0.5}
\newtheorem{theorem}{Theorem}
\newtheorem{lemma}{Lemma}%[section]
\newtheorem{cor}{Corollary}
\theoremstyle{definition}
\newtheorem{rem}{Remark}
\newtheorem{definition}{Definition}
\newtheorem*{claim*}{Claim}
\newtheorem*{proposition*}{Proposition}
\newtheorem*{lemma*}{Lemma}
\newtheorem*{problem*}{Problem}
\newtheorem{mdresult}[theorem]{Theorem}
\newtheorem{mdinvariant}{Invariant}
\newcommand{\istrut}[2][0]{\rule[- #1 mm]{0mm}{#1 mm}\rule{0mm}{#2 mm}}
\newcommand{\ignore}[1]{}
\newcommand{\Var}{\operatorname{Var}}
\newcommand{\E}{\mathbb{E}}
\newcommand{\ceil}[1]{\left\lceil{#1}\right\rceil}
\newcommand{\floor}[1]{\left\lfloor{#1} \right\rfloor}
\newcommand{\bydef}{\stackrel{\mathrm{def}}{=}}
\newcommand{\paren}[1]{\mathopen{}\left(#1\right)\mathclose{}}
\newcommand{\fish}{\mathsf{Fish}}
\newcommand{\fishmonger}{\textsf{Fishmonger}}
\newcommand{\MVP}{\textsf{MVP}}
\newcommand{\Martingale}{\textsf{Martingale}}
\newcommand{\Curtain}{\textsf{Curtain}}
\newcommand{\sCurtain}{\textsf{SecondCurtain}}
\newcommand{\Indicator}[1]{\left\llbracket\istrut[1]{3.5} #1 \right\rrbracket}
\newcommand{\LL}{\mathsf{LL}}
\newcommand{\HyperLogLog}{\textsf{HyperLogLog}}
\newcommand{\LogLog}{\textsf{LogLog}}
\newcommand{\qLL}{q\text{-}\LL}
\newcommand{\PCSA}{\textsf{PCSA}}
\newcommand{\qPCSA}{q\text{-}\PCSA}
\newcommand{\MinCount}{\textsf{MinCount}}
\newcommand{\hS}{S^{\operatorname{h}}}
\newcommand{\lS}{S^{\operatorname{l}}}
\newcommand{\Cell}{\operatorname{Cell}}
\title{Non-Mergeable Sketching 
for Cardinality Estimation\thanks{This work was supported by NSF grants CCF-1637546 and CCF-1815316.}}
\author{Seth Pettie\\ University of Michigan \\ \footnotesize
\texttt{pettie@umich.edu} \and
Dingyu Wang\\ University of Michigan\\
\footnotesize \texttt{wangdy@umich.edu} \and
Longhui Yin\\ Tsinghua University\\
\footnotesize \texttt{ylh17@mails.tsinghua.edu.cn}
}
\date{}
\begin{document}
\maketitle

\begin{abstract}
\emph{Cardinality estimation} is perhaps the simplest non-trivial statistical problem that can be solved via sketching.
Industrially-deployed sketches like \textsf{HyperLogLog}, \textsf{MinHash}, and \textsf{PCSA} are \emph{mergeable}, which means that large data sets can be sketched in a distributed environment,
and then merged into a single sketch of the whole data set.
In the last decade a variety of sketches have been developed 
that are \emph{non-mergeable}, but attractive for other reasons.
They are \emph{simpler}, their cardinality estimates are \emph{strictly unbiased}, and they have substantially \emph{lower variance}.

\bigskip

We evaluate sketching schemes on a reasonably level playing field, in terms of their \emph{memory-variance product} (\MVP). 
I.e., a sketch that occupies $5m$ bits and whose relative variance is $2/m$ (standard error $\sqrt{2/m}$) has an \MVP{} of $10$.  Our contributions are as follows.

\begin{itemize}
    \item Cohen~\cite{Cohen15} and Ting~\cite{Ting14} independently discovered what we call the \emph{\Martingale{} transform} for converting a mergeable 
    sketch into a non-mergeable sketch.
    We present a simpler way to analyze the limiting \MVP{} of \Martingale-type sketches.
    
    \item Pettie and Wang proved that the \fishmonger{} sketch~\cite{PettieW21} has the best $\MVP$, $H_0/I_0 \approx 1.98$, among a class of mergeable sketches called ``linearizable'' sketches.
    ($H_0$ and $I_0$ are precisely defined constants.)
    We prove that the \Martingale{} transform is optimal in the non-mergeable world, and that \Martingale{} \fishmonger{} in particular is optimal among linearizable sketches, with an 
    $\MVP$ of $H_0/2 \approx 1.63$.  
    E.g., this is circumstantial evidence that to achieve 1\% standard error, we cannot do better than a 2 kilobyte sketch.
    
    \item \Martingale{} \fishmonger{} is neither simple nor practical.  We develop a new mergeable sketch called \Curtain{} that strikes a nice balance between simplicity and efficiency, and prove that \Martingale{} \Curtain{} has limiting $\MVP\approx 2.31$.  It can be updated with $O(1)$ memory accesses and it has lower empirical variance 
    than \Martingale{} \LogLog, 
    a practical non-mergeable version of \HyperLogLog.
\end{itemize}
\end{abstract}

\thispagestyle{empty}
%\clearpage
\setcounter{page}{0}

\newpage

\section{Introduction}\label{sect:introduction}

\emph{Cardinality estimation}\footnote{(aka $F_0$ estimation or \emph{Distinct Elements})} 
is a fundamental problem in streaming and sketching
with diverse applications in databases~\cite{ChiaDPSLDWG19,Freitag19}, 
network monitoring~\cite{Ben-BasatEFMR18,BuddhikaMPP17,XiaoCZCLLL17,ChenCC17}, 
nearest neighbor search~\cite{Pham17},
caching~\cite{WiresIDHW14}, and genomics~\cite{MarcaisSPK19,ElworthWKBCBGBST20,WoodLL19,BakerL19}.
In the \emph{\textbf{sequential}} setting of this problem, 
we receive the elements of a multiset 
$\mathscr{A} = \{a_1,a_2,\ldots,a_N\}$ one at a time.
We maintain a small \emph{sketch} $S$ of the elements seen so far,
such that the true cardinality $\lambda = |\mathscr{A}|$
is estimated by
some
$\hat{\lambda}(S)$. The \emph{\textbf{distributed}} setting is similar, except 
that $\mathscr{A}$ is partitioned arbitrarily among several
machines, the shares being sketched separately and combined 
into a sketch of $\mathscr{A}$.  Only \emph{\textbf{mergeable}}
sketches are deployed in distributed settings; see Definition~\ref{def:mergeable} below.

\begin{definition}\label{def:models}
In the \textsc{random oracle model} $\mathscr{A}\subseteq [U]$
and we have oracle access to a uniformly random permutation
$h:[U]\rightarrow [U]$ (or a uniformly random hash function $h:[U]\rightarrow [0,1]$).
In the \textsc{standard model} we can generate random bits as necessary,
but must explicitly store any hash functions in the sketch.
\end{definition}

\begin{definition}\label{def:mergeable}
Suppose $\mathscr{A}^{(1)},\mathscr{A}^{(2)}$ are multisets such that
$\mathscr{A} = \mathscr{A}^{(1)}\cup \mathscr{A}^{(2)}$.
A sketching scheme is \emph{\textbf{mergeable}}
if, whenever, $\mathscr{A}^{(1)},\mathscr{A}^{(2)}$ are sketched as
$S^{(1)},S^{(2)}$ (using the same \textsc{random oracle} $h$ or the 
same source of random bits in the \textsc{standard model}), 
the sketch $S$ of $\mathcal{A}$ can be computed from $S^{(1)},S^{(2)}$ alone.
\end{definition}

\textsc{Standard model} 
sketches~\cite{AlonGMS99,Bar-YossefJKST02,Bar-YossefKS02,Blasiok20,GibbonsT01,KaneNW10}
usually make an $(\epsilon,\delta)$-guarantee, i.e.,
\[
\Pr\left(\hat{\lambda} \not\in [(1-\epsilon)\lambda,(1+\epsilon)\lambda]\right) < \delta.
\]
The state-of-the-art \textsc{standard model} sketch~\cite{Blasiok20,KaneNW10} uses
$O(\epsilon^{-2}\log\delta^{-1} + \log U)$ bits, which is optimal 
at this level of specificity, as it meets the space lower bounds of 
$\Omega(\log U)$, $\Omega(\epsilon^{-2})$ (when $\delta=\Theta(1)$), 
and $\Omega(\epsilon^{-2}\log\delta^{-1})$~\cite{AlonGMS99,IndykW03,JayramW13}.
However, the leading constants hidden by~\cite{Blasiok20,KaneNW10} are quite large.

In the \textsc{random oracle model} 
the cardinality estimate $\hat{\lambda}$ 
typically has negligible bias, and errors
are expressed in terms of the \emph{relative 
variance} $\lambda^{-2}\cdot \Var(\hat{\lambda}\mid\lambda)$
or \emph{relative standard deviation} $\lambda^{-1} \sqrt{\Var(\hat{\lambda}\mid \lambda)}$,
also called the \emph{standard error}.  Sketches that use $\Omega(m)$ bits typically have
relative variances of $O(m)$.  
Thus, the most natural way to measure the quality of the 
\emph{sketching scheme} itself is to look at its limiting 
\emph{memory-variance product} ($\MVP$),
i.e., the product of its memory and variance as $m\rightarrow \infty$.

Until about a decade ago, all \textsc{standard/random oracle} sketches were mergeable,
and suitable to both distributed and sequential applications.
For reasons that are not 
clear to us, 
the idea of \emph{\textbf{non-mergeable}} sketching
was discovered independently by multiple groups~\cite{ChenCSN11,HelmiLMV12,Cohen15,Ting14} at about the same time, and quite \emph{late} in the 40-year history of cardinality estimation.
Chen, Cao, Shepp, and Nguyen~\cite{ChenCSN11} invented the \textsf{S-Bitmap} in 2011, 
followed by Helmi, Lumbroso, Mart{\'i}nez, and Viola's\cite{HelmiLMV12} 
\textsf{Recordinality} in 2012.
In 2014 Cohen~\cite{Cohen15} and Ting~\cite{Ting14} independently 
invented what we call the \emph{\Martingale{} transform}, which is
a simple, mechanical way to transform any mergeable sketch into a (better)
non-mergeable sketch.\footnote{Cohen~\cite{Cohen15} called
these \emph{Historical Inverse Probability} (HIP) sketches 
and Ting~\cite{Ting14} applied the prefix \emph{Streaming} 
to emphasize that they can be used in the single-stream setting,
not the distributed setting.}

In a companion paper~\cite{PettieW21}, we analyzed the $\MVP$s 
of \emph{mergeable} sketches under the assumption that the
sketch was compressed to its entropy bound.  \fishmonger{}
(an entropy compressed variant of \PCSA{} with a different estimator function)
was shown to have $\MVP = H_0/I_0 \approx 1.98$, where 
\begin{align*}
    H_0 &= (\ln 2)^{-1} + \sum_{k=1}^\infty k^{-1}\log_2(1+1/k)
&\mbox{ and \ } I_0 &= \zeta(2) = \pi^2/6.
\end{align*}
Furthermore, $H_0/I_0$ was shown to be the minimum $\MVP$ 
among \emph{linearizable} sketches, a subset of \emph{mergeable}
sketches that includes all the popular sketches (\HyperLogLog, \PCSA,
\textsf{MinHash}, etc.).  

Our aim in \emph{this} paper is to build a useful framework for designing 
and analyzing \emph{non-mergeable} sketching schemes, and, following~\cite{PettieW21},
to develop a theory of space-variance optimality in the non-mergeable world.
We work in the \textsc{random oracle model}.
Our results are as follows.
\begin{itemize}
    \item Although the \emph{\Martingale{} transform} itself is simple,
    analyzing the variance of these sketches is not.  For example,
    Cohen~\cite{Cohen15} and Ting~\cite{Ting14} estimated the standard error
    of \Martingale{} \LogLog{} to be about $\approx \sqrt{3/(4m)} \approx 0.866/\sqrt{m}$ 
    and about $\approx 1/(2\alpha_m m)$, respectively, where 
    the latter tends to $\sqrt{\ln 2/m} \approx 0.8326/\sqrt{m}$ as $m\rightarrow \infty$.\footnote{Here $\alpha_m = \left(m\int_0^\infty \left(\log_2\left(\frac{2+u}{1+u}\right)\right)^m du\right)^{-1}$ 
    is the coefficient of Flajolet et al.'s \textsf{HyperLogLog} estimator.}
    We give a general method for determining the limiting relative variance of \Martingale{}
    sketches that is strongly influenced by Ting's perspective.
    \item What is the most efficient (smallest $\MVP$) non-mergeable sketch for
    cardinality estimation?  The best \Martingale{} sketches perform better than the 
    \emph{ad hoc} non-mergeable \textsf{S-Bitmap} and \textsf{Recordinality}, but
    perhaps there is a completely different, better way to systematically 
    build non-mergeable sketches.
    We prove that up to some natural assumptions\footnote{(the sketch is insensitive to duplicates,
    and the estimator is unbiased)} the best non-mergeable sketch is a
    \Martingale{} \textsf{X} sketch, for some \textsf{X}.  Furthermore, 
    we prove that \Martingale{} \fishmonger, 
    having $\MVP$ of $H_0/2 \approx 1.63$, 
    is optimal 
    among all \Martingale{} \textsf{X} sketches, where \textsf{X} is \emph{linearizable}.
    This provides some circumstantial evidence that \Martingale{} \fishmonger{} is optimal,
    and that if we want, say, 1\% standard error, we need to use a 
    $H_0/2\cdot (0.01)^{-2}$-bit sketch,  $\approx $ 2 kilobytes.
    \item \Martingale{} \fishmonger{} has an attractive $\MVP$, but it is slow and cumbersome
    to implement.  We propose a new mergeable sketch called \Curtain{} that is
    ``naturally'' space efficient and easy to update in $O(1)$ memory accesses,
    and prove that \Martingale{} \Curtain{} has a limiting $\MVP \approx 2.31$.
\end{itemize}

\subsection{Prior Work: Mergeable Sketches}

Let $S_i$ be the state of the sketch after processing
$(a_1,\ldots,a_i)$.

The state of the \PCSA{} sketch~\cite{FlajoletM85} is a 2D matrix
$S\in \{0,1\}^{m\times \log U}$ and the
hash function 
$h : [U]\rightarrow [m]\times \mathbb{Z}^+$ 
produces two indices: $h(a) = (j,k)$
with probability $m^{-1}2^{-k}$.
$S_i(j,k)=1$ iff $\exists i'\in [i]. h(a_{i'})=(j,k)$.
Flajolet and Martin~\cite{FlajoletM85}
proved that a certain estimator has
standard error $0.78/\sqrt{m}$, making 
the $\MVP$ around $(0.78)^2\log U \approx 0.6\log U$.

Durand and Flajolet's \LogLog{} sketch~\cite{DurandF03}
consists of $m$ counters.  
It interprets $h$ exactly as in \PCSA,
and sets $S_i(j)=k$ iff $k$ is maximum such 
that $\exists i'\in [i]. h(a_{i'}) = (j,k)$.
Durand and Flajolet's estimator is 
of the form 
$\hat{\lambda}(S) \propto m2^{m^{-1}\sum_j S(j)}$
and has standard error $\approx 1.3/\sqrt{m}$.
Flajolet, Fusy, Gandouet, and Meunier's \HyperLogLog~\cite{FlajoletFGM07} 
is the same sketch but with the estimator
$\hat{\lambda}(S) \propto 
m^2(\sum_{j} 2^{-S(j)})^{-1}$.
They proved that it has standard error
tending to $\approx 1.04/\sqrt{m}$.
As the space is $m\log\log U$ bits, 
the $\MVP$ is $\approx 1.08\log\log U$.

The \MinCount{} sketch (aka \textsf{MinHash} or \textsf{Bottom}-$m$~\cite{Cohen97,CohenK08,Broder97}) 
stores the smallest
$m$ hash values, which we assume requires
$\log U$ bits each.  Using an appropriate estimator~\cite{Giroire09,ChassaingG06,Lumbroso10},
the standard error is $1/\sqrt{m}$ and $\MVP{} = \log U$.

It is straightforward to see that the entropy of \PCSA{} 
and \LogLog{} are both $\Theta(m)$.  Scheuermann and Mauve~\cite{ScheuermannM07} experimented with entropy
compressed versions of \PCSA{} and \HyperLogLog{} and 
found \PCSA{} to be slightly superior.  Rather than 
use the given estimators of~\cite{FlajoletM85,FlajoletFGM07,DurandF03},
Lang~\cite{Lang17} used Maximum Likelihood-type
Estimators and found entropy-compressed \PCSA{} 
to be significantly better than entropy-compressed 
\LogLog{} (with MLE estimators).  
Pettie and Wang~\cite{PettieW21}
defined the \underline{Fi}sher-\underline{Sh}annon ($\fish$)\footnote{$\fish$ is essentially the same as $\MVP$,
under the assumption that the sketch state is compressed to its entropy.}
number of a sketch as the ratio of its Shannon 
entropy (controlling its entropy-compressed size) to 
its Fisher information 
(controlling the variance of a statistically efficient estimator), 
and proved that the $\fish$-number of any base-$q$ 
\PCSA{} is $H_0/I_0$, and that the $\fish$-number of 
base-$q$ \LogLog{} is worse, 
but tends to $H_0/I_0$ in the limit as $q\rightarrow \infty$.
Here $H_0$ and $I_0$ are:
\begin{align*}
    H_0 &= (\ln 2)^{-1} + \sum_{k=1}^\infty k^{-1}\log_2(1+1/k)\\
\mbox{ and \ } I_0 &= \zeta(2) = \pi^2/6.
\end{align*}

\begin{table}[]
    \centering
    \begin{tabular}{|l|l|l|}
\multicolumn{1}{l}{\textsc{Mergeable Sketch}} &
\multicolumn{1}{l}{\textsc{Limiting} $\MVP$} &
\multicolumn{1}{l}{\textsc{Notes}}\\\hline
\textsf{PCSA}       \hfill \cite{FlajoletM85} & $.6\log U \approx 38.9$          & For $U=2^{64}$\\
\textsf{LogLog}     \hfill \cite{DurandF03} & $1.69\log\log U \approx 10.11$   & For $U=2^{64}$\\
\textsf{MinCount} \hfill \cite{Giroire09,ChassaingG06,Lumbroso10} & $\log U = 64$    & For $U=2^{64}$\\
\textsf{HyperLogLog} \hfill \cite{FlajoletFGM07} & $1.08\log\log U \approx 6.48$    & For $U=2^{64}$\\
\textsf{Fishmonger} \hfill \cite{PettieW21} & $H_0/I_0 \approx 1.98$            & \\\hline\hline
\multicolumn{2}{l}{\textsc{Non-Mergeable Sketch}\istrut[0]{4.5}}\\\hline
\textsf{S-Bitmap} \hfill \cite{ChenCSN11} &  $O(\log^2(U/m))$                           & \\
\textsf{Recordinality} \hfill \cite{HelmiLMV12} & $O(\log(\lambda/m)\log U)$                      & \\
\Martingale{} $\PCSA$ \hfill {\bf new} & $0.35\log U \approx 22.4$& For $U=2^{64}$ \\
\Martingale{} \textsf{LogLog} \hfill \cite{Cohen15,Ting14} & $0.69\log\log U \approx 4.16$ &  For $U=2^{64}$\\
\Martingale{} \textsf{MinCount} \hfill \cite{Cohen15,Ting14} & $0.5\log U = 32$ &  For $U=2^{64}$ \\
\Martingale{} \fishmonger{} \hfill {\bf new} & $H_0/2\approx 1.63$ & $H_0 = (\ln 2)^{-1}+\sum_{k\geq 1} \frac{\log_2(1+1/k)}{k}$  \\\hline
\Martingale{} \Curtain{} \hfill {\bf new}& $\approx 2.31$ & Theorem~\ref{thm:martingale-curtain} with $(q,a,h)=(2.91,2,1)$\\\hline\hline
\multicolumn{2}{l}{\textsc{Non-Mergeable Lower Bound}\istrut[0]{4.5}} \\\hline
\Martingale{} \textsf{X} \hfill {\bf new} & $\geq H_0/2$    & \textsf{X} is a \emph{linearizable} sketch\\\hline\hline
    \end{tabular}
    \caption{A selection of results on composable sketches (top)
    and non-composable \Martingale{} sketches (bottom) in terms of their
    limiting memory-variance product ($\MVP$).  Logarithms are base 2.}
    \label{tab:results}
\end{table}

\subsection{Prior Work: Non-Mergeable Sketches}

Chen, Cao, Shepp, and Nguyen's \textsf{S-Bitmap}~\cite{ChenCSN11} 
consists of a bit string $S\in \{0,1\}^m$
and $m$ known constants 
$0 \leq \tau_0 < \tau_1 < \cdots < \tau_{m-1} < 1$.
It interprets $h(a) = (j,\rho) \in [m]\times [0,1]$ as an index $j$ 
and real $\rho$ and when processing $a$, sets $S(j)\leftarrow 1$
iff $\rho > \tau_{\operatorname{HammingWeight}(S)}$.
One may confirm that $S$ is insensitive to duplicates in the
stream $\mathcal{A}$, but its state depends on the \emph{order}
in which $\mathcal{A}$ is scanned.  By setting the $\tau$-thresholds
and estimator properly, the standard error is 
$\approx \ln(eU/m)/(2\sqrt{m})$ and $\MVP = O(\log^2(U/m))$.

\textsf{Recordinality}~\cite{HelmiLMV12} is based on \MinCount;
it stores $(S,\operatorname{cnt})$, where $S$ is the $m$ smallest hash values encountered and $\operatorname{cnt}$ is the \emph{number of times} that $S$ has changed.  The estimator looks only at $\operatorname{cnt}$, not $S$, and has standard error
$\approx \sqrt{\ln(\lambda/em)/m}$
and $\MVP{} = O(\log(\lambda/m)\log U)$.

Cohen~\cite{Cohen15} and Ting~\cite{Ting14} independently 
described how to turn any sketch into a non-mergeable sketch 
using what we call the \Martingale{} transform.  
Let $S_i$ be the state of the original sketch after seeing
$(a_1,\ldots,a_i)$ and $P_{i+1} = \Pr(S_{i+1}\neq S_i \mid S_i, a_{i+1}\not\in\{a_1,\ldots,a_i\})$ be the probability
that it changes state upon seeing a \emph{new} element $a_{i+1}$.\footnote{These probabilities are over the choice of $h(a_{i+1})$, which,
in the \textsc{random oracle model}, is independent of all
other hash values.}  The state of the \Martingale{} sketch
is $(S_i,\hat{\lambda}_i)$.  Upon processing $a_{i+1}$ it becomes
$(S_{i+1},\hat{\lambda}_{i+1})$, where
\[
\hat{\lambda}_{i+1} = \hat{\lambda}_i + P_{i+1}^{-1}\cdot \Indicator{S_{i+1}\neq S_i}.
\]
Here $\Indicator{\mathcal{E}}$ is the indicator variable
for the event $\mathcal{E}$.  We assume 
the original sketch is insensitive to duplicates, 
so
\[
\E(\hat{\lambda}_{i+1}) =
\left\{\begin{array}{l@{\hspace*{1cm}}l}
\hat{\lambda}_i     & \mbox{when $a_{i+1}\in \{a_1,\ldots,a_i\}$ (and hence $S_{i+1}=S_i$)}\\
\hat{\lambda}_i+1   & \mbox{when $a_{i+1}\not\in \{a_1,\ldots,a_i\}$.}
\end{array}\right.
\]
Thus, with $\hat{\lambda}_0 = \lambda_0 = 0$,
$\hat{\lambda}_i$ is an unbiased estimator of the true cardinality
$\lambda_i=|\{a_1,\ldots,a_i\}|$ and
$(\hat{\lambda}_i - \lambda_i)_i$ is a \emph{martingale}.  The \Martingale-transformed sketch requires the same space, plus just $\log U$ bits to store the estimate $\hat{\lambda}$.

Cohen and Ting~\cite{Cohen15,Ting14} both proved that
\Martingale{} \MinCount{} has standard error $1/(2\sqrt{m})$
and $\MVP = (\log U)/2$.  They gave different estimates 
for the standard error of \Martingale{} \LogLog.
Ting's estimate is quite accurate, and 
tends to $\sqrt{\ln 2/m}$ as $m\rightarrow \infty$,
giving it an $\MVP = \ln 2\log\log U \approx 0.69\log\log U$.

\subsection{The Dartboard Model}

The~\emph{\textbf{dartboard model}}~\cite{PettieW21} 
is useful for describing cardinality sketches with a single,
uniform language.  The dartboard model is essentially the same as 
Ting's~\cite{Ting14} \emph{area cutting} process, but with
a specific, discrete cell partition and state space fixed in advance.

The \emph{dartboard} is the unit square $[0,1]^2$, partitioned 
into a set  $\mathscr{C}=\{c_0,\ldots,c_{|\mathscr{C}|-1}\}$ of \emph{cells}
of various sizes.
Every cell may be either \emph{occupied} 
or \emph{unoccupied};
the \emph{state} is the set of occupied cells
and the state space some $\mathscr{S}\subseteq 2^{\mathscr{C}}$.

We process a stream of elements 
one by one; when a \emph{new} element is encountered we throw a \emph{dart} uniformly at random 
at the dartboard and update the state in response.  
The relationship between the state and the dart distribution satisfies two rules:
\begin{description}
\item[(R1)] Every cell with at least one dart is occupied; occupied cells may contain no darts.

\item[(R2)] If a dart lands in an occupied cell, the state does not change.
\end{description}

As a consequence of (R1) and (R2), if a dart lands in an empty cell 
the state \emph{must} change, 
and occupied cells may never become unoccupied.  
Dart throwing is merely an intuitive way of visualizing the hash 
function.  
Base-$q$ \PCSA{} and \LogLog{} use the same 
cell partition 
but with different state spaces; 
see Figure~\ref{fig:PCSA-LogLog-dartboard}.

\begin{figure}
    \centering
    \begin{tabular}{c@{\hspace{.4cm}}c@{\hspace{.4cm}}c}
    \scalebox{.25}{\includegraphics{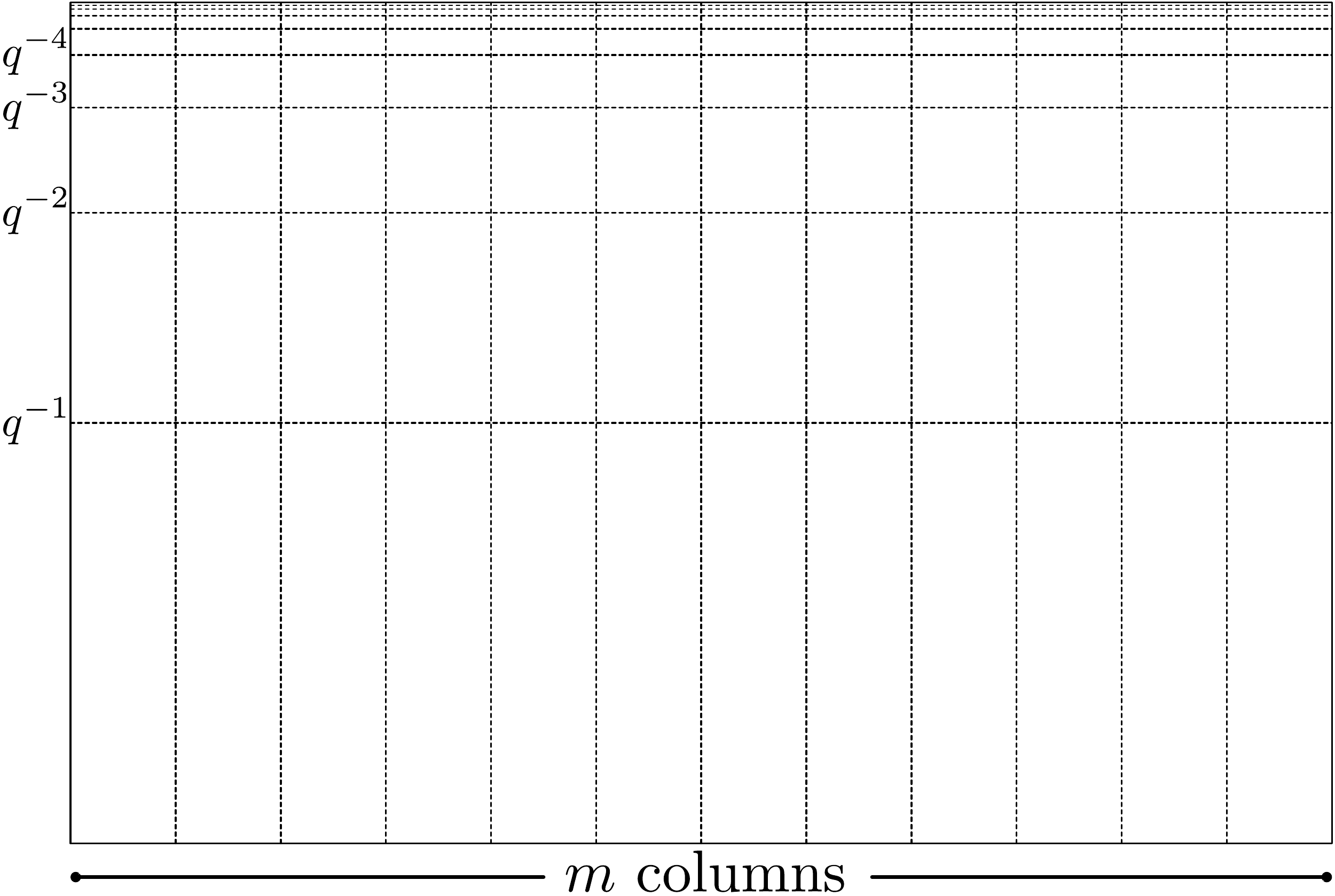}}
&
    \scalebox{.25}{\includegraphics{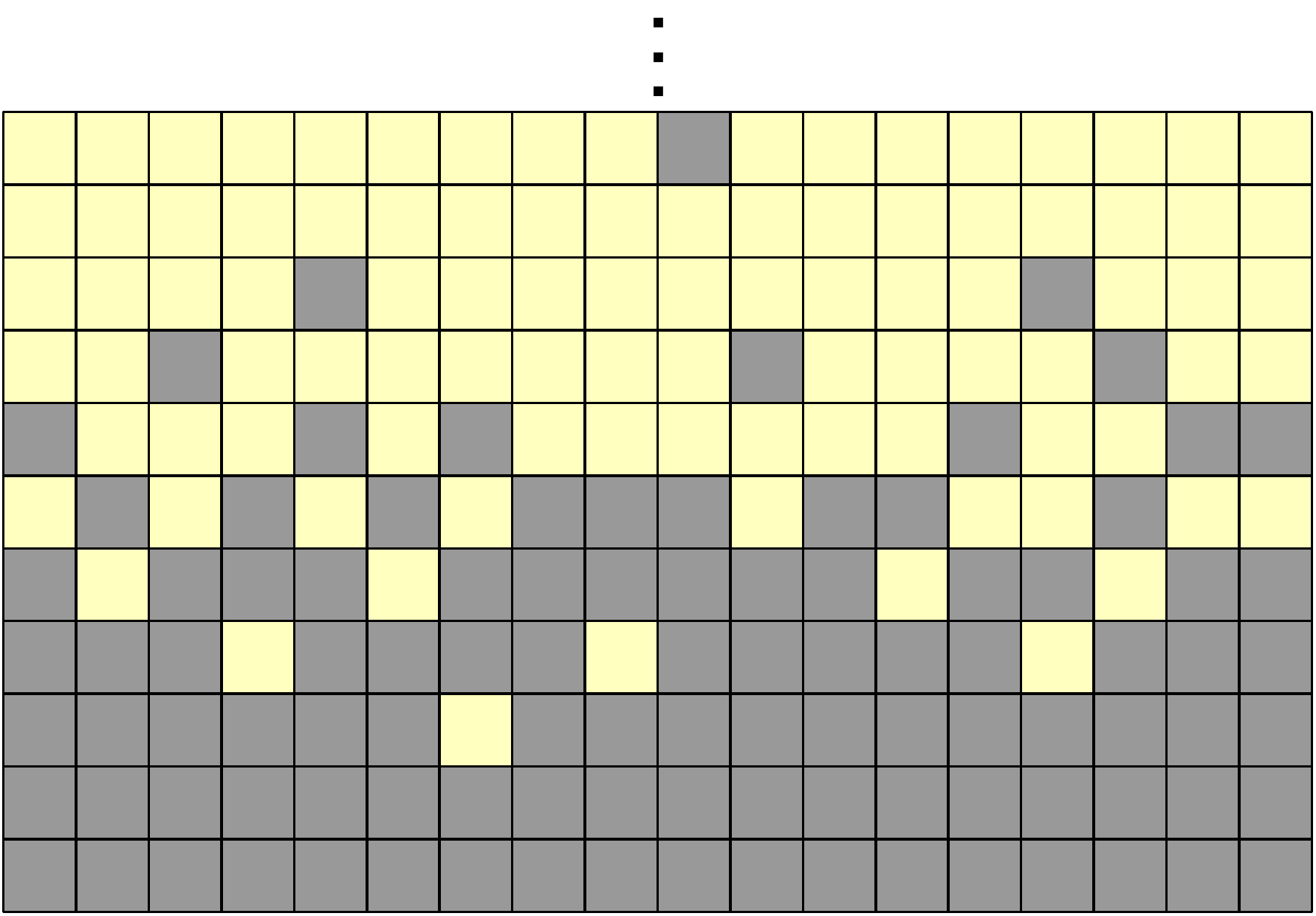}}  
    & \scalebox{.25}{\includegraphics{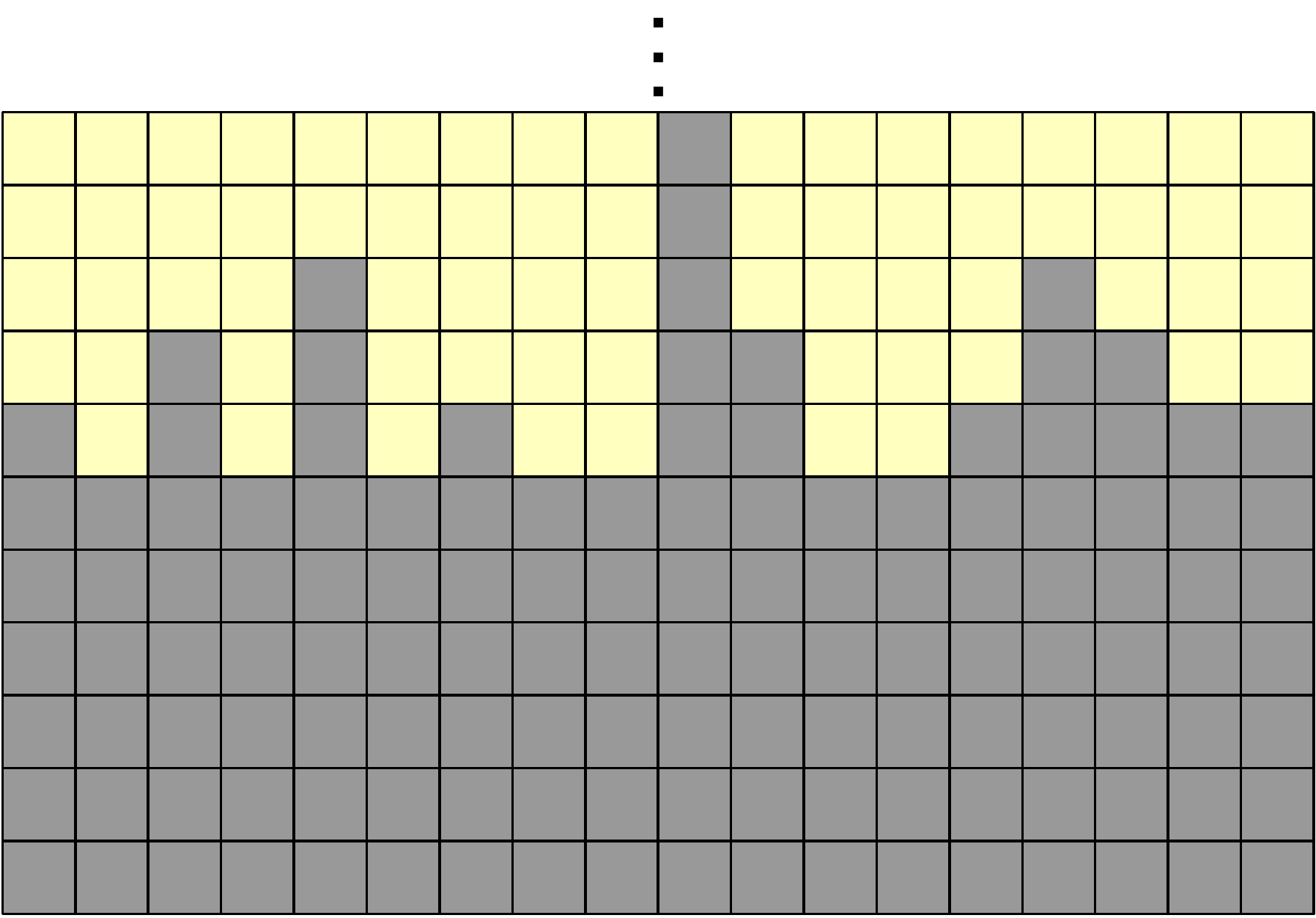}}\\
    &\\
    (a) & (b) & (c)
    \end{tabular}
    \caption{The unit square is partitioned into $m$ columns.  Each column
    is partitioned into cells.  Cell $j$ covers the vertical interval $[q^{-(j+1)},q^{-j})$.
    (b) The state of a $\PCSA$ sketch records precisely which cells
    contain a dart (gray); all others are empty (yellow).  (c) The state
    of the corresponding \textsf{LogLog} sketch.}
    \label{fig:PCSA-LogLog-dartboard}
\end{figure}

It was observed~\cite{PettieW21} that the dartboard model includes
all mergeable sketches, and some non-mergeable ones like \textsf{S-Bitmap}.
\textsf{Recordinality} and the \Martingale{} sketches obey rules (R1),(R2) but are not strictly dartboard sketches as they maintain some small state 
information ($\operatorname{cnt}$ or $\hat{\lambda}$) outside of 
the set of occupied cells.
Nonetheless, it is useful to speak of the \emph{dartboard part} of their state information.

\subsection{Linearizable Sketches}

The lower bound of~\cite{PettieW21} applies to \emph{linearizable} sketches,
a subset of mergeable sketches.  A sketch is called linearizable if it is 
possible to encode the occupied/unoccupied status of its cells in some fixed linear order $(c_0,\ldots,c_{\mathscr{C}-1})$, so whether $c_i$ is occupied only depends on the status of $c_0,\ldots,c_{i-1}$ and whether $c_i$ has been hit by a dart. (Thus, it is independent of $c_{i+1},\ldots,c_{\mathscr{C}-1}$.)  Specifically, let $Y_i,Z_i$ be the indicators for whether $c_i$ is occupied, and has been hit by a dart, respectively, and $\mathbf{Y}_i = (Y_0,\ldots,Y_i)$.  The state of the sketch is $\mathbf{Y}_{\mathscr{C}-1}$; it is called linearizable if there is some monotone function $\phi : \{0,1\}^*\rightarrow \{0,1\}$ such that 
\[
Y_i = Z_i \vee \phi(\mathbf{Y}_{i-1}).
\]
I.e., if $\phi(\mathbf{Y}_{i-1})=1$, $c_i$ is forced to be occupied and the state is forever independent of $Z_i$. 

\PCSA-type sketches~\cite{FlajoletM85,EstanVF06} are linearizable,
as are (\textsf{Hyper})\LogLog~\cite{FlajoletFGM07,DurandF03}, and all \MinCount, \textsf{MinHash}, and \textsf{Bottom}-$m$ type sketches~\cite{Cohen97,Broder97,Giroire09,ChassaingG06,Lumbroso10}.
It is very easy to engineer non-linearizable sketches; see~\cite{PettieW21}.
The open problem is whether this is ever a \emph{good idea} in 
terms of memory-variance performance.

\subsection{Organization}

In Section~\ref{sect:curtain} we 
introduce the \Curtain{} sketch, which
is a linearizable (hence mergeable) sketch in the dartboard model.
In Section~\ref{sect:mtg_est} we prove some general
theorems on the bias and asymptotic 
relative variance of \Martingale-type sketches,
and in Section~\ref{sect:analysis} we apply this framework
to bound the limiting $\MVP$
of \Martingale{} \PCSA, \Martingale{} \fishmonger, and \Martingale{} \Curtain.

In Section~\ref{sect:optimality} we prove some 
results on the optimality
of the \Martingale{} transform itself,
and that \Martingale{} \fishmonger{}
has the lowest variance among those based 
on linearizable sketches.

Section~\ref{sect:experiments} presents some experimental
findings that demonstrate that the
conclusions drawn from the asymptotic 
analysis of \Martingale{} sketches are extremely accurate in the pre-asymptotic 
regime as well, and that \Martingale{} \Curtain{} has lower variance than
\Martingale{} \LogLog.

\section{The Curtain Sketch}\label{sect:curtain}

\paragraph{Design Philosophy.} Our goal is to strike a nice balance between the simplicity and time-efficiency of (\textsf{Hyper})\LogLog, and the superior information-theoretic efficiency of \PCSA, 
which can only be fully realized 
under extreme (and time-inefficient) compression to its entropy bound~\cite{PettieW21,Lang17}.
Informally, if we are dedicating at least 1 bit to encode the status of a cell, the \emph{best} cells to encode have mass $\Theta(\lambda^{-1})$ and we should design a sketch that maximizes the number of such cells encoded.
\medskip

We assume the dartboard is partitioned into $m$ columns; define
$\Cell(j,i)$ to be the cell in column $i$ covering the vertical interval
$[q^{-(j+1)},q^{-j})$.
In a $\PCSA$ sketch, the occupied cells are precisely those with at least
one dart.  In \textsf{LogLog}, the occupied cells in each column
are contiguous, extending to the highest cell containing a dart.
In Figure~\ref{fig:PCSA-LogLog-dartboard}, cells are drawn with 
uniform sizes for clarity.

Consider the vector $v = (g_0,g_1,\ldots,g_{m-1})$
where $\Cell(g_i,i)$ is the highest occupied cell
in \textsf{LogLog}/$\PCSA$.
The \emph{curtain} of $v$ w.r.t.~allowable 
offsets $\mathscr{O}$ is a vector 
$v_{\operatorname{curt}} = (\hat{g}_0,\hat{g}_1,\ldots,\hat{g}_{m-1})$
such that
(i) $\forall i\in [1,m-1].\, \hat{g}_i - \hat{g}_{i-1}\in \mathscr{O}$,
and
(ii) $v_{\operatorname{curt}}$ is the minimal such vector
dominating $v$, i.e., $\forall i.\, \hat{g}_i \geq g_i$.
Although we have described $v_{\operatorname{curt}}$ as a 
function of $v$, it is clearly possible to maintain $v_{\operatorname{curt}}$
as darts are thrown, without knowing $v$.

We have an interest in $|\mathscr{O}|$ being a power
of 2 so that curtain vectors may be encoded efficiently,
as a series of offsets.
On the other hand, it is most efficient if 
$\mathscr{O}$ is symmetric around zero.  
For these reasons,
we use a base-$q$ ``sawtooth'' cell partition of the dartboard;
see Figure~\ref{fig:Curtain-example}.  
Henceforth $\Cell(j,i)$ is defined as usual, except
$j$ is an integer when $i$ is even and a half-integer 
when $i$ is odd.  Then the allowable offsets are 
$\mathscr{O}_a = \{-(a-1/2), -(a-3/2),\ldots, -1/2, 1/2, \ldots, a-3/2,a-1/2\}$, for some $a$ that is a power of 2.

\begin{figure}
    \centering
    \begin{tabular}{c@{\hspace{0.4cm}}c@{\hspace{0.4cm}}c}
    \scalebox{.20}{\includegraphics{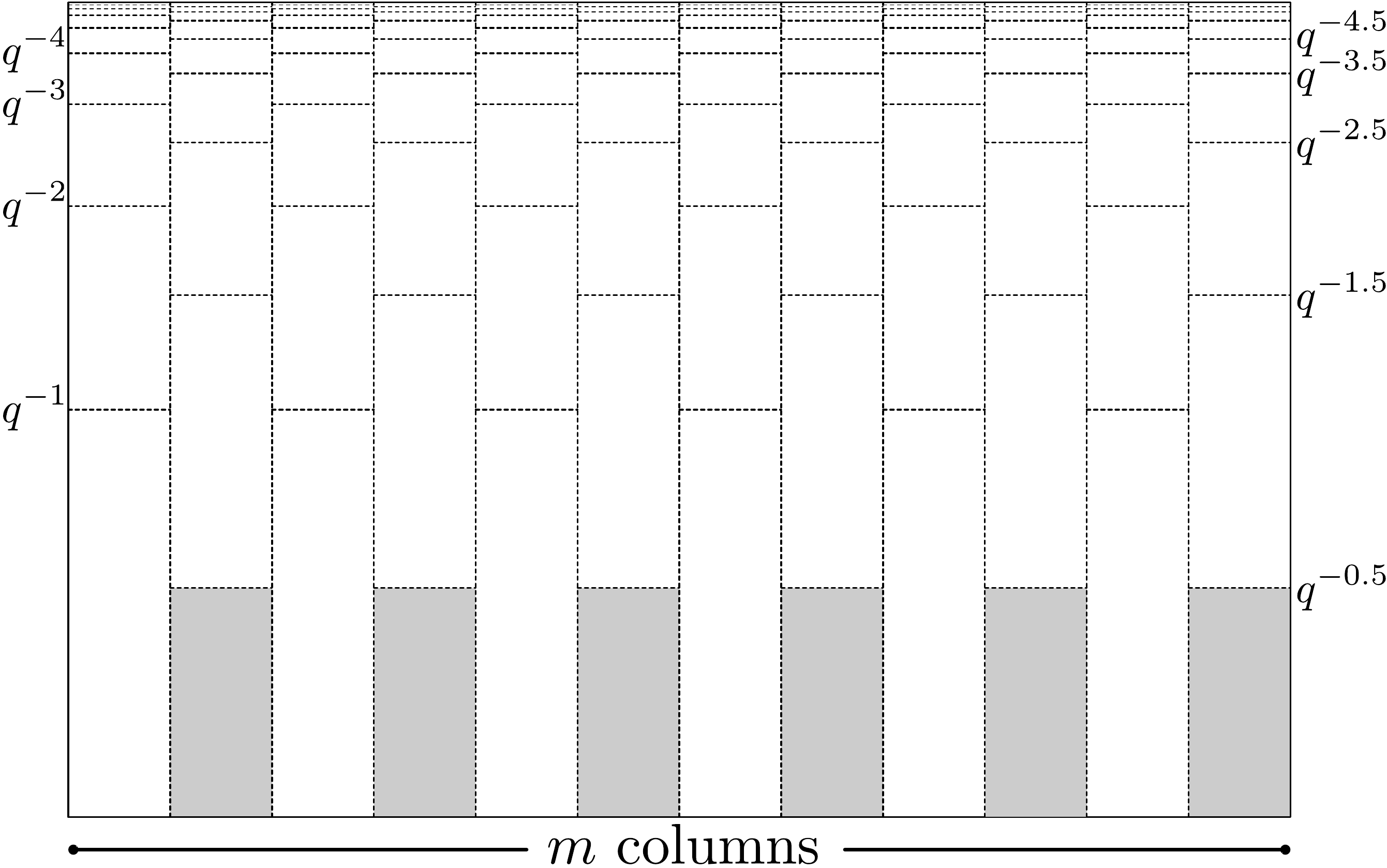}}
    &
    \scalebox{.25}{\includegraphics{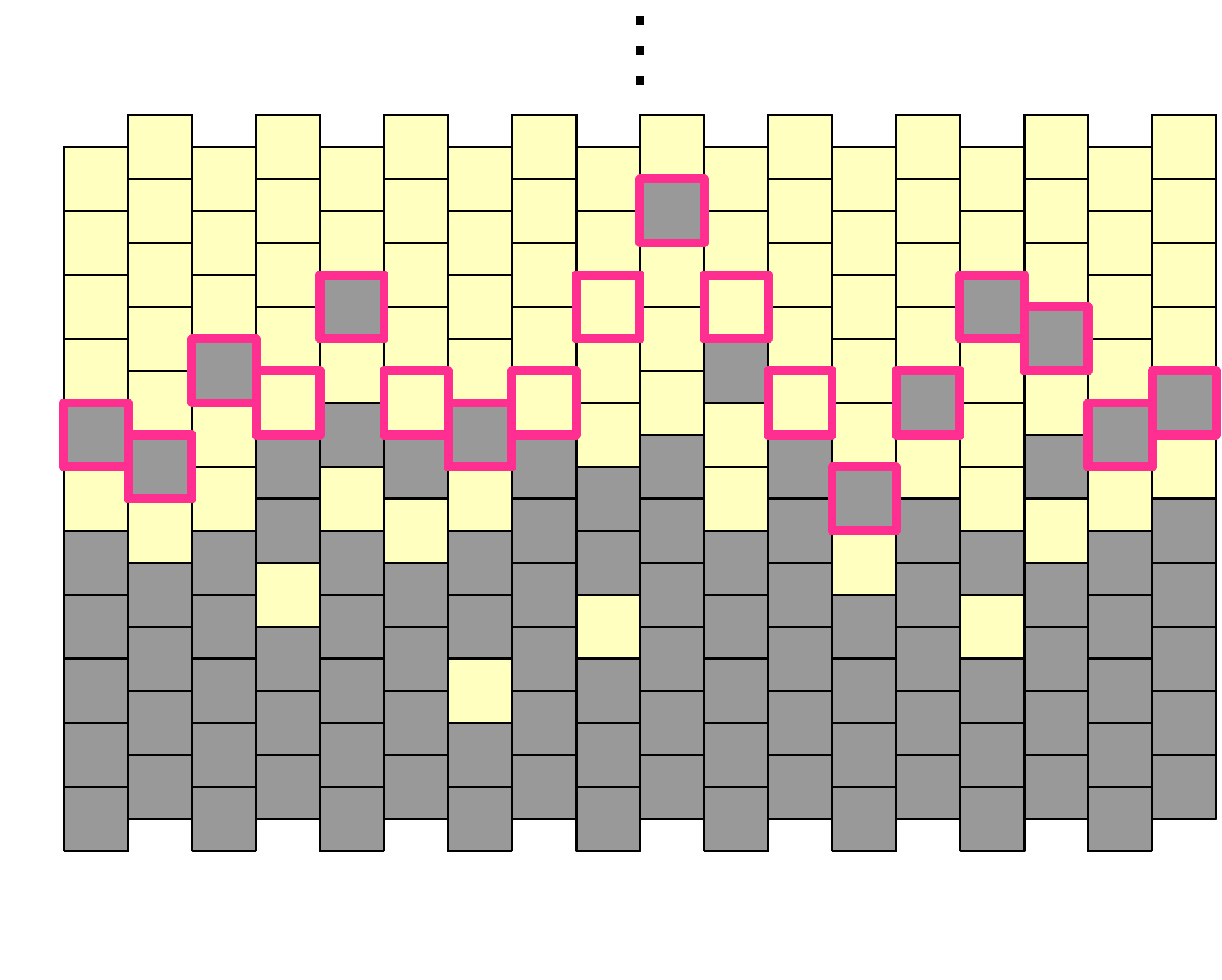}}
    &
    \scalebox{.25}{\includegraphics{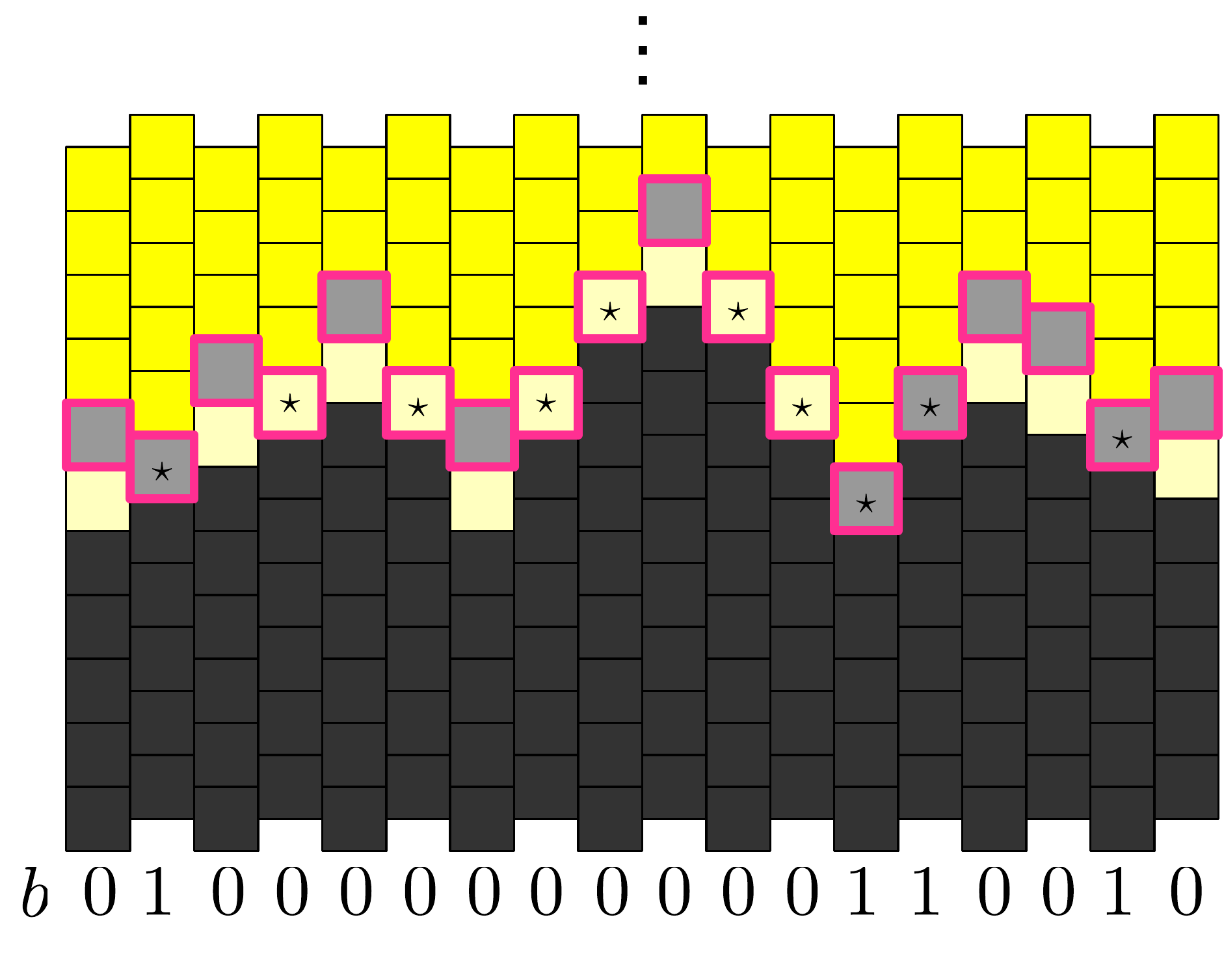}}\\
    (a) & (b) & (c)\\
    \end{tabular}
    \caption{(a) The base-$q$ ``sawtooth'' cell partition.
    (b) and (c) depict a \Curtain{} sketch w.r.t. $\mathscr{O}=\{-3/2,-1/2,1/2,3/2\}$ and $h=1$.
    (b) Gray cells contain at least one dart;
    light yellow cells contain none.  
    The curtain $v_{\operatorname{curt}} = (\hat{g}_i)$ 
    is highlighted with a pink boundary.
    (c) Columns that are in \emph{tension}
    have a $\star$ in their curtain cell.
    All \emph{dark} gray cells are occupied and 
    all \emph{dark} yellow cells are free according to Rule 3.
    All other cells are occupied/free (light gray, light yellow) according to 
    Rules 1 and 2.}
    \label{fig:Curtain-example}
\end{figure}

Let $\Cell(g_i,i)$ is the highest cell containing
a dart in column $i$ in the \emph{sawtooth} cell partition
and $v_{\operatorname{curt}} = (\hat{g}_i)$ be the curtain vector of
$v=(g_i)$ w.r.t.~offsets $\mathscr{O} = \mathscr{O}_a$.
We say column $i$ is \emph{in tension} if $(\cdots,\hat{g}_{i-1},\hat{g}_i-1,\hat{g}_{i+1},\cdots)$
is not a valid curtain, i.e., if $\hat{g}_{i} - \hat{g}_{i-1} = \min(\mathscr{O})$ 
or $\hat{g}_{i+1} - \hat{g}_i = \max(\mathscr{O})$.
In particular, if column $i$ is \emph{not} in tension, 
then $\Cell(\hat{g}_i,i)$ must contain at least one dart,
for if it contained no darts the curtain would be dropped 
to $\hat{g}_i-1$ at column $i$.
However, if column $i$ \emph{is} in tension, 
then $\Cell(\hat{g}_i,i)$ might not contain a dart.

\medskip

The \Curtain{} sketch encodes 
$v_{\operatorname{curt}} = (\hat{g}_i)$ w.r.t.~the base-$q$
sawtooth cell partition and offsets $\mathscr{O}_a$,
and a bit-array $b = \{0,1\}^{h\times m}$.
This sketch designates each cell 
\emph{occupied} or \emph{free} as follows.
\begin{description}
    \item[Rule 1.] If column $i$ is not in tension then
    $\Cell(\hat{g}_i,i)$ is occupied, and $b(\cdot,i)$ encodes
    the status of the $h$ cells below the curtain, i.e.,
    $\Cell(\hat{g}_i - (j+1),i)$ is occupied iff $b(j,i)=1$, 
    $j\in \{0,\ldots,h-1\}$.
    \item[Rule 2.] If column $i$ is in tension, then 
    $\Cell(\hat{g}_i - j, i)$ is occupied iff $b(j,i)=1$,
    $j\in \{0,\ldots,h-1\}$.
    \item[Rule 3.] Every cell above the curtain is free 
    ($\Cell(\hat{g}_i+j,i)$, when $j\ge 1$)
    and all remaining cells are occupied.
\end{description}

Figure~\ref{fig:Curtain-example} gives an example of a \Curtain{}
sketch, with $\mathscr{O} = \{-3/2,-1/2,1/2,3/2\}$
and $h=1$.  (The base $q$ of the cell partition is unspecified
in this example.)

\begin{theorem}\label{thm:martingale-curtain}
Consider the \Martingale{} \Curtain{} sketch with parameters $q,a,h$
(base $q$, $\mathscr{O}_a = \{-(a-1/2),\ldots,a-1/2\}$, and $b\in \{0,1\}^{h\times m}$),
and let $\hat{\lambda}$ be its estimate of the true cardinality $\lambda$.
\begin{enumerate}
    \item $\hat{\lambda}$ is an unbiased estimate of $\lambda$.
    \item  The relative variance of $\hat{\lambda}$ is:
    \[
    \frac{1}{\lambda^2}\Var(\hat{\lambda} \mid \lambda) 
    = \frac{(1+o_{\lambda/m}(1)+o_m(1))q\ln q}{2m(q-1)}\left(\frac{q-1}{q}+\frac{2}{q^{h}(q^{a-1/2}-1)}+\frac{1}{q^{h+1}}\right),
    \]
    As a result, the limiting $\MVP$ of \Martingale{} \Curtain{} is 
    \[\MVP = (\log_2(2a)+h)\times \frac{q\ln q}{2(q-1)} \left(\frac{q-1}{q}+\frac{2}{q^{h}(q^{a-1/2}-1)}+\frac{1}{q^{h+1}}\right).\]
  \end{enumerate}
\end{theorem}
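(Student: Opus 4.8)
The plan is to reduce both parts to the general theory of \Martingale{} sketches in Section~\ref{sect:mtg_est}, after which part~2 comes down to evaluating a single constant: the limiting rescaled expected free area of a \Curtain{} state. For part~1, I would first check that \Curtain{} is a genuine dartboard-type sketch — its dartboard part (the curtain vector together with the bit-array $b$) obeys rules (R1)--(R2), since every cell hit by a dart is declared occupied and occupied cells stay occupied, as one reads off Rules~1--3, and the state depends only on the \emph{set} of darts thrown, hence is insensitive to duplicates. Given this, part~1 is immediate from the general bias theorem of Section~\ref{sect:mtg_est}: $(\hat\lambda_i-\lambda_i)_i$ is a mean-zero martingale, so $\E[\hat\lambda\mid\lambda]=\lambda$.

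For part~2, let $A_i$ be the free area (total mass of unoccupied cells) after $i$ distinct elements. A new element changes the \Curtain{} state exactly when its dart lands in a free cell, i.e.\ with probability $P_{i+1}=A_i$, so by orthogonality of martingale increments $\Var(\hat\lambda\mid\lambda)=\sum_{i=0}^{\lambda-1}(\E[A_i^{-1}]-1)$, the general variance formula of Section~\ref{sect:mtg_est}. Since the \Curtain{} state decomposes over the $m$ columns with only nearest-column coupling through the curtain constraint, $A_i$ is a sum of $\Theta(m)$ weakly dependent bounded terms and concentrates, whence $\E[A_i^{-1}]=(1+o_m(1))\E[A_i]^{-1}$; and the self-similarity of the sawtooth partition forces $\E[A_i]=(1+o_{i/m}(1))\,\bar A\,m/i$ for $i\gg m$, where $\bar A:=\lim_{\lambda/m\to\infty}\tfrac{\lambda}{m}\E[A_\lambda]$. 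Summing over $i$ yields $\Var(\hat\lambda\mid\lambda)=(1+o(1))\,\lambda^2/(2m\bar A)$, so the relative-variance claim is equivalent to $\bar A=\tfrac{q-1}{q\ln q}\cdot R^{-1}$, with $R$ the parenthesized factor in the statement; and since \Curtain{} stores $m-1$ offsets from $\mathscr O_a$ at $\log_2(2a)$ bits each, an $h\times m$ bit-array, and only $O(\log U)$ further bits (the curtain's base level and the \Martingale{} estimate), its memory is $(1+o_m(1))\,m(\log_2(2a)+h)$, so the $\MVP$ claim follows from the relative-variance claim by multiplication.

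It remains to compute $\bar A=\lim\lambda\cdot\E[\text{free mass of one non-boundary column}]$. Fix a column $j$: after $\lambda$ distinct elements its level-$k$ cell carries Poisson$(\mu q^{-k})$ darts with $\mu=\lambda(q-1)/(mq)$, its raw highest occupied level $g_j$ sits near $\log_q\mu$, and $w_j:=\mu q^{-\hat g_j}$ is an $O(1)$ random variable measuring the offset of the curtain level $\hat g_j\ge g_j$ from the waterline. I would split the free mass of column $j$ into (i) the mass strictly above the curtain, rescaling to $w_j/(q-1)$; (ii) for the $h$ bit-array cells, each of which is free iff it received no dart, the terms $w_jq^{t}e^{-w_jq^{t}}$ summed over the window of depths $t$ that $b$ covers — a window shifted by one according to whether column $j$ is in tension (Rule~1 vs.\ Rule~2); and (iii) the curtain cell, which contributes an analogous term only when the column is in tension and its bit reads ``free''. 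Integrating these against the joint limiting law of $(w_j,\mathbf 1[\text{column }j\text{ in tension}])$ and simplifying should produce the three summands of $R$. Two checks anchor this: in the degenerate regime $h=0$ with $\mathscr O_a$ so wide that tension never occurs, $w_j$ has density $\propto(e^{-s/(q-1)}-e^{-qs/(q-1)})/s$ on $(0,\infty)$, giving $\E[w_j/(q-1)]=(q-1)/(q\ln q)$ and hence $R=1$; and letting $h\to\infty$ drives $R\to(q-1)/q$, i.e.\ $\bar A\to1/\ln q$, the rescaled free area of an ideal uncompressed \PCSA{} on this partition.

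The main obstacle is identifying, in step~(iii), the joint limiting law of $(w_j,\text{tension})$. The curtain $\hat g=(\hat g_i)$ is the \emph{minimal} vector dominating the raw profile $(g_i)$ subject to consecutive differences lying in $\mathscr O_a$ — a two-sidedly constrained running-maximum (``sandpile''-like) process — and one must pin down its stationary behaviour: how much $\mathscr O_a$ pushes the curtain above the per-column waterline, the asymptotic fraction of columns in tension, and the conditional law of $w_j$ given tension, all consistently with the one-level shift of the bit window. For wide $\mathscr O_a$ this degenerates ($\hat g=g$, tension asymptotically absent), but for small $a$ — e.g.\ $a=1$, where $\hat g$ must zig-zag by $\pm\tfrac12$ and most columns are in tension — it is a genuine constrained-path computation; controlling the residual oscillation in $\log_q(\lambda/m)$, which is absorbed into the $o_{\lambda/m}(1)$ term via an averaging/Poissonization argument, is a secondary nuisance.
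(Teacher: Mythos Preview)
Your overall architecture matches the paper exactly: unbiasedness from the general martingale theorem, variance from $\sum(\E[P_k^{-1}]-1)$, concentration and scale-invariance reducing everything to a single constant $\bar A$ (the paper's $\kappa$), and the $\MVP$ by multiplying by $\log_2(2a)+h$. Your sanity checks at $h=0,\ a\to\infty$ and $h\to\infty$ are correct.

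Where you diverge is in computing $\bar A$. You adopt a \emph{column-centric} view: fix column $j$, determine the joint limiting law of the rescaled curtain offset $w_j$ and the tension indicator, then integrate the free mass of that column against it. You rightly flag this as the crux --- the curtain $\hat g$ is a two-sided constrained running maximum, and its stationary law (and the conditional law of $w_j$ given tension) would indeed require a nontrivial analysis of that ``sandpile'' process.

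The paper sidesteps this entirely with a \emph{cell-centric} view. Rather than asking what a column looks like, it asks, for a fixed cell at height $t$, what is $\Pr(\text{cell }t\text{ is free})$. Rules~1--3 yield the characterisation: cell $t$ is free iff (a) it contains no dart, (b) the tension point $\tau$ of its column satisfies $\tau\le t+(h-1)$, and (c) the highest dart $g$ in its column satisfies $g\le t+h$. These three events are \emph{independent}: (a) concerns darts in cell $t$ alone; (b) depends only on \emph{other} columns; (c) depends only on cells at heights $\ge t+h+1$ in the same column. Each factor is an elementary exponential, giving
\[
\Pr(Z_{t,\lambda}=1)=\exp\!\left(-\frac{\lambda}{q^t}\Big[\tfrac{q-1}{q}+\tfrac{2}{q^h(q^{a-1/2}-1)}+\tfrac{1}{q^{h+1}}\Big]\right),
\]
and integrating $\lambda(q^{-t}-q^{-t-1})\Pr(Z_{t,\lambda}=1)$ over $t\in\mathbb{R}$ yields $\kappa$ in one line. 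The joint law of $(w_j,\text{tension})$ never enters.

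So your plan is not wrong, but the obstacle you isolate is self-inflicted. Switching the order of integration from columns to cells, and recognising that $\{Z_{t,\lambda}=1\}$ factors as a product of three independent events, turns the hard step into a triviality; in particular you never need the stationary measure of the constrained curtain process.
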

\begin{proof}
Follows from Theorems \ref{thm:kappa} and \ref{thm:curtain-kappa}.
\end{proof}

Here $o_{\lambda/m}(1)$ and $o_m(1)$ are terms that go to zero as $m$
and $\lambda/m$ get large.
Recall that for practical reasons we want to parameterize
Theorem~\ref{thm:martingale-curtain} with $a$ a power of 2
and $h$ an integer, but it is realistic to set $q>1$ to 
be any real.  Given these constraints, the optimal setting
is $q=2.91$, $a=2$, and $h=1$, exactly as in the example in Figure~\ref{fig:Curtain-example}.
This uses $\log\log U + 3(m-1)$ bits to store the sketch proper, 
$\log U$ bits\footnote{It is fine to store an approximation $\tilde{\lambda}$ of $\hat{\lambda}$
with $O(\log m)$ bits of precision.}
to store $\hat{\lambda}$, 
and achieves a limiting $\MVP \approx 2.31$.
In other words, to achieve a standard error $1/\sqrt{b}$, 
we need about $2.31b$ bits.

\paragraph{Implementation Considerations.}
We encode a curtain $(\hat{g}_0,\hat{g}_1,\ldots,\hat{g}_{m-1})$ as 
$\hat{g}_0$ and 
an offset vector $(o_1,o_2,\ldots,o_{m-1})$, $o_i=\hat{g}_i-\hat{g}_{i-1}$,
where $\hat{g}_0$ takes $\log_2\log_q U \leq 6$ 
bits and $o_i$ takes $\log_2|\mathscr{O}|=\log_2(2a)$ bits.
Clearly, to evaluate $\hat{g}_i$ we need to compute the prefix 
sum $\hat{g}_0 + \sum_{i'\leq i} o_{i'}$.

\begin{lemma}\label{lem:packed-word}
Let $(x_0,\ldots,x_{\ell-1})$ be a vector of $t$-bit unsigned integers
packed into $\ceil{t\ell/w}$ words, where each word has
$w=\Omega(\log(t\ell))$ bits.
The prefix sum $\sum_{j\in [0,i]} x_j$ can be evaluated 
in $O(t\ell/w + \log w)$ time.
\end{lemma}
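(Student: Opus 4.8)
The plan is to treat this as a pure word-RAM exercise. I work in the standard word-RAM with $w$-bit words supporting $O(1)$-time arithmetic (including multiplication), bitwise operations, and shifts; the hypothesis $w=\Omega(\log(t\ell))$ guarantees that a field index, a word index, and any partial bit-count fit in $O(1)$ words, and that the prefix-sum value itself, which is at most $\ell(2^t-1)<2^{\,t+\log\ell}$ and hence $t+O(w)$ bits, occupies $O(t/w+1)=O(N)$ words, where $N:=\ceil{t\ell/w}$. The entire additive $O(\log w)$ term will be spent once, up front, building a handful of masks.

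First I reduce the task to: sum the $t$-bit fields $x_0,\ldots,x_i$. These occupy bit positions $[0,(i+1)t)$ of the packed array, i.e.\ the first $\ceil{(i+1)t/w}\le N$ words. If $t\ge w$ this is already trivial: each field spans $\ceil{t/w}$ consecutive words, so the $\le\ell$ fields can be summed by straightforward multi-word addition in $O(\ell\ceil{t/w})=O(t\ell/w)=O(N)$ time, with no $\log w$ term. So assume henceforth $t<w$. Now a single word contains $k:=\floor{w/t}$ fields, so we must process $\Theta(k)$ fields per word if the running time is to be $O(N)$ rather than $O(\ell)=O(Nk)$. I first re-pack the array so that no field straddles a word boundary, wasting the top $w\bmod t<t$ bits of each word; since this merely moves aligned $w'$-bit chunks ($w':=tk\le w$), each chunk touching $\le 2$ source words, it costs $O(N)$ time and at most doubles the word count. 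Now each of the first $\ceil{(i+1)/k}\le N$ words holds $k$ complete $t$-bit fields, and I mask off the fields of index $>i$ in the last of these words.

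The core step is: for each such word, compute the sum $\sigma_s<k\,2^t\le 2^w$ of its $k$ field-values in $O(1)$ time, using word-level parallelism — a constant number of shift-mask-add ``folding'' rounds (each round halves the number of fields and doubles their width while preserving the total, until the fields are wide enough that no partial sum can overflow a field), followed by a single multiplication by a precomputed all-ones field pattern, whose high-order bits read off the desired total. The masks used here — the folding masks $0101\cdots$, $00110011\cdots$, \ldots and the all-ones pattern $0\cdots010\cdots01$ — are each obtained from the previous one in $O(1)$ time, so all of them are built once in $O(\log w)$ total. Finally I add the $\le N$ per-word totals $\sigma_s$ into the $O(N)$-word accumulator by multi-word addition in $O(N)$ time. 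Summing the three costs gives $O(N)+O(N)+O(\log w)=O(t\ell/w+\log w)$.

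The main obstacle is making the core step run in \emph{genuinely} $O(1)$ time per word across the whole range $1\le t<w$. The folding scheme as stated uses $\Theta(\log(w/t))$ rounds in the extreme case $t=1$, which would cost $O(N\log w)$ overall; to get the stated bound one must either (i) observe that only $O(1)$ folding rounds are needed once the fields are already ``fat'' relative to $\log k$ — the regime $t=\Omega(\log(w/t))$ — and dispatch the remaining tiny-$t$ regime separately via a unit-cost population count or an $O(\log w)$-size precomputed lookup table, or (ii) invoke a known $O(1)$-word-operation ``sideways addition'' primitive. Checking that one such route is simultaneously correct (no silent overflow corrupts the multiplication step — this is precisely why the fields must first be widened, equivalently why $k\,2^t\le 2^w$ is used) and costs $O(1)$ per word is the one place where the bookkeeping genuinely has to be done with care; the surrounding pieces — the reduction to a field-sum, the re-packing, the tail masking, and the final multi-word accumulation — are routine.
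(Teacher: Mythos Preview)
Your route is genuinely different from the paper's, and the gap is exactly where you flag it. The paper never attempts per-word $O(1)$ sideways addition. Instead it folds \emph{globally}: one pass over all $\ceil{t\ell/w}$ words halves the number of summands (mask out odd fields, shift, add into even fields). After two such passes the $\ell/4$ survivors sit in $4t$-bit slots but hold only $(t{+}2)$-bit values, so adding word $2i{+}1$ into word $2i$ is safe and halves the word count. This yields the recurrence $T_w(\ell,t)=T_w(\ell/8,4t)+O(\ceil{t\ell/w})$, which solves to $O(t\ell/w+\log w)$: the word count shrinks geometrically, and the additive $\log w$ is simply the recursion depth once everything fits in $O(1)$ words. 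No re-packing, no case split on $t$, no multiplication trick.

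Your per-word scheme breaks down in the tiny-$t$ regime, and neither of your proposed fixes closes the gap in the standard word-RAM model. The fold-then-multiply trick needs the total $k\cdot 2^t$ to fit in one widened field, which (as you note) forces $t=\Omega(\log(w/t))$; below that threshold you need $\Omega(\log\log w)$ preliminary folds even with multiplication, so per-word cost is not $O(1)$. A unit-cost popcount is not a standard word-RAM primitive, and an ``$O(\log w)$-size'' lookup table cannot be indexed by a $w$-bit word (and a $2^{\Theta(\log w)}$-entry table takes $\mathrm{poly}(w)$ time to build, not $O(\log w)$). The paper's global scheme sidesteps the issue entirely: the $\Theta(\log w)$ folding rounds that are unavoidable when $t=1$ cost $O(N)+O(N/2)+\cdots=O(N)$ in aggregate because the word count halves every third round, rather than $O(N)$ apiece.
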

\begin{proof}
W.l.o.g.~we can assume $i=\ell-1$, so the task is to sum
the entire list.  
In $O(\ceil{(t\ell)/w})$ time we can 
halve the number of summands, 
by masking out the odd and even summands and adding these vectors
together.  After halving twice in this way, we have a vector of 
$\ell/4$ $(t+2)$-bit integers, each allocated $4t$ bits.
At this point we can halve the number of words by adding
the $(2i+1)$th word to the $2i$th word.
Thus, if $T_w(\ell,t)$ is the time needed to solve this problem,
$T_w(\ell,t) = T_w(\ell/8,4t) + O(\ceil{(t\ell)/w})$, 
which is $O((t\ell)/w + \log w)$.
\end{proof}

In our context $t=\log_2(2a)=2$, so even if $m$ is a medium-size
constant, say at most 256 or 512, we only have to do prefix sums over 
8 or 16 consecutive 64-bit words.  
If $m$ is much larger then it would be prudent
to partition the dartboard into $m/c$ independent curtains, 
each with $c = 256$ or 512 columns.
This keeps the update time independent of $m$ and 
increases the space overhead negligibly.

We began this section by highlighting the design philosophy,
which emphasizes conceptual simplicity and efficiency. 
Our encoding uses fixed-length codes for the offsets,
and can be decoded very efficiently by exploiting bit-wise
operations and word-level parallelism.  That said, we are 
mainly interested in analyzing 
the \emph{theoretical} performance of sketches,
and will not attempt an exhaustive experimental evaluation in this work.

\ignore{
\subsection{The \sCurtain{} Sketch}

The \sCurtain{} sketch is a variable-length sketch that
lies between \Curtain{} and \fishmonger, both in terms
of conceptual complexity and $\MVP$.
It is based on the observation that the 
positions of the \emph{second} highest cell hit 
by a dart in each column exhibits considerably
less variation than the highest.

Once again, consider a sawtooth cell partition.
Initially all cells of the form 
$\Cell(j,\cdot)$, $j\in \{0, 1/2, 1, 3/2\}$ are 
regarded as containing darts.
Define $v = (g_i)$ to be such that $\Cell(g_i,i)$
is the \emph{second} highest cell in column $i$
containing a dart, and let $v_{\operatorname{curt}} = (\hat{g}_i)$
be the curtain of $v$ w.r.t.~$\mathscr{O}_a = \{-(a-1/2),\ldots,a-1/2\}$.

Observe that if column $i$ is \emph{not} in tension,
then it must be the case that some $\Cell(\hat{g}_i+j,i)$
with $j\ge 1$ contains at least one dart, for otherwise
the curtain height $\hat{g}_i$ at column $i$ should be lower.
On the other hand, if column $i$ \emph{is} in tension,
then it may be that for all $j\ge 1$, 
$\Cell(\hat{g}_i+j,i)$ contains no dart.
The \sCurtain{} sketch consists of $v_{\operatorname{curt}} = (\hat{g}_i)$,
and a variable-length encoding of $\delta = (\delta_0,\ldots,\delta_{m-1})$,
where each $\delta_i$ encodes a positive integer in \emph{unary}, i.e.,
1=\texttt{1}, 2=\texttt{01}, 3=\texttt{001}, and so on.
The interpretation of the sketch is as follows.
\begin{description}
    \item[Rule 1.] If column $i$ is \emph{not} in tension
    then 
    $\Cell(j,i)$ is occupied iff $j\leq \hat{g}_i$ or $j=\hat{g}_i+\delta_i$.
    \item[Rule 2.] If column $i$ is in tension, then 
    $\Cell(j,i)$ is occupied iff $j\leq \hat{g}_i$ or $j=\hat{g}_i+(\delta_i-1)$.  I.e., $\delta_i=1$ if
    all darts in column $i$ are in cells at or below
    $\Cell(\hat{g}_i,i)$.
\end{description}

Figure~\ref{fig:SecondCurtain-example} gives an example of a \sCurtain{} sketch.

\begin{figure}
    \centering
    \begin{tabular}{c@{\hspace{1cm}}c}
    \scalebox{.3}{\includegraphics{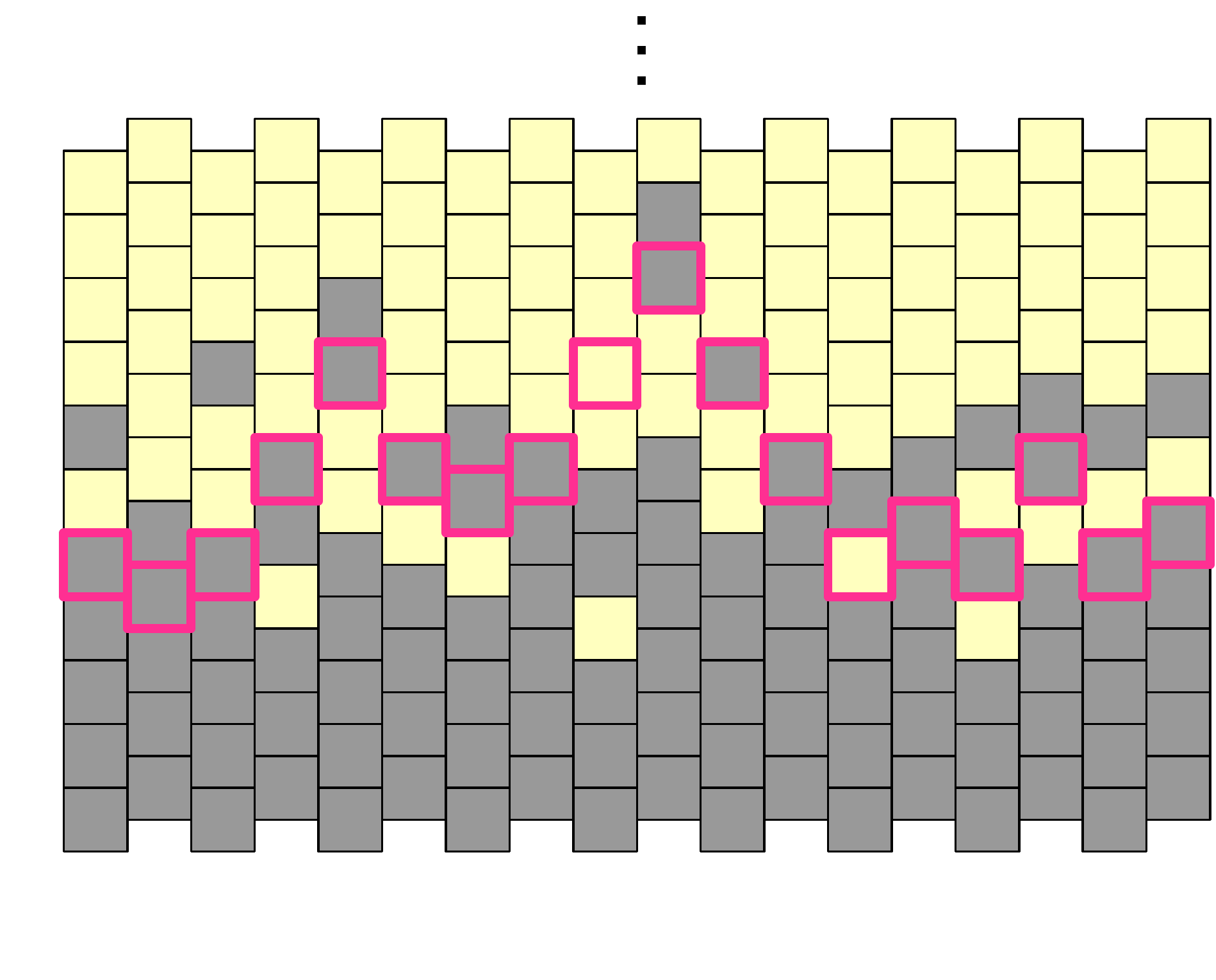}}
    & 
    \scalebox{.3}{\includegraphics{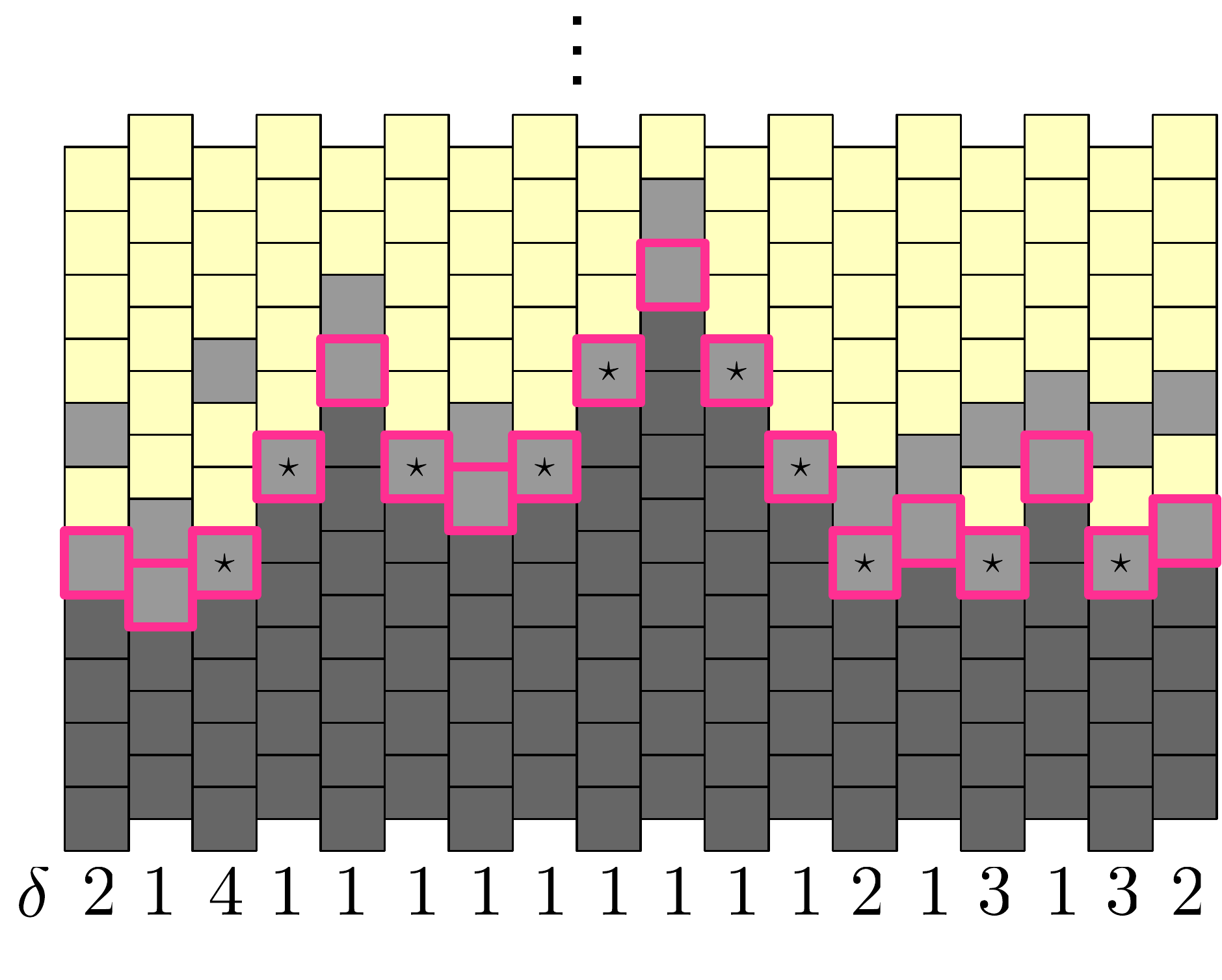}}\\
    (a) & (b)\\
    \end{tabular}
    \caption{(a) Gray cells contain at least one dart; yellow cells contain none.
    The second curtain $v_{\operatorname{curt}} = (\hat{g}_i)$ is indicated
    by a pink boundary.
    (b) Columns that are in tension are marked with a $\star$.  The $\delta$-vector is indicated below.}
    \label{fig:SecondCurtain-example}
\end{figure}

\begin{theorem}\label{thm:martingale-secondcurtain}
Consider the \Martingale{} \sCurtain{} sketch with parameters $q,a$,
and let $\hat{\lambda}$ be its estimate of the true cardinality $\lambda$.
\begin{enumerate}
    \item $\hat{\lambda}$ is an unbiased estimator of $\lambda$. 
    \item Define $\varphi(x)=\displaystyle \sum_{i=0}^{\infty}\paren{e^{\frac{x}{q^{i}}\frac{q-1}{q}}-1}$. 
    The relative variance $\frac{1}{\lambda^2}\Var(\hat{\lambda} \mid \lambda)$ is
    \[
    \frac{(1+o_{\lambda/m}(1)+o_m(1))}{2m}\left(\int_{-\infty}^\infty e^{-\frac{1}{q^{t}}\frac{q^{a-1/2}+1}{q^{a-1/2}-1}} \left(1+ \varphi\paren{\frac{1}{q^{t+1}}}\right)\prod_{j=1}^\infty\left(1+\varphi\left(\frac{1}{q^{t+j(a-1/2)}}\right)\right)^2 \frac{1}{q^{t}}\frac{q-1}{q} \: dt\right)^{-1}.
    \]
    The memory (in bits) used by the sketch is a random variable, having expectation 
    \begin{multline*}
    (1+o_{\lambda/m}(1)+o_m(1))m\Bigg(\log_2(2a) + 1\\
   +\int_{-\infty}^\infty e^{-\frac{\lambda}{q^{t}}\frac{2}{q^{a-1/2}-1}}\prod_{j=1}^\infty\left(1+\varphi\left(\frac{\lambda}{q^{t+j(a-1/2)}}\right)\right)^2 \left(e^{-\frac{\lambda}{q^{t}}}\varphi\paren{\frac{\lambda}{q^{t}}}-\left(e^{-\frac{\lambda}{q^{t+1}}}-e^{-\frac{\lambda}{q^{t}}}\right)\varphi\paren{\frac{\lambda}{q^{t+1}}}\right)\: dt\Bigg).
    \end{multline*}
  \end{enumerate}
\end{theorem}
\begin{proof}
Follows from Theorem \ref{thm:kappa}, \ref{thm:scurtain_kappa} and \ref{thm:scurtain_unary}.
\end{proof}
The best parameterization of Theorem~\ref{thm:martingale-secondcurtain}
uses $q=2.2$ and $a=2$, with a limiting $\MVP \approx 2.06$.
}

\section{Foundations of the \Martingale{} Transform}\label{sect:mtg_est}

In this section we present a simple framework for analyzing the limiting
variance of \Martingale{} sketches, which is strongly influenced 
by Ting's~\cite{Ting14} work.
Theorem~\ref{thm:me} gives simple unbiased estimators for the cardinality
and the variance of the the cardinality estimator.  The upshot of Theorem~\ref{thm:me}
is that to analyze the variance of the estimator, we only need to bound $\E(P_k^{-1})$,
where $P_k$ is the probability the $k$th distinct element changes the sketch.
Theorem~\ref{thm:kappa} further shows that for sketches composed of $m$
subsketches (like \Curtain, \textsf{HyperLogLog}, and $\PCSA$), 
the limiting variance tends to $\frac{1}{2\kappa m}$, 
where $\kappa$ is a constant that depends on the sketch scheme.  
Section~\ref{sect:analysis} analyzes the constant $\kappa$
for each of $\PCSA$, \textsf{LogLog}, and \Curtain.
Using results of~\cite{PettieW21} on the entropy of \PCSA{} we can calculate the limiting $\MVP$ of
\PCSA, \LogLog, \Curtain, and \fishmonger.

\subsection{Martingale Estimators and Retrospective Variance}\label{sect:V-E}

Consider an arbitrary sketch with state space $\mathcal{S}$.  We assume
the sketch state does not change upon seeing duplicated elements, hence
it suffices to consider streams of \emph{distinct} elements.
We model the evolution of the sketch as a Markov chain
$(S_k)_{k\geq 0} \in \mathcal{S}^*$, where $S_k$ is 
the state after seeing $k$ \emph{distinct} elements.
Define $P_k = \Pr(S_{k} \neq S_{k-1}  \mid S_{k-1})$
to be the \emph{state changing probability}, which depends only on $S_{k-1}$.
In the dartboard terminology $P_k$
is the total size of all unoccupied cells in $S_{k-1}$.

\begin{definition}
Let $\Indicator{\mathcal{E}}$
be the indicator variable for event $\mathcal{E}$. For any $\lambda\geq 0$, define:
\begin{align*}
    E_\lambda   &= \sum_{k=1}^\lambda\Indicator{S_k\neq S_{k-1}}\cdot \frac{1}{P_k}, & \mbox{the martingale estimator,}\\
\text{ and } V_\lambda &= \sum_{k=1}^\lambda\Indicator{S_k\neq S_{k-1}}\cdot \frac{1-P_k}{P_k^2}, & \mbox{the ``retrospective'' variance.}
\end{align*}
Note that $E_0=V_0=0$.
\end{definition}

The \Martingale{} transform of this sketch stores $\hat{\lambda} = E_\lambda$ in one machine word
and returns it as a cardinality estimate.  It can also store $V_\lambda$ in one machine word as well.
Theorem~\ref{thm:me} shows that the retrospective variance $V_\lambda$ is a good running estimate
of the empirical squared error $(E_\lambda-\lambda)^2$.

\begin{theorem}\label{thm:me}
The martingale estimator $E_\lambda$ is an unbiased estimator of $\lambda$ and the retrospective variance $V_\lambda$ is an unbiased estimator of $\Var(E_\lambda)$. Specifically, we have,
\begin{align*}
    \E(E_\lambda) =\lambda,\text{ and } \Var (E_\lambda)=\E(V_\lambda) = \sum_{k=1}^\lambda 
    \E\paren{\frac{1}{P_k}}-\lambda.
\end{align*}
\end{theorem}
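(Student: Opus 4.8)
The plan is to recognize $(E_k - k)_{k\ge 0}$ as a martingale with respect to the natural filtration $\mathcal{F}_k = \sigma(S_0,\ldots,S_k)$ and then read off both claims from the per-step conditional first and second moments. Write $D_k = \Indicator{S_k\neq S_{k-1}}\cdot P_k^{-1}$ for the $k$th increment of $E_\lambda$. By the setup of the section, $P_k = \Pr(S_k\neq S_{k-1}\mid S_{k-1})$ depends only on $S_{k-1}$, hence is $\mathcal{F}_{k-1}$-measurable, and conditionally on $\mathcal{F}_{k-1}$ the variable $\Indicator{S_k\neq S_{k-1}}$ is Bernoulli$(P_k)$. (When $P_k=0$ we have $S_k = S_{k-1}$ almost surely and interpret the $k$th term as $0$; this degenerate ``saturated'' case is the only point that needs a word of care, together with the already-in-force reduction to streams of distinct elements via duplicate-insensitivity.)

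First I would establish unbiasedness: $\E(D_k\mid \mathcal{F}_{k-1}) = P_k\cdot P_k^{-1} = 1$, so $\E(D_k)=1$, and summing over $k=1,\ldots,\lambda$ gives $\E(E_\lambda)=\lambda$. Equivalently, $(E_k-k)_k$ has mean-zero increments adapted to $(\mathcal{F}_k)$, i.e.\ it is a martingale.

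Next, for the variance I would expand $\Var(E_\lambda) = \E\paren{\paren{\sum_{k=1}^\lambda (D_k-1)}^2} = \sum_{k=1}^\lambda \E\paren{(D_k-1)^2} + 2\sum_{j<k}\E\paren{(D_j-1)(D_k-1)}$. For $j<k$ the factor $D_j-1$ is $\mathcal{F}_{k-1}$-measurable, so by the tower rule $\E\paren{(D_j-1)(D_k-1)} = \E\paren{(D_j-1)\,\E(D_k-1\mid\mathcal{F}_{k-1})} = 0$, killing every cross term. For the diagonal terms, $\E(D_k^2\mid\mathcal{F}_{k-1}) = P_k\cdot P_k^{-2} = P_k^{-1}$, hence $\E\paren{(D_k-1)^2\mid\mathcal{F}_{k-1}} = P_k^{-1} - 2 + 1 = P_k^{-1}-1$. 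Taking expectations and summing yields $\Var(E_\lambda) = \sum_{k=1}^\lambda\paren{\E(P_k^{-1})-1} = \sum_{k=1}^\lambda \E(P_k^{-1}) - \lambda$, the claimed formula.

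Finally, for the retrospective variance $V_\lambda$, its $k$th term is $\Indicator{S_k\neq S_{k-1}}\cdot(1-P_k)P_k^{-2}$; since $P_k$ is $\mathcal{F}_{k-1}$-measurable and $\Indicator{S_k\neq S_{k-1}}$ is Bernoulli$(P_k)$ given $\mathcal{F}_{k-1}$, its conditional expectation is $P_k\cdot(1-P_k)P_k^{-2} = P_k^{-1}-1$, which is exactly the per-step contribution to $\Var(E_\lambda)$ computed above. Summing and comparing gives $\E(V_\lambda) = \sum_{k=1}^\lambda\E(P_k^{-1})-\lambda = \Var(E_\lambda)$. I do not anticipate a real obstacle: this is a textbook martingale-variance computation, and the only subtlety is the $P_k=0$ convention and the implicit reduction to distinct-element streams, both already handled by the standing assumptions of the section.
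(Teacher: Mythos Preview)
Your proposal is correct and follows essentially the same approach as the paper: both condition on $S_{k-1}$ (equivalently $\mathcal{F}_{k-1}$), use that $P_k$ is measurable there and that $\Indicator{S_k\neq S_{k-1}}$ is conditionally Bernoulli$(P_k)$, and sum the per-step first and second moments. The only cosmetic difference is that the paper computes $\E(E_k^2)$ recursively while you invoke orthogonality of martingale differences to kill the cross terms directly; these are equivalent formulations of the same computation.
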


\begin{rem}
Theorem~\ref{thm:me} contradicts Ting's claim~\cite{Ting14}, that $V_\lambda$
is unbiased \emph{only at ``jump'' times}, i.e., those $\lambda$ for which $S_\lambda\neq S_{\lambda-1}$,
and therefore inadequate to estimate the variance.
In order to correct for this, Ting introduced a Bayesian method for estimating the time
that has passed since the last jump time.  The reason for thinking that jump times are different
is actually quite natural.  Suppose we record the list of \underline{\emph{distinct}} states $s_0,\ldots,s_k$
encountered while inserting $\lambda$ elements, $\lambda$ being unknown, and let 
$p_i$ be the probability of changing from $s_i$ to some other state.  
The amount of time spent in state $s_i$ is a geometric random variable with mean $p_i^{-1}$
and variance $(1-p_i)/p_i^2$.  Furthermore, these waiting times are independent.  Thus,
$\sum_{i\in [0,k)} p_i^{-1}$ and $\sum_{i\in [0,k)} (1-p_i^{-1})/p_i^2$
are unbiased estimates of the cardinality $\lambda'$ and squared error \emph{upon entering state $s_k$}.
These exactly correspond to $E_\lambda$ and $V_\lambda$, but they \emph{should} be biased since
they do not take into account the $\lambda-\lambda'$ elements that had no effect on $s_k$.
As Theorem~\ref{thm:me} shows, this is a mathematical optical illusion.  The history
is a random variable, and although the last $\lambda-\lambda'$ elements did not change the state,
\emph{they could have}, which would have altered the observed history $s_0,\ldots,s_k$
and hence the estimates
$E_\lambda$ and $V_\lambda$.
\end{rem}

\begin{proof}[Proof of Theorem~\ref{thm:me}]
Note that $P_k$ is a function of $S_{k-1}$. 
By the linearity of expectation and the law of total expectation, we have
\begin{align*}
    \E(E_k) &= \E(\E(E_k\mid S_{k-1}))=\E\paren{\E(E_{k-1} \mid S_{k-1})+\E\paren{\Indicator{S_k\neq S_{k-1}}\cdot \frac{1}{P_k} \;\middle|\; S_{k-1}}}\\
    &= \E(E_{k-1})+1=\E(E_{k-2})+2=\ldots=\E(E_{0})+k=k.
\intertext{and} 
    \E(V_k) &= \E(\E(V_k\mid S_{k-1}))=\E\paren{\E(V_{k-1}\mid S_{k-1})+\E\paren{\Indicator{S_k\neq S_{k-1}}\cdot\frac{1-P_k}{P_k^2} \;\middle|\; S_{k-1}}}\\
    &=\E(V_{k-1})+\E\paren{\frac{1-P_k}{P_k}}=\E(V_{k-2})+\E\paren{\frac{1-P_k}{P_k}}+\E\paren{\frac{1-P_{k-1}}{P_{k-1}}}=\ldots\\
    &= \E(V_{0})+\sum_{i=1}^k \E\paren{\frac{1-P_i}{P_i}} =\sum_{i=1}^k \E\paren{\frac{1}{P_i}}-k.
\end{align*}
For the variance,  we have
\begin{align*}
    \Var(E_\lambda) &= \E(E_\lambda^2)-(\E(E_\lambda))^2=\E(E_\lambda^2)-\lambda^2.
\end{align*}

Note that
\begin{align*}
    \E(E_{k}^2 \mid S_{k-1}) &= \E\paren{\paren{E_{k-1}+\Indicator{S_k\neq S_{k-1}}\cdot \frac{1}{P_k}}^2 \;\middle|\; S_{k-1}}\\ 
    &= E_{k-1}^2+2\frac{E_{k-1}}{P_k}\cdot \E\paren{\Indicator{S_k\neq S_{k-1}} \;\middle|\; S_{k-1}}+\frac{1}{P_k^2}\cdot \E\paren{\Indicator{S_k\neq S_{k-1}}^2 \;\middle|\; S_{k-1}}\\
    &= E_{k-1}^2+2E_{k-1}+\frac{1}{P_k}.
\intertext{Then by the law of total expectation and the linearity of expectation, we have}
    \E\paren{E_k^2} &=\E\paren{\E\paren{E_k^2\mid S_{k-1}}}=\E\paren{E_{k-1}^2+2E_{k-1}+\frac{1}{P_k}}
    \;=\; \E\paren{E_{k-1}^2}+2(k-1)+\E\paren{\frac{1}{P_k}}.
\intertext{From this recurrence relation, we have}
    \E\paren{E_\lambda^2}&=\E\paren{E_0^2}+2\sum_{k=1}^{\lambda}(k-1)+\sum_{k=1}^{\lambda}\E\paren{\frac{1}{P_k}}
    \;=\; \sum_{k=1}^{\lambda}\E\paren{\frac{1}{P_k}}+\lambda(\lambda-1).
\end{align*}

% some problem in this formula:
% not fatal, but, it might be better to write E(E_{k-1}|S_{k-1}) here?

We conclude that
\begin{align*}
    \Var(E_\lambda)
    \;=\; \sum_{k=1}^{\lambda}\E\paren{\frac{1}{P_k}}+\lambda(\lambda-1) - \lambda^2 
    \;=\; \sum_{k=1}^{\lambda}\E\paren{\frac{1}{P_k}} - \lambda 
    \;=\; \E(V_\lambda).
\end{align*}
\end{proof}

\subsection{Asymptotic Relative Variance}\label{sect:ARV}

\subsubsection{The ARV Factor}

We consider classes of sketches composed of $m$ \emph{subsketches}, which controls the size and variance. 
In \textsf{LogLog}, $\PCSA$, and \Curtain{} these subsketches are the $m$ columns.  When considering a sketch with $m$ subsketches, instead of using $\lambda$ as the total number of  insertions, we always use $\lambda$ to denote the number of insertions \emph{per subsketch} and therefore the total number of insertions is $\lambda m$.
We care about the \emph{asymptotic relative variance} (ARV) as $m$ and $\lambda$ both go to infinity 
(defined below).  A reasonable sketch should have relative variance $O(1/m)$.  Informally, the ARV factor is just
the leading constant of this expression.

\begin{definition}[ARV factor]\label{def:ARV}
Consider a class of sketches whose size is parameterized by $m$. For any $k\geq 0$, define 
$P_{m,k}$ to be the probability the sketch changes state upon the $k$th insertion and $E_{m,k}$ 
the martingale estimator.
The \emph{ARV factor} of this class of sketches is defined as
\begin{align}
    \lim_{\lambda\to\infty} \lim_{m\to\infty} m\cdot  \frac{\Var(E_{m,\lambda m})}{(\lambda m)^2}.\label{eq:arvf}
\end{align}
\end{definition}

\subsubsection{Scale-Invariance and the Constant $\kappa$}

Few sketches have \emph{strictly} well-defined ARV factors. In \Martingale{} \textsf{LogLog}, for example, the quantity 
$\left(\lim_{m\to\infty} m \frac{\Var(E_{m,\lambda m})}{(\lambda m)^2}\right)$
is not constant, but 
periodic in $\log_2 \lambda$; it does not converge as $\lambda\to\infty$.
We explain how to fix this issue using \emph{smoothing} in Section~\ref{sect:smoothing}.
\emph{Scale-invariant} sketches must have well-defined ARV factors.

\begin{definition}[scale-invariance and constant $\kappa$]\label{def:scale-invariant}
A combined sketch is \emph{scale-invariant} if 
\begin{enumerate}
    \item 
For any $\lambda$, there exists a constant $\kappa_\lambda$ such that $
 \lambda \cdot P_{m, \lambda m}$ converges to $\kappa_\lambda$ almost surely as $m\to\infty$.
\item 
The limit of $\kappa_\lambda$ as $\lambda\to\infty$ exists, and $\kappa \bydef \lim_{\lambda\to\infty} \kappa_\lambda$.
\end{enumerate}
The constant of a sketch $A$ is denoted as $\kappa_A$, where the subscript $A$ is often dropped when the context is clear.
\end{definition}

The next theorem proves that under mild regularity conditions, all scale-invariant sketches have well 
defined ARV factors and there is a direct relation between the ARV factor and the 
constant $\kappa$.  

\begin{theorem}[ARV factor of a scale-invariant sketch]\label{thm:kappa} Consider a sketching scheme satisfying the following properties.
\begin{enumerate}
    \item It is scale-invariant with constant $\kappa$.
    \item For any $\lambda > 0$, the limit operator and the expectation operator of $\{\frac{1}{P_{m,\lambda m}}\}_{m}$ can be interchanged.
    % i think with a.s. convergence we do not need this; w can directly conclude this
\end{enumerate}
 Then the ARV factor of the sketch exists and 
 equals $\frac{1}{2\kappa}$.
\end{theorem}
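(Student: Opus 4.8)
The plan is to convert the exact identity of Theorem~\ref{thm:me} into an integral and then take the two limits of Definition~\ref{def:ARV} in turn. From $\Var(E_{m,\lambda m}) = \sum_{k=1}^{\lambda m}\E\paren{1/P_{m,k}} - \lambda m$ we get
\[
\frac{m\,\Var(E_{m,\lambda m})}{(\lambda m)^2} \;=\; \frac{1}{\lambda^2}\cdot\frac{1}{m}\sum_{k=1}^{\lambda m}\E\paren{\frac{1}{P_{m,k}}} \;-\; \frac{1}{\lambda},
\]
so the subtractive $-\lambda m$ term contributes only $-1/\lambda$, which vanishes in the outer $\lambda\to\infty$ limit; all the content is in the behavior of $\frac1m\sum_{k=1}^{\lambda m}\E\paren{1/P_{m,k}}$ as $m\to\infty$.

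First I would read this sum as the integral over $[0,\lambda]$ of the step function $\phi_m(\mu) := \E\paren{1/P_{m,\lceil \mu m\rceil}}$, i.e.\ $\frac1m\sum_{k=1}^{\lambda m}\E\paren{1/P_{m,k}} = \int_0^\lambda \phi_m(\mu)\,d\mu$. Fix a scale $\mu\in(0,\lambda]$. Scale-invariance (clause~1, applied at scale $\mu$) gives $\mu\,P_{m,\lceil\mu m\rceil}\to\kappa_\mu$ almost surely, hence $1/P_{m,\lceil\mu m\rceil}\to \mu/\kappa_\mu$, and the interchange hypothesis (clause~2) upgrades this to $\phi_m(\mu)\to \mu/\kappa_\mu$. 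To move $\lim_m$ inside $\int_0^\lambda$ I would use that $P_{m,k}$ --- the total mass of the unoccupied cells of $S_{k-1}$ in the dartboard picture --- is non-increasing in $k$ along every sample path, since occupied cells never become free; hence $\E\paren{1/P_{m,k}}$ is non-decreasing in $k$, so $\phi_m$ is non-decreasing on $(0,\lambda]$ and uniformly bounded there by $\phi_m(\lambda)\to \lambda/\kappa_\lambda<\infty$. Bounded convergence then yields $\int_0^\lambda\phi_m(\mu)\,d\mu\to\int_0^\lambda\frac{\mu}{\kappa_\mu}\,d\mu$, so the inner limit of the displayed expression is $\frac{1}{\lambda^2}\int_0^\lambda\frac{\mu}{\kappa_\mu}\,d\mu$.

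Finally I would let $\lambda\to\infty$. Writing $g(\mu)=\mu/\kappa_\mu$, near $\mu=0$ the state almost surely changes on each insertion ($P_{m,k}\to 1$, so $\kappa_\mu\to\mu$ and $g(\mu)\to 1$), so $\int_0^\lambda g$ is finite for every $\lambda$; and clause~2 of scale-invariance says $g(\mu)/\mu = 1/\kappa_\mu\to 1/\kappa$. A Stolz--Ces\`aro / L'H\^opital argument then gives $\frac{1}{\lambda^2}\int_0^\lambda g(\mu)\,d\mu\to \frac{g(\lambda)}{2\lambda} = \frac{1}{2\kappa_\lambda}\to\frac{1}{2\kappa}$. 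Together with the vanishing $-1/\lambda$ term this yields ARV factor $\tfrac{1}{2\kappa}$.

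I expect the one genuinely delicate point to be the interchange of $\lim_m$ with $\int_0^\lambda$: clauses~1 and~2 by themselves give only pointwise-in-$\mu$ convergence of $\phi_m$, so some extra uniformity is required, and the path-monotonicity of $P_{m,k}$ (which makes $\phi_m$ monotone in $\mu$, hence uniformly bounded on $[0,\lambda]$ once $\phi_m(\lambda)$ is controlled) is the cleanest device I see for supplying it; in its absence one would have to posit uniform integrability of $\{1/P_{m,\lceil\mu m\rceil}\}_m$ that is uniform across scales $\mu$ as well. A minor bookkeeping point is that Definition~\ref{def:scale-invariant} is phrased at the integer scale $\lambda m$, so it should be read as quantifying over all real scales $\mu>0$ in order to invoke it along $k_m=\lceil\mu m\rceil$.
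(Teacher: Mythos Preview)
Your proof is correct and follows essentially the same skeleton as the paper's: both start from the variance identity of Theorem~\ref{thm:me}, both use the pathwise monotonicity of $P_{m,k}$ in $k$ to control the inner $m\to\infty$ limit, and both finish with a Ces\`aro-type argument for the outer $\lambda\to\infty$ limit. The only difference is packaging: the paper carries out both limits via explicit discrete squeeze arguments (grouping $\sum_{k=1}^{\lambda m}$ into blocks of size $m$, bounding each block by its endpoints using monotonicity, then regrouping the resulting sum $\sum_{i=0}^{\lambda-1} i/\kappa_i$ into $W$ windows and sending $W\to\infty$), whereas you recast the sum as $\int_0^\lambda \phi_m$, invoke bounded convergence for the inner limit, and L'H\^opital/Ces\`aro for the outer one---the same monotonicity observation supplies the uniform bound in either case, and the ``bookkeeping'' issue you flag about reading Definition~\ref{def:scale-invariant} at non-integer scales is shared by the paper's proof (which evaluates $\kappa_{(k+1)\lambda/W}$).
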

\begin{proof}
First note that, by the assumptions, we have that
\begin{align*}
    \lim_{m\to\infty} \E\paren{\frac{1}{P_{m,\lambda m}}} &=  \E\paren{\lim_{m\to\infty}\frac{1}{P_{m,\lambda m}}} = \E\paren{\frac{\lambda}{\kappa_\lambda}}= \frac{\lambda}{\kappa_\lambda}.
\end{align*}
% and by a.s. convergence we also do not need this 
Also note that since $P_{m,k}$ are non-increasing as $k$ increases, by simple coupling argument, we see that for any $k\leq k'$, $\mathbb{E}\paren{1/P_{m,k}}\leq \mathbb{E}\paren{1/P_{m,k'}}$ and $\frac{\kappa_k}{k}\geq \frac{\kappa_{k'}}{k'}$.

Fix $\lambda>0$, we have, by Theorem \ref{thm:me},
\begin{align}
    \lim_{m\to\infty}\frac{1}{\lambda^2m}\Var(E_{m,\lambda m})&= \lim_{m\to\infty} \left(\frac{1}{\lambda^2m} \sum_{k=1}^{\lambda m}\E\paren{ \frac{1}{P_{m,k}}}-\frac{1}{\lambda}\right)\nonumber\\
    &=  \lim_{m\to\infty} \frac{1}{\lambda^2m} \sum_{i=0}^{\lambda-1} \sum_{j=1}^{m}
    \E\paren{\frac{1}{P_{m,im+j}}} -\frac{1}{\lambda} \label{eq:mve}
\end{align}
Since for any $j\in[1,m]$, $ \E\paren{\frac{1}{P_{m,im+j}}} \leq \E\paren{\frac{1}{P_{m,(i+1)m}}}$, we have
\begin{align*}
    \lim_{m\to\infty}\frac{1}{\lambda^2m}\Var(E_{m,\lambda m})\leq &  \lim_{m\to\infty} \frac{1}{\lambda^2m} \sum_{i=0}^{\lambda-1}  \sum_{j=1}^{m}\E\paren{\frac{1}{P_{m,(i+1)m}}} -\frac{1}{\lambda}\\
    &=    \frac{1}{\lambda^2} \sum_{i=0}^{\lambda-1}  \lim_{m\to\infty} \E\paren{\frac{1}{P_{m,(i+1)m}}} -\frac{1}{\lambda}\\
    &=    \frac{1}{\lambda^2} \sum_{i=0}^{\lambda-1}   \frac{i+1}{\kappa_{i+1}} -\frac{1}{\lambda},
\end{align*}

Denote the ARV factor as $v$.
Fix $W>0$.  Note that for any $i\in[0,\lambda/W-1]$, $\frac{k\lambda/W+i+1}{\kappa_{k\lambda/W+i+1}}\leq \frac{(k+1)\lambda/W}{\kappa_{(k+1)\lambda/W}}$. 
\begin{align}
   v \leq & \lim_{\lambda\to\infty}\left(\frac{1}{\lambda^2} \sum_{i=0}^{\lambda-1}  \frac{i+1}{\kappa_{i+1}} -\frac{1}{\lambda}\right)
    =\lim_{\lambda\to\infty}\left(\frac{1}{\lambda^2} \sum_{k=0}^{W-1}\sum_{i=0}^{\lambda/W-1}   \frac{k\lambda/W+i+1}{\kappa_{k\lambda/W+i+1}} \right)\nonumber\\
    \leq&\lim_{\lambda\to\infty}\left(\frac{1}{\lambda^2} \sum_{k=0}^{W-1}\sum_{i=0}^{\lambda/W-1}   \frac{(k+1)\lambda/W}{\kappa_{(k+1)\lambda/W}}\right)
    =\frac{1}{W^2} \sum_{k=0}^{W-1}  \lim_{\lambda\to\infty}\frac{k+1}{\kappa_{(k+1)\lambda/W}} \nonumber\\
    \intertext{note that $\lim_{\lambda\to\infty}\kappa_{(k+1)\lambda/W}=\kappa$ by the definition of scale-invariance,}
    &=\frac{1}{W^2} \sum_{k=0}^{W-1} \frac{k+1}{\kappa}= \frac{1}{2\kappa} \frac{W(W+1)}{W^2}. \label{eq:v_upper}
\end{align}

On the other hand, we can  bound it from below similarly. We will only outline the key steps since it is almost identical to the previous one. Note that for any $j\in[1,m]$, $ \E\paren{\frac{1}{P_{m,im+j}}} \geq \E\paren{\frac{1}{P_{m,im}}}$. Using this inequality in (\ref{eq:mve}), we have
\begin{align*}
    \lim_{m\to\infty}\frac{1}{\lambda^2m}\Var(E_{m,\lambda m})\geq &  \lim_{m\to\infty} \frac{1}{\lambda^2m} \sum_{i=0}^{\lambda-1}  \sum_{j=1}^{m}\E\paren{\frac{1}{P_{m,im}}} -\frac{1}{\lambda}=    \frac{1}{\lambda^2} \sum_{i=0}^{\lambda-1}   \frac{i}{\kappa_i} -\frac{1}{\lambda}.
\end{align*}
Similarly, we have
\begin{align}
    v\geq &\lim_{\lambda\to\infty}\left(\frac{1}{\lambda^2} \sum_{k=0}^{W-1}\sum_{i=0}^{\lambda/(W)-1}   \frac{k\lambda/W+i}{\kappa_{k\lambda/W+i}} \right)
    \geq\lim_{\lambda\to\infty}\left(\frac{1}{\lambda^2} \sum_{k=0}^{W-1}\sum_{i=0}^{\lambda/W-1}  \frac{k\lambda/W}{\kappa_{k\lambda/W}} \right) \nonumber\\
    &=\frac{1}{W^2} \sum_{k=0}^{W-1} \frac{k}{\kappa}= \frac{1}{2\kappa} \frac{W(W-1)}{W^2}. \label{eq:v_lower}
\end{align}

Thus by combining (\ref{eq:v_upper}) and (\ref{eq:v_lower}), we have
\begin{align*}
   \frac{1}{2\kappa} \frac{W(W-1)}{W^2}\leq v \leq \frac{1}{2\kappa} \frac{W(W+1)}{W^2}.
\end{align*}
Since the choice of $W$ is arbitrary, we conclude that the ARV factor $v$ is well-defined and  $v=\frac{1}{2\kappa}$.
\end{proof}

The constant $\kappa$ together with Theorem \ref{thm:kappa} is useful in that it gives a simple 
and systematic way to evaluate the asymptotic performance of a well behaved (scale-invariant) sketch scheme. 

$\textsf{MinCount}$~\cite{Giroire09,ChassaingG06,Lumbroso10} is an example
of a scale-invariant sketch.  The function $h(a) = (i,v) \in [m]\times [0,1]$ 
is interpreted as a pair containing a bucket index and a real hash value.  
A $(k,m)$-\textsf{MinCount} sketch stores the smallest $k$ hash values in each bucket. 
\begin{theorem}
$(k,m)$-\textsf{MinCount} is scale-invariant and $\kappa_{(k,m)\textsf{-MinCount}} = k$.
\end{theorem}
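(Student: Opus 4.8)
The plan is to write $P_{m,n}$ for $(k,m)$-\textsf{MinCount} explicitly — I use $n$ for the insertion index to avoid clashing with the \textsf{MinCount} parameter $k$ — and then verify the two conditions of Definition~\ref{def:scale-invariant}. Since $h(a)=(i,v)\in[m]\times[0,1]$, let $\tau_i$ denote bucket $i$'s current \emph{threshold}: the $k$th smallest among the hash values $v$ of elements that landed in bucket $i$, or $\tau_i=1$ when fewer than $k$ have arrived (so every new element of that bucket is recorded). A new insertion changes the state exactly when its pair $(i,v)$ satisfies $v<\tau_i$, so, with $i$ uniform on $[m]$ and $v$ uniform on $[0,1]$ independently,
\[
P_{m,n}=\frac1m\sum_{i=1}^m\tau_i^{(n-1)},
\]
the superscript recording that these are the thresholds after $n-1$ insertions. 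Thus scale-invariance reduces to understanding this average of $m$ bucket thresholds when $n=\lambda m$.

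First I would establish that, for fixed $\lambda$, the average concentrates. Viewing $f=\frac1m\sum_i\tau_i^{(n-1)}$ as a function of the $n-1$ i.i.d.\ inputs $(i_j,v_j)$, changing one input affects the contents of at most two buckets, and every threshold lies in $[0,1]$, so $f$ has bounded differences at most $2/m$; McDiarmid's inequality then gives $\Pr(|f-\E f|>t)\le 2\exp(-\Omega(t^2m/\lambda))$, which is summable over $m$, so $f\to\lim_m\E f$ almost surely by Borel--Cantelli. (Equivalently, one may Poissonize the number of insertions so the bucket counts are exactly independent $\mathrm{Poisson}(\lambda)$ variables, apply the strong law, and de-Poissonize using monotonicity of the thresholds in $n$.)

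Next I would identify the limit of $\E f=\E[\tau_1^{(n-1)}]$ (equal across buckets by symmetry). Conditioned on bucket $1$ receiving $t$ elements, its hash values are i.i.d.\ Uniform$[0,1]$, so its $k$th order statistic has mean $k/(t+1)$ for $t\ge k$, while the threshold equals $1$ for $t<k$. The bucket count is $\mathrm{Binomial}(n-1,1/m)$, which converges in distribution to $\mathrm{Poisson}(\lambda)$; since the threshold is bounded, and using $\sum_{t\ge k}e^{-\lambda}\frac{\lambda^t}{t!}\cdot\frac1{t+1}=\frac1\lambda\Pr(\mathrm{Poisson}(\lambda)\ge k+1)$, I obtain $\lim_{m\to\infty}\E[\tau_1^{(n-1)}]=\Pr(\mathrm{Poisson}(\lambda)\le k-1)+\frac k\lambda\Pr(\mathrm{Poisson}(\lambda)\ge k+1)$. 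Multiplying by $\lambda$ shows $\lambda P_{m,\lambda m}$ converges almost surely to $\kappa_\lambda\bydef\lambda\Pr(\mathrm{Poisson}(\lambda)\le k-1)+k\Pr(\mathrm{Poisson}(\lambda)\ge k+1)$, which is condition~1 of Definition~\ref{def:scale-invariant}. Finally, as $\lambda\to\infty$, $\Pr(\mathrm{Poisson}(\lambda)\ge k+1)\to1$ while $\lambda\Pr(\mathrm{Poisson}(\lambda)\le k-1)=\lambda e^{-\lambda}\sum_{t=0}^{k-1}\lambda^t/t!\to0$, so $\kappa_\lambda\to k$, giving $\kappa_{(k,m)\textsf{-MinCount}}=k$.

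I expect the one genuinely delicate point to be making the almost-sure convergence of the bucket-threshold average airtight despite the (mild, negative) dependence among bucket counts forced by $\sum_iT_i=n$; the bounded-differences estimate is exactly what collapses this to a one-line Borel--Cantelli argument, with Poissonization as a fallback. The remaining ingredients — the order-statistic mean, the Poisson summation identity, and the Poisson tail bound — are routine.
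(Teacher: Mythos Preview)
Your proposal is correct and more careful than the paper's own argument, but it takes a genuinely different route. The paper simply \emph{assumes} each bucket receives exactly $\lambda$ elements (justified informally by ``we only care about the asymptotic behavior''), after which the $m$ bucket thresholds are i.i.d.\ and the strong law of large numbers applies directly; the threshold is then the $k$th order statistic of exactly $\lambda$ uniforms, i.e.\ a $\mathrm{Beta}(k,\lambda-k+1)$ variable with mean $k/(\lambda+1)$, giving the clean intermediate expression $\kappa_\lambda = k\lambda/(\lambda+1)$ and hence $\kappa=k$.

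You instead keep the true multinomial allocation, use bounded differences plus Borel--Cantelli to get the almost-sure concentration despite the (mild) dependence among bucket counts, and pass to the $\mathrm{Binomial}(n-1,1/m)\to\mathrm{Poisson}(\lambda)$ limit to identify $\kappa_\lambda$. This yields a messier $\kappa_\lambda = \lambda\Pr(\mathrm{Poisson}(\lambda)\le k-1)+k\Pr(\mathrm{Poisson}(\lambda)\ge k+1)$, but the same limit $k$. What your approach buys is rigor: you actually respect the order of limits in Definition~\ref{def:scale-invariant} ($m\to\infty$ first, then $\lambda\to\infty$), whereas the paper's ``each bucket gets $\lambda$ elements'' shortcut quietly interchanges them. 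What the paper's approach buys is brevity and a simpler closed form for $\kappa_\lambda$.
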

\begin{proof}
When a total of $\lambda m$ elements are inserted to the combined sketch, each subsketch 
receives $(1+o(1))\lambda$ elements as $\lambda \to \infty$. Since we only care the asymptotic behavior, 
we assume for simplicity that each subsketch receives exactly $\lambda$ elements.

Let $P_\lambda^{(i)}$ be the probability that the sketch of 
the $i$th bucket changes after the $\lambda$th element is
thrown into the $i$th bucket.
Then by definition, we have
\begin{align*}
    P_{m,\lambda m}=\frac{\sum_{i=1}^m{P_\lambda^{(i)}}}{m}.
\end{align*}
Since all the subsketches are i.i.d., by the law of large numbers, 
$\lambda\cdot P_{m,\lambda} \to \lambda\cdot \mathbb{E}\paren{P_\lambda^{(1)}}$ 
almost surely as $m\to\infty$.

Let $X$ be the $k$th smallest hash value among $\lambda$ uniformly random numbers in $[0,1]$, which distributes identically with $P_\lambda^{(1)}$. By standard order statistics, $X$ is a Beta random variable $\mathrm{Beta}(k,\lambda -1+k)$ which has mean $\frac{k}{\lambda+1}$. Thus $\kappa_\lambda = \lambda\cdot \E(X)=\frac{k\lambda}{\lambda+1}$. We conclude that
\begin{align*}
    \kappa = \lim_{\lambda\to \infty}\kappa_\lambda=\lim_{\lambda\to \infty} \frac{k\lambda}{\lambda+1}=k.
\end{align*}
\end{proof}

Applying Theorem \ref{thm:kappa} to $(k,m)$-\textsf{MinCount}, 
we see its ARV is $\frac{1}{2km}$,\footnote{For simplicity, we assume the second condition of Theorem 4 holds for all the sketches analyzed in this paper.} matching 
Cohen~\cite{Cohen15} and Ting~\cite{Ting14}.  Technically its
$\MVP$ is unbounded since hash values were real numbers,
but any realistic implementation would store them to 
$\log U$ bits of precision, for a total of $km\log U$ bits.  
Hence we regard its $\MVP$ to be $\frac{1}{2}\cdot \log_2 U$.

\subsubsection{Smoothing Discrete Sketches}\label{sect:smoothing}

Sketches that partition the dartboard in some exponential fashion with base $q$
(like \textsf{LogLog}, $\PCSA$, and \Curtain) have the property that their 
estimates and variance are periodic in $\log_q \lambda$.  
Pettie and Wang~\cite{PettieW21} proposed a simple method to \emph{smooth} these 
sketches and make them truly scale-invariant as $m\to\infty$.

We assume that the dartboard is partitioned into $m$ columns.  The base-$q$ 
\emph{smoothing} operation uses an \emph{offset vector} $\vec{r} = (r_0,\ldots,r_{m-1})$.  
We scale down all the cells in column $i$ by the factor $q^{-r_i}$, 
then add a dummy cell spanning $[q^{-r_i},1)$ which is always occupied.
(Phrased algorithmically, if a dart is destined for column $i$, we filter it
out with probability $1-q^{-r_i}$ and insert it into the sketch with probability $q^{-r_i}$.)
When analyzing variants of (\textsf{Hyper})\textsf{LogLog}
and $\PCSA$, we use the uniform offset vector $(0,1/m,2/m,\ldots,(m-1)/m)$.
The \Curtain{} sketch can be viewed as having a built-in offset vector of 
$(0,1/2,0,1/2,0,1/2,\ldots)$ which effects the ``sawtooth'' cell partition.
To smooth it, we use the offset vector\footnote{In \cite{PettieW21}, 
the smoothing was implemented via \emph{random} offsetting, instead of the \emph{uniform} offsetting. 
In $\Curtain$ we need to use uniform offsetting so that the offset values
of columns are similar to their neighbors.}
\[
(0,\: 1/2,\: 1/m,\: 1/2+1/m,\: 2/m,\: 1/2+2/m,\: \ldots,\: 1/2-1/m,\: 1-1/m).
\]
As $m\to\infty$, $\vec{r}$ becomes uniformly dense in $[0,1]$.

The smoothing technique makes the empirical estimation more scale-invariant
(see~\cite[Figs.~1\& 2]{PettieW21}) but also makes the sketch theoretically 
scale-invariant according to Definition \ref{def:scale-invariant}. 
Thus, in the analysis, we will always assume the sketches are smoothed.
However, in practice it is probably not necessary to do smoothing if $q<3$. 

In the next section, we will prove that \emph{smoothed} 
$\qLL$, $\qPCSA$, and 
$\Curtain$ are all scale-invariant.

\section{Analysis of Dartboard Based Sketches}\label{sect:analysis}

Consider a dartboard cell that covers the vertical interval $[q^{-(t+1)},q^{-t})$.
We define the \emph{height} of the cell to be $t$.
In a smoothed cell partition, no two cells have the same height
and all heights are of the form $t=j/m$, for some integer $j$.
Thus, we may refer to it unambiguously as \emph{cell $t$}.
Note that cell $t$ is an $m^{-1} \times \frac{1}{q^t}\frac{q-1}{q}$ rectangle.

\subsection{Poissonized Dartboard}
Since we care about the asymptotic case where $\lambda\to\infty$, we model the process of ``throwing darts'' by a Poisson point process on the dart board (similar to the ``poissonization'' in the analysis of $\textsf{HyperLogLog}$ \cite{FlajoletFGM07}). Specifically, after throwing $\lambda m$ darts (events) to the dartboard, we assume the number of darts in cell $t$ is a Poisson random variable with mean $\lambda \frac{1}{q^t}\frac{q-1}{q}$ and the number of darts in different cells are independent. For the poissonized dartboard, the range of height of cells naturally 
extend to the whole  set of real numbers, instead of just having cells with positive height.

For any $t\in\mathbb{R}$, let $Y_{t,\lambda}$ be the indicator whether cell $t$ contains at least one dart. 
Note that the probability that a Poisson random variable with mean $\lambda'$ is zero
is $e^{-\lambda'}$. Thus we have,
\begin{align*}
    \Pr(Y_{t,\lambda}=0)=e^{-\frac{\lambda}{q^t}\frac{q-1}{q}}.
\end{align*}

Here, we note some simple identities for integrals that we will use frequently in the analysis.
\begin{lemma}
For any $q>1$, we have
\begin{align*}
    \int \frac{1}{q^t} e^{-\frac{1}{q^t}}dt= \frac{1}{\ln q} e^{-\frac{1}{q^t}}+C.
\end{align*}
Furthermore, let $c_0,c_1$ be any positive numbers, we have
\begin{align*}
    \int_{-\infty}^\infty \frac{c_0}{q^t} e^{-\frac{c_1}{q^t}}dt= \frac{c_0}{c_1} \frac{1}{\ln q}.
\end{align*}
\end{lemma}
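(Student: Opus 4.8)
The plan is to prove the indefinite integral first, by verification, and then to deduce the definite integral from it via a single translation of the variable. No machinery beyond elementary calculus is needed; the only things to keep track of are a sign and the behavior at $\pm\infty$.

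For the first identity I would simply differentiate the claimed antiderivative. Writing $\tfrac{1}{q^t}=q^{-t}$ and using $\tfrac{d}{dt}q^{-t}=-(\ln q)\,q^{-t}$, the chain rule gives
\[
\frac{d}{dt}\paren{\frac{1}{\ln q}\,e^{-q^{-t}}} \;=\; \frac{1}{\ln q}\cdot e^{-q^{-t}}\cdot (\ln q)\,q^{-t} \;=\; \frac{1}{q^t}\,e^{-\frac{1}{q^t}},
\]
which is exactly the integrand, so the stated primitive is correct up to the constant $C$. (Equivalently, the substitution $u=q^{-t}$, $du=-(\ln q)\,u\,dt$, turns the integral into $-\tfrac{1}{\ln q}\int e^{-u}\,du$.)

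For the second identity I would absorb the constant $c_1$ by the translation $s=t-\log_q c_1$, so that $dt=ds$, the limits of integration stay at $\pm\infty$ since this is just a shift, and $\tfrac{c_1}{q^t}=\tfrac{1}{q^{s}}$, whence $\tfrac{c_0}{q^t}=\tfrac{c_0}{c_1}\cdot\tfrac{1}{q^{s}}$. The integral becomes
\[
\frac{c_0}{c_1}\int_{-\infty}^{\infty}\frac{1}{q^{s}}\,e^{-\frac{1}{q^{s}}}\,ds \;=\; \frac{c_0}{c_1}\left[\frac{1}{\ln q}\,e^{-\frac{1}{q^{s}}}\right]_{s=-\infty}^{s=\infty}
\]
by the first part. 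Evaluating the boundary: since $q>1$, as $s\to\infty$ we have $q^{-s}\to 0$ so $e^{-q^{-s}}\to 1$, and as $s\to-\infty$ we have $q^{-s}\to\infty$ so $e^{-q^{-s}}\to 0$; the bracket is therefore $\tfrac{1}{\ln q}(1-0)$, giving $\tfrac{c_0}{c_1}\cdot\tfrac{1}{\ln q}$ as claimed. Convergence of the improper integral is automatic from this explicit evaluation, so there is no genuine obstacle; the only points meriting care are the sign in $\tfrac{d}{dt}q^{-t}$ (so the primitive comes out with the correct sign) and the observation that a translation leaves the infinite limits unchanged.
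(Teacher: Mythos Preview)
Your proof is correct and is essentially the same approach as the paper's, which simply says ``Use standard calculus.'' You have just written out the verification and the substitution explicitly, which is fine.
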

\begin{proof}
Use standard calculus.
\end{proof}

\subsection{The Constant $\kappa$}

Let $Z_{t,\lambda}$ be the indicator of whether the cell  $t$ is \emph{free}.
Unlike $Y_{t,\lambda}$, $Z_{t,\lambda}$ depends on which sketching
algorithm we are analyzing.
Since the state changing probability is equal to the sum of the area of free cells, we have
\begin{align}
    P_{m,\lambda m} = \sum_{j=0}^\infty \frac{1}{m}\left(\frac{1}{q^{j/m}}-\frac{1}{q^{{j}/m+1}}\right)Z_{j/m,\lambda}. \label{eq:pmlm}
\end{align}

If $P_{m,\lambda m}$ converges to $\kappa_\lambda/\lambda$ almost surely as $m\to\infty$, then  $\mathbb{E}(P_{m,\lambda m})$ also converges to $\kappa_\lambda/\lambda$ as $m\to\infty$. Thus we have, from (\ref{eq:pmlm}),
\begin{align}
    \kappa_\lambda/\lambda &= \lim_{m\to\infty}\mathbb{E}(P_{m,\lambda m})= \lim_{m\to\infty}\sum_{j=0}^\infty \frac{1}{m}\left(\frac{1}{q^{j/m}}-\frac{1}{q^{{j}/m+1}}\right)\mathbb{E}(Z_{j/m,\lambda})\nonumber\\
    &=\int_{0}^\infty \left(\frac{1}{q^{t}}-\frac{1}{q^{t+1}}\right)\mathbb{E}(Z_{t,\lambda})dt
    \approx \int_{-\infty}^\infty \left(\frac{1}{q^{t}}-\frac{1}{q^{t+1}}\right)\mathbb{E}(Z_{t,\lambda})dt,\label{eq:kappa_lambda}
\end{align}
where we can extend the integration range to negative infinity without affecting the limit of $\kappa_\lambda$ as $\lambda\to\infty$.\footnote{Note that for any $t,\lambda$, we all have $\mathbb{E}(Z_{t,\lambda})\leq \E(1-Y_{t,\lambda})$ (free cell has no dart). Therefore, by extending the integration (\ref{eq:kappa_lambda}) to the whole real line, the increment is bounded by 
$   \int_{-\infty}^0 \left(\frac{1}{q^{t}}-\frac{1}{q^{t+1}}\right)\E(1-Y_{t,\lambda}) dt=\int_{-\infty}^0 \left(\frac{1}{q^{t}}-\frac{1}{q^{t+1}}\right)e^{-\lambda(1/q^t-1/q^{t+1})} dt
    =\frac{1}{\lambda }\int_{-\infty}^{-\log_q(\lambda(q-1)/q)} 1/q^te^{-1/q^t} dt
    =\frac{1}{\lambda}\frac{e^{-\lambda\frac{q-1}{q}}}{\ln q}$ where $\frac{e^{-\lambda\frac{q-1}{q}}}{\ln q}\to 0$
as $\lambda \to\infty$. Thus it will not affect the value of $\lim_{\lambda\to\infty}\kappa_{\lambda}$.}  We conclude that
\begin{align}
    \kappa =\lim_{\lambda\to\infty} \kappa_\lambda =\lim_{\lambda\to\infty}  \lambda\int_{-\infty}^\infty \left(\frac{1}{q^{t}}-\frac{1}{q^{t+1}}\right)\mathbb{E}(Z_{t,\lambda})dt. \label{eq:kappa}
\end{align}
The formula (\ref{eq:kappa}) is novel in the sense that, in order to evaluate $\kappa$, we now only need to understand the probability that $Z_{t,\lambda}$ is 1 for fixed $t$ and $\lambda$.\footnote{Technically, to apply  formula (\ref{eq:kappa}) one needs to first prove that the state changing probability $P_{m,\lambda m}$  converges almost surely to some constant $\kappa_\lambda/\lambda$ for any $\lambda$, which is a mild regularity condition for any reasonable sketch. Thus in this paper we will assume the sketches in the analysis all satisfy this regularity condition and claim that a sketch is scale-invariant if formula (\ref{eq:kappa}) converges.}

\subsection{Analysis of Smoothed $\qPCSA$ and $\qLL$}

The sketches $\qPCSA$ and $\qLL$ are the natural smoothed
base-$q$ generalizations of $\PCSA$~\cite{FlajoletM85} and \textsf{LogLog}~\cite{DurandF03}.

\begin{theorem}\label{thm:PCSA-LL-kappa}
$\qPCSA$ and $\qLL$ are scale-invariant. In particular, we have,
\begin{align*}
    \kappa_{\qPCSA}=\frac{1}{\ln q},\text{ and }\kappa_{\qLL} = \frac{1}{\ln q}\frac{q-1}{q}.
\end{align*}
\end{theorem}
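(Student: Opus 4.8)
The plan is to apply the master formula~(\ref{eq:kappa}) to each sketch by identifying, for a fixed height $t$ and intensity $\lambda$, the probability that cell $t$ is \emph{free}, i.e.~$\E(Z_{t,\lambda})$. The key observation is that for both sketches the event ``cell $t$ is free'' can be read off directly from the Poisson occupancy counts of a small, explicit collection of cells, so $\E(Z_{t,\lambda})$ is an elementary function of $\lambda/q^t$, after which the single integral in~(\ref{eq:kappa}) is evaluated via the identity in the Lemma preceding Section~4.2.

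For $\qPCSA$, the state records exactly which cells contain a dart, so by Rule (R1)/(R2) cell $t$ is free if and only if it contains no dart. Hence $\E(Z_{t,\lambda}) = \Pr(Y_{t,\lambda}=0) = e^{-\frac{\lambda}{q^t}\frac{q-1}{q}}$, and plugging into~(\ref{eq:kappa}) gives
\[
\kappa_{\qPCSA} = \lim_{\lambda\to\infty}\lambda\int_{-\infty}^\infty\left(\frac{1}{q^t}-\frac{1}{q^{t+1}}\right)e^{-\frac{\lambda}{q^t}\frac{q-1}{q}}\,dt.
\]
Substituting $u = \frac{\lambda}{q^t}\frac{q-1}{q}$ (so $\left(\frac1{q^t}-\frac1{q^{t+1}}\right)dt = -\frac{u}{\lambda\ln q}\,\frac{du}{u}$) turns this into $\frac{1}{\ln q}\int_0^\infty e^{-u}\,du = \frac{1}{\ln q}$, independent of $\lambda$, which also confirms scale-invariance (the integral is literally constant in $\lambda$, so $\kappa_\lambda = \kappa_{\qPCSA}$ for all $\lambda$).

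For $\qLL$, within each column the occupied cells are exactly those at or below the highest cell containing a dart; equivalently, cell $t$ is free if and only if \emph{every} cell of height $\ge t$ in that column is empty. In the Poissonized model the cells of height $t, t+1/m, t+2/m,\ldots$ in a fixed column have independent Poisson counts, and the total Poisson mean of all cells strictly above height $t$ telescopes: cell $s$ has mean $\frac{\lambda}{q^s}\frac{q-1}{q}$, so the union from height $t$ upward has mean $\sum_{j\ge 0}\frac{\lambda}{q^{t+j/m}}\frac{q-1}{q}\to \lambda q^{-t}$ as $m\to\infty$ (the geometric-type sum over the smoothed grid). Therefore $\E(Z_{t,\lambda}) = e^{-\lambda/q^t}$ in the limit, and~(\ref{eq:kappa}) becomes $\lim_{\lambda\to\infty}\lambda\int_{-\infty}^\infty\left(\frac1{q^t}-\frac1{q^{t+1}}\right)e^{-\lambda/q^t}\,dt$; the same substitution (now with $u = \lambda/q^t$, and the prefactor $\frac1{q^t}-\frac1{q^{t+1}} = \frac{q-1}{q}\cdot\frac1{q^t}$) yields $\frac{1}{\ln q}\cdot\frac{q-1}{q}$, again exactly constant in $\lambda$.

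The main obstacle is not the integration but justifying the passage to the Poissonized continuum limit: one must argue that the discrete state-changing probability $P_{m,\lambda m}$ of the smoothed sketch converges almost surely (as $m\to\infty$) to the claimed $\kappa_\lambda/\lambda$, so that~(\ref{eq:kappa}) is applicable, and that the $\lambda\to\infty$ limit is clean despite the boundary effects near height $0$ (the finitely many ``real'' columns versus the idealized doubly-infinite grid). The excerpt has already set up the tools for this — the footnote after~(\ref{eq:kappa_lambda}) bounds the error from extending the integral to $-\infty$ by $\frac{1}{\lambda}\cdot\frac{e^{-\lambda(q-1)/q}}{\ln q}\to 0$, and the paper explicitly adopts the regularity convention that $P_{m,\lambda m}$ converges for ``any reasonable sketch'' — so in the write-up I would state the Poisson occupancy description of $Z_{t,\lambda}$ for each sketch, invoke independence of distinct cells' Poisson counts, compute the two integrals, and remark that since both answers are independent of $\lambda$ the sketches are trivially scale-invariant with $\kappa = \kappa_\lambda$.
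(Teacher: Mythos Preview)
Your approach is essentially identical to the paper's: compute $\E(Z_{t,\lambda})$ from the Poisson occupancy of the relevant cells and plug into formula~(\ref{eq:kappa}), evaluating the integral via the Lemma in Section~4.1.

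One slip in the $\qLL$ paragraph: within a \emph{fixed} column the cells have heights $t, t+1, t+2,\ldots$ (integer spacing), not $t, t+1/m, t+2/m,\ldots$. The $1/m$ spacing is the smoothing offset \emph{across} different columns, not the vertical spacing within one. As written, your sum $\sum_{j\ge 0}\frac{\lambda}{q^{t+j/m}}\frac{q-1}{q}$ actually diverges as $m\to\infty$ (it behaves like $\frac{\lambda(q-1)}{q^{t+1}}\cdot\frac{m}{\ln q}$). The correct telescoping is the paper's
\[
\sum_{i\ge 0}\frac{\lambda}{q^{t+i}}\cdot\frac{q-1}{q} \;=\; \frac{\lambda(q-1)}{q^{t+1}}\cdot\frac{q}{q-1} \;=\; \frac{\lambda}{q^t},
\]
which yields $\E(Z_{t,\lambda})=e^{-\lambda/q^t}$ directly, with no $m\to\infty$ limit needed at this step. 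Fix that and your write-up matches the paper's proof.
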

\begin{proof}
For $\qLL$, cell $t$ is free iff both itself and all the cells above it in its column contain no darts. 
Thus we have
\begin{align*}
    \mathbb{E}(Z_{t,\lambda})&=\prod_{i=0}^\infty \Pr(Y_{t+i,\lambda}=0)=\prod_{i=0}^\infty e^{-\frac{\lambda}{q^{t+i}}\frac{q-1}{q}}=e^{-\frac{\lambda}{q^{t}}}.
\end{align*}
Insert it to formula (\ref{eq:kappa}) and we get
\begin{align*}
\kappa_{\qLL} &= \lim_{\lambda\to\infty} \lambda\int_{-\infty}^\infty \left(\frac{1}{q^{t}}-\frac{1}{q^{t+1}}\right)e^{-\frac{\lambda}{q^{t}}}dt= \frac{1}{\ln q}\frac{q-1}{q}.
\end{align*}

For $\qPCSA$, cell $t$ is free iff it has no dart. Thus $Z_{t,\lambda}= 1-Y_{t,\lambda}$ and by formula (\ref{eq:kappa}) we have
\begin{align*}
    \kappa_{\qPCSA}&=\lim_{\lambda\to\infty} \lambda \int_{-\infty}^\infty \left(\frac{1}{q^{t}}-\frac{1}{q^{t+1}}\right)e^{-\frac{\lambda}{q^t}\frac{q-1}{q}} dt= \frac{1}{\ln q}.
\end{align*}
\end{proof}

The \fishmonger~\cite{PettieW21} sketch is based on a smoothed, 
entropy compressed version of base-$e$ $\PCSA$.
The memory footprint of \fishmonger{} approaches its entropy as $m\to\infty$, 
which was calculated to be $mH_0$~\cite[Lemma 4]{PettieW21}. From Theorem \ref{thm:PCSA-LL-kappa}, we know $\kappa_{e\textsf{-PCSA}}=1$.

\begin{cor}
\fishmonger{} has limiting $\MVP$ $H_0/2\approx 1.63$.
\end{cor}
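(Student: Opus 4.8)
The plan is to read off the two ingredients of the $\MVP$ — relative variance and expected memory — from results already in hand, and then multiply them in the limit $m\to\infty$.

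First I would observe that the \emph{dartboard part} of a \fishmonger{} sketch is identical, cell for cell, to that of smoothed base-$e$ $\PCSA${}: the entropy compression of~\cite{PettieW21} changes only the \emph{encoding} of the set of occupied cells, not which cells are occupied. Hence the induced Markov chain $(S_k)_{k\ge 0}$, and in particular the state-changing probabilities $P_{m,k}$ (which by~(\ref{eq:pmlm}) are just the total free area, a function of the occupied-cell set alone), coincide with those of smoothed $e$-$\PCSA$. By Theorem~\ref{thm:PCSA-LL-kappa}, smoothed $e$-$\PCSA$ is scale-invariant with $\kappa_{e\text{-}\PCSA}=1/\ln e = 1$, so \fishmonger{} is scale-invariant with the same constant $\kappa = 1$.

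Next I would apply Theorem~\ref{thm:kappa}, invoking — as throughout the paper — the mild regularity hypotheses (almost-sure convergence of $P_{m,\lambda m}$ and the limit/expectation interchange for $\{1/P_{m,\lambda m}\}_m$). This gives that \Martingale{} \fishmonger{} has ARV factor $\tfrac{1}{2\kappa} = \tfrac12$; equivalently, with $\hat\lambda = E_\lambda$ the martingale estimator of Theorem~\ref{thm:me} (which is unbiased), the relative variance is $\lambda^{-2}\Var(\hat\lambda\mid\lambda) = (1+o(1))\tfrac{1}{2m}$ as $m,\lambda\to\infty$. For the memory side I would cite~\cite[Lemma 4]{PettieW21}, which evaluates the Shannon entropy of the \fishmonger{} state as $(1+o(1))mH_0$; since \fishmonger{} is compressed to its entropy bound, its expected size is $(1+o(1))mH_0$ bits, and the \Martingale{} transform adds only the $O(\log U)$ bits (or an $O(\log m)$-bit approximation) needed to store $\hat\lambda$, which is $o(m)$ and hence asymptotically negligible. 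Multiplying, $\MVP = \lim_{m\to\infty}\big((1+o(1))mH_0\big)\cdot\big((1+o(1))\tfrac{1}{2m}\big) = H_0/2 \approx 1.63$.

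The only step that deserves care is the first one: one must be certain that re-encoding the sketch to its entropy bound leaves the martingale dynamics — i.e., the state sequence $(S_k)$ and thus each $P_{m,k}$ — completely untouched, so that the value $\kappa=1$ of the underlying $e$-$\PCSA$ transfers verbatim. Once that is granted, together with the standing regularity assumptions behind Theorem~\ref{thm:kappa} and the entropy computation of~\cite{PettieW21}, the remainder is arithmetic.
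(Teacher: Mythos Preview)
Your proposal is correct and follows essentially the same approach as the paper: both identify \fishmonger{} with smoothed $e$-$\PCSA$ at the dartboard level, invoke Theorem~\ref{thm:PCSA-LL-kappa} to get $\kappa=1$, apply Theorem~\ref{thm:kappa} to obtain ARV $=1/(2m)$, cite~\cite[Lemma 4]{PettieW21} for the $mH_0$ entropy, and multiply. Your write-up is somewhat more explicit about why the entropy re-encoding leaves the $P_{m,k}$ unchanged and why the extra $\hat\lambda$ storage is negligible, but the argument is the same.
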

\begin{proof}
By Theorem \ref{thm:kappa}, limiting $\MVP$ equals $mH_0\cdot \frac{1}{2m}=\frac{H_0}{2}$.
\end{proof}

\subsection{Asymptotic Local View}
For any $t$ and $\lambda$, since we want to evaluate $Z_{t,\lambda}$, whose value may depend on its ``neighbors'' on the dartboard, we need to understand the configurations of the cells near cell $t$. Since we consider the case where $m$ goes to infinity, we may ignore the effect of smoothing to the cells in the immediate vicinity of cell $t$.

After taking these asymptotic approximations, we can index the cells near cell $t$ as follows. 
\begin{definition}[neighbors of cell $t$]
Fix a cell $t$. Let $i\in\mathbb{Z}$ and $c\in\mathbb{R}$. The $(i,c)$-neighbor of cell $t$ is a cell whose column index differs by $i$ (negative $i$ means to the left, positive to the right) 
and has height $t+c$, it covers the vertical interval $[q^{-(t+c+1)}, q^{-(t+c)})$. 
In the sawtooth partition, $c$ is an integer when $i$ is even and a half-integer when
$i$ is odd.  (Note that we are locally ignoring the effect of smoothing.)

Once cell $t$ is fixed, define $W(i,c)$ to be the indicator for whether the $(i,c)$-neighbor of cell $t$ 
has at least one dart in it. Thus, for fixed $t,\lambda$, we have
\begin{align*}
    \Pr(W(i,c)=0)=\Pr(Y_{t+c,\lambda}=0)=e^{-\frac{\lambda}{q^{t+c}}\frac{q-1}{q}}.
\end{align*}
\end{definition}

In the asymptotic local view, we lose the property that a cell can 
be uniquely identified by its height, hence the need to refer to nearby 
cells by their position \emph{relative} to cell $t$.

\subsection{Analysis of $\Curtain$}

We first briefly state some properties of curtain. 
For any $a\geq 1$, recall that $\mathscr{O}_a = \{-(a-1/2), -(a-3/2),\ldots, -1/2, 1/2, \ldots, a-3/2,a-1/2\}$. It is easy to see that for any vector 
$v = (g_0,g_1,\ldots,g_{m-1})$, 
$v_{\operatorname{curt}} = (\hat{g}_i)$ can be expressed as
\begin{align*}
    \hat{g}_i &= \max_{j\in[0,m-1]} \{g_j - |i-j|(a-1/2)\}.
\end{align*}
For each $i$, we define the \emph{tension point} $\tau_i$ to be the 
lowest allowable value of $\hat{g}_i$, given the context of its neighboring columns.
\begin{align*}
    \tau_i &= \max_{j\in[0,m-1]\setminus\{i\}} \{g_j - |i-j|(a-1/2)\},
\end{align*}
and thus we have $\hat{g}_i=\max(g_i,\tau_i)$. We see that the column $i$ is \emph{in tension}
iff $g_i\leq \tau_i$, that is, $\hat{g}_i = \tau_i$.
% you need to add in the sawteeth offset here

\begin{theorem}\label{thm:curtain-kappa}
$\Curtain$ is scale-invariant with 
\begin{align*}
    \kappa_\Curtain = \frac{1}{ \ln q}\frac{q-1}{q} \frac{1}{\frac{q-1}{q}+\frac{2}{q^{h}(q^{a-1/2}-1)}+\frac{1}{q^{h+1}}}.
\end{align*}
\end{theorem}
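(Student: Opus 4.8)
The strategy is to plug the $\Curtain$-specific free-cell probability $\E(Z_{t,\lambda})$ into the master formula~(\ref{eq:kappa}) and evaluate the resulting integral. By Theorem~\ref{thm:kappa} and formula~(\ref{eq:kappa}), it suffices to identify $\lim_{\lambda\to\infty}\E(Z_{t,\lambda})$ for fixed $t$ (scaled appropriately), so the real content is computing the probability that cell $t$ is \emph{free} in a $\Curtain$ sketch, using the asymptotic local view. Recall cell $t$ lies in some column; write $\hat g$ for the curtain height and $\tau$ for the tension point of that column. Cell $t$ is free precisely when it is above the curtain, or it is one of the $h$ cells just below the curtain and the corresponding bit $b$ is $0$ (which, by Rules 1 and 2, happens iff that cell has no dart), \emph{unless} Rule 3 forces it occupied. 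So the event ``cell $t$ is free'' decomposes according to (a) the height of the curtain $\hat g$ relative to $t$, and (b) whether the column is in tension.

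\textbf{Key steps.} First I would set up the local picture: fix the target cell $t$ and let $G$ denote the (random, $\lambda$-dependent) height of the highest dart-containing cell in $t$'s column, measured relative to $t$; the darts in the cells of that column and in neighboring columns are independent Poisson variables with the means recorded by $W(i,c)$. Second, I would express the tension point $\tau$ and hence $\hat g = \max(G,\tau)$ in terms of the neighboring columns via $\tau = \max_{j\ne i}\{g_j - |i-j|(a-1/2)\}$; as $\lambda\to\infty$ the relevant statistic is how far above cell $t$ the curtain sits, and the geometric decay of cell masses (factor $1/q$ per unit height) means the contributions from columns at distance $j$ enter with weight $q^{-j(a-1/2)}$. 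Third, I would write $\E(Z_{t,\lambda}) = \sum (\text{curtain-height cases})$: if the curtain is strictly above cell $t$ by at least $1$, cell $t$ is below the curtain, so it is free iff it is within distance $h$ of $\hat g$ \emph{and} empty \emph{and} not pinned by Rule 3; if the curtain is at or below cell $t$, cell $t$ is free iff it is empty (it's at or above the curtain). Combining, $\E(Z_{t,\lambda})$ becomes a sum of a ``$\qPCSA$-like'' term $e^{-\frac{\lambda}{q^t}\frac{q-1}{q}}$ (cell $t$ empty, curtain not above it) plus correction terms of the form $\frac{1}{q^{h}}$-weighted and $\frac{1}{q^{h+1}}$-weighted pieces coming from the $h$ cells below the curtain and the boundary case, plus a $\frac{2}{q^{h}(q^{a-1/2}-1)}$ term from the geometric series over the two directions in which the curtain can be anchored by a neighbor (the ``$2$'' is left/right, the $q^{a-1/2}-1$ is the geometric sum $\sum_{j\ge1} q^{-j(a-1/2)}$). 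Fourth, integrate against $(q^{-t}-q^{-t-1})\,dt$ over $\mathbb{R}$ using the Lemma's identity $\int_{-\infty}^\infty \frac{c_0}{q^t}e^{-c_1/q^t}\,dt = \frac{c_0}{c_1}\frac{1}{\ln q}$; every term in $\E(Z_{t,\lambda})$ has the form $(\text{const})\cdot e^{-(\text{const})\lambda/q^t}$ after the dust settles, so each integrates to a clean constant, the $\lambda$'s cancel in the $\lambda\to\infty$ limit, and assembling them gives exactly
\[
\kappa_\Curtain = \frac{1}{\ln q}\frac{q-1}{q}\cdot\frac{1}{\frac{q-1}{q}+\frac{2}{q^{h}(q^{a-1/2}-1)}+\frac{1}{q^{h+1}}}.
\]
Finally, scale-invariance (Definition~\ref{def:scale-invariant}) follows because the $m$ columns are asymptotically i.i.d.\ after smoothing, so $\lambda P_{m,\lambda m}$ concentrates by a law-of-large-numbers argument exactly as in the $\textsf{MinCount}$ case, and $\kappa_\lambda \to \kappa$ since the extra mass from the $t<0$ region vanishes as in the footnote to~(\ref{eq:kappa_lambda}).

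\textbf{Main obstacle.} The delicate part is step two/three: correctly handling the \emph{tension} case and the interaction between ``the curtain is pinned by a distant neighbor'' and ``cell $t$'s own occupancy.'' One must be careful that the curtain height $\hat g$ and the darts in cell $t$ are \emph{not} independent when $t$ is within distance $h$ of the curtain (since $\hat g = \max(G,\tau)$ and $G$ depends on $t$'s column), and that Rule 3 vs.\ Rules 1--2 split the free/occupied determination differently depending on tension. Getting the combinatorics of which cells are ``free'' right — in particular teasing apart the $\frac{2}{q^h(q^{a-1/2}-1)}$ contribution (curtain anchored above cell $t$ by a neighbor, so cell $t$ is genuinely below the curtain) from the $\frac{1}{q^{h+1}}$ contribution (the boundary cell at the curtain itself, or the lowest of the $h$ encoded cells) — is where all the real work lies; the integration afterward is mechanical.
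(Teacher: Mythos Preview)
Your overall framework is right---compute $\E(Z_{t,\lambda})$ in the asymptotic local view and plug into formula~(\ref{eq:kappa})---and the integration step is exactly as you describe. But you are overcomplicating the computation of $\E(Z_{t,\lambda})$, and in doing so you miss the key simplification that dissolves your ``main obstacle'' entirely.

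The paper does not case-split on the curtain height $\hat g$ or on whether the column is in tension. Instead it observes that ``cell $t$ is free'' is equivalent to the conjunction of three events:
\begin{enumerate}
\item cell $t$ contains no dart ($Y_{t,\lambda}=0$);
\item the tension point $T_1$ of $t$'s column satisfies $T_1 \le t + h - 1$;
\item the highest dart $W_1(0)$ in $t$'s own column satisfies $W_1(0) \le t + h$.
\end{enumerate}
One checks this directly from Rules 1--3: in the non-tension case ($\hat g = W_1(0) > T_1$) the $b$-encoded free cells are $\hat g - 1,\dots,\hat g - h$, so ``free and below curtain'' means $t \ge W_1(0)-h$; in the tension case ($\hat g = T_1 \ge W_1(0)$) the $b$-encoded cells are $\hat g,\dots,\hat g-(h-1)$, so it means $t \ge T_1-(h-1)$. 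In each case the other inequality is automatically implied, so the conjunction of (ii) and (iii) covers both cases uniformly, and condition (i) handles the ``above curtain'' case too (there are no darts above $\hat g$). Crucially, these three events are \emph{independent}: (i) depends only on cell $t$; (ii) depends only on the other columns; (iii) depends only on cells at height $> t+h$ in $t$'s column. Hence
\[
\E(Z_{t,\lambda}) \;=\; e^{-\frac{\lambda}{q^t}\frac{q-1}{q}}\cdot e^{-\frac{\lambda}{q^{t+h}}\frac{2}{q^{a-1/2}-1}}\cdot e^{-\frac{\lambda}{q^{t+h+1}}}
\;=\; \exp\!\left(-\frac{\lambda}{q^t}\left(\frac{q-1}{q}+\frac{2}{q^h(q^{a-1/2}-1)}+\frac{1}{q^{h+1}}\right)\right),
\]
a \emph{single} exponential, not a sum. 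One application of the integral lemma finishes.

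Your plan to write $\E(Z_{t,\lambda})$ as a sum of exponentials indexed by curtain position could in principle be pushed through, but your sketch of the first term (``$\qPCSA$-like term $e^{-\frac{\lambda}{q^t}\frac{q-1}{q}}$, cell $t$ empty, curtain not above it'') is already off: the event $\{\hat g \le t\}$ is not independent of $\{Y_{t,\lambda}=0\}$ in the way you need, since $\hat g \le t$ also constrains $T_1$ and $W_1(0)$. The independence structure above is what makes the dependence concerns in your ``main obstacle'' paragraph evaporate.
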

\begin{proof}
Fix cell $t$ and $\lambda$. Define $W_1(k)$ to be the height of the highest cell 
containing darts in the column $k$ away from $t$'s column.
I.e., define $\iota = \Indicator{k \mbox{ is odd}}/2$ to be 1/2 if $k$ is odd and zero if $k$ is even,
and
$W_1(k) \bydef \max\{t+i+\iota \mid i\in \mathbb{Z} \mbox{ and } W(k,i+\iota)=1\}$.  

We have for any $i\in\mathbb{Z}$,
\[
\Pr(W_1(k)\leq t+i+\iota)=\prod_{j=1}^\infty \Pr(W(k,i+j+\iota)=0)=e^{-\frac{\lambda}{q^{t+1+i+\iota}}}.
\]
Let $T_1$ be the tension point of the column of cell $t$, which equals $\displaystyle \max_{j\in\mathbb{Z}\setminus\{0\}} \{W_1(j) - |j|(a-1/2)\}$. We have for any $i\in\mathbb{Z}$,
\begin{align*}
    \Pr\left(T_1\leq i+t\right) &= \Pr\left(\max_{j\in\mathbb{Z}\setminus\{0\}}\{W_1(j)-|j|(a-1/2)\}\leq i+t\right) \\
    &=\prod_{j\in\mathbb{Z}\setminus\{0\}}\Pr(W_1(j) - |j|(a-1/2)\leq i+t) \\
    &= \left(\prod_{j=1}^\infty e^{-\lambda \frac{1}{q^{t+i+1+j(a-1/2)}}}\right)^2
    =e^{-\lambda \frac{2}{q^{t+i+1}} \frac{1}{q^{a-1/2}-1}}.
\end{align*}

From the rules of $\Curtain$, we know that a cell is free iff it contains no dart,
it is at most $h-1$ below its column's tension point, 
and at most $h$ below the highest cell in its column containing darts. Thus,
\begin{align*}
    Z_{t,\lambda} = \Indicator{Y_{t,\lambda}=0} \cdot \Indicator{t\geq T_1-(h-1)} \cdot \Indicator{t\geq W_1(0)-h},
\end{align*}
Note that $T_1$ is independent from $Y_{t,\lambda}$ and $W_1(0)$. In addition, $Y_{t,\lambda}$ is also independent from $\Indicator{t\geq W_1(0)-h}$, since the latter only depends on $Y_{t',\lambda}$ with $t'\geq h+t+1$.
Thus, we have
\begin{align*}
    \mathbb{E}( Z_{t,\lambda} )
    &=\Pr(Y_{t,\lambda}=0)\cdot \Pr(T_1\leq t+h-1)\cdot \Pr(W_1(0)\leq t+h)\\
    &=e^{-\frac{\lambda}{q^{t}}\frac{q-1}{q}} e^{-\lambda \frac{2}{q^{t+h}} \frac{1}{q^{a-1/2}-1}}e^{-\frac{\lambda}{q^{t+h+1}}}\\
    &=\exp\left({-\frac{\lambda}{q^{t}}} \left(\frac{q-1}{q}+\frac{2}{q^{h}(q^{a-1/2}-1)}+\frac{1}{q^{h+1}}\right)  \right).
\end{align*}
Thus by formula (\ref{eq:kappa}), we have
\begin{align*}
    \kappa_\Curtain&=\lim_{\lambda\to\infty}\lambda\int_{-\infty}^\infty \left(\frac{1}{q^{t}}-\frac{1}{q^{t+1}}\right)\exp\left({-\frac{\lambda}{q^{t}}} \left(\frac{q-1}{q}+\frac{2}{q^{h}(q^{a-1/2}-1)}+\frac{1}{q^{h+1}}\right)  \right) dt\\
    &=\frac{1}{\ln q}\frac{q-1}{q} \frac{1}{\frac{q-1}{q}+\frac{2}{q^{h}(q^{a-1/2}-1)}+\frac{1}{q^{h+1}}}.
\end{align*}
\end{proof}

\section{Optimality of \Martingale{} \fishmonger}
\label{sect:optimality}
\Martingale{} sketches have several attractive properties, e.g., being strictly unbiased and insensitive to duplicate elements in the data stream.  
In Section~\ref{sect:martingale-optimality} 
we argue that any sketch
that satisfies these natural assumptions
can be systematically transform into 
a \Martingale{} \textsf{X} sketch with equal or 
lesser variance, where 
\textsf{X} is a dartboard sketch.  In other words,
the \Martingale{} \underline{\emph{transform}} is optimal.

In Section~\ref{sect:fishmonger-optimal} we prove that within the class of \emph{linearizable} dartboard sketches, 
\Martingale{} \fishmonger{} is optimal.
The class of linearizable sketches is broad and 
includes state-of-the-art sketches,
which lends strong \emph{circumstantial} evidence 
that the memory-variance product 
of \Martingale{} \fishmonger{} cannot be improved.

\subsection{Optimality of the \Martingale{} Transform}\label{sect:martingale-optimality}

Consider a non-mergeable sketch processing a stream 
$\mathcal{A}=(a_1,a_2,\ldots)$.  Let $S_i$ be its state
after seeing $(a_1,\ldots,a_i)$,
$\lambda_i = |\{a_1,\ldots,a_i\}|$,
and $\hat{\lambda}(S_i)$ be the estimate of cardinality $\lambda_i$
when in state $S_i$.  We make the following natural assumptions.
\begin{description}
\item[Randomness.] The random oracle $h$ is the only source of 
randomness.  In particular, 
$S_i$ is a function of $(h(a_1),h(a_2),\ldots,h(a_i))$.
\item[Duplicates.] If $a_i \in \{a_1,\ldots,a_{i-1}\}$,
$S_i = S_{i-1}$, i.e., duplicates do not trigger state transitions.
\item[Unbiasedness.] Suppose one examines the data structure
at time $i$ and sees $S_i=\mathbf{s}_i$ and then examines
it at time $j$.  Then $\hat{\lambda}(S_j)-\hat{\lambda}(\mathbf{s}_i)$
is an unbiased estimate of $\lambda_j-\lambda_i$.
\end{description}

\begin{definition}
The \emph{state history} at time $i$, 
denoted 
$W_i = (S_0,S_{j_1},S_{j_2},\ldots,S_{j_\ell}=S_i)$,
lists all the \emph{distinct} states
encountered when processing $(a_1,\ldots,a_i)$.
Note that $\ell, \{j_k\}_{1\leq k\leq \ell}$,
and $W_i$ are all random variables.  When we want
to fix a particular state-history $w_i$ we write $W_i=w_i$.
\end{definition}

The \emph{duplicates} and \emph{randomness} assumptions imply that the distribution of $S_i$ and $W_i$ depends only on $\lambda_i$.
Thus, we henceforth assume  
for simplicity that there are 
no duplicates and that $\lambda_i = i$.

\medskip 

Suppose that the algorithm could magically make its 
cardinality estimates based not just on $S_i$,
but the entire state history $W_i$.
Let $\mathcal{H}$ be the (countably infinite) 
set of all histories and 
$\mathbf{H}\in [0,1]^{\mathcal{H}\times\mathcal{H}}$ 
be the stochastic matrix governing
transition between histories.\footnote{I.e., if $w'=(S_0,\ldots,S,S')$
and $w$ is the prefix of $w'$ missing $S'$, then $\mathbf{H}(w,w')$
is the probability that the next (distinct) element would cause
the sketch to transition from $S$ to $S'$.}
Let $e_w \in [0,1]^{\mathcal{H}}$ 
be the probability distribution 
that puts unit probability on history $w$.\footnote{I.e., 
$e_w(w) = 1$ and $e_w(w')=0$ for all $w'\neq w$.}
Let $\hat{\lambda} \in \mathbb{R}^{\mathcal{H}}$
be the vector of cardinality estimates
at each history-state.
From the \emph{Randomness},
\emph{Duplicates}, and 
\emph{Unbiasedness}
assumptions, it follows that if we observe
that $W_i=w_i$, then
\[
e_{w_i}^{\top}\, \mathbf{H}^{j-i}\, \hat{\lambda} = j-i.
\]
is the expectation of $\hat{\lambda}(W_j)-\hat{\lambda}(w_i)$.
Here $e_{w_i}^{\top}\, \mathbf{H}^{j-i}$
is the distribution of the history-state $W_j$ conditioned
on $W_i=w_i$.
In the special case 
that $j=i+1$, we have
\begin{align*}
\E(\hat{\lambda}(W_{i+1}) \mid W_i=w_i)
&=
\hat{\lambda}(w_i) + \E(\hat{\lambda}(W_{i+1})-\hat{\lambda}(w_i) \mid W_{i+1}\neq w_i)\cdot (1-\mathbf{H}(w_i,w_i))\\
\intertext{and due to the \emph{Unbiased} assumption this must be}
&=
\hat{\lambda}(w_i)+1.
\end{align*}
Hence, for any $w_i$,
\begin{align}
\E(\hat{\lambda}(W_{i+1})-\hat{\lambda}(w_i) \mid W_{i+1}\neq w_i)
= (1-\mathbf{H}(w_i,w_i))^{-1}.\label{eqn:opt1}
\end{align}
Phrased differently, the \emph{Unbiased} assumption implies that 
$(X_i)$ is a martingale, where
$X_i = \hat{\lambda}(W_i)-i$.
Define $Z_i = X_i-X_{i-1}$.
Because $(X_i)$ is a martingale the covariances
of the $(Z_i)$ are all zero.  We have
\begin{align}
    \Var(X_i) &= \sum_{j=1}^{i} \Var(Z_j)\nonumber\\
        &= \sum_{j=1}^{i} \left(\E(\Var(Z_j^{\ } \mid W_{j-1})) + \Var(\E(Z_j \mid W_{j-1}))\right)\nonumber
\intertext{Observe that $\E(Z_j \mid W_{j-1})=0$, so this is}
        &= \sum_{j=1}^i \E(\Var(Z_j \mid W_{j-1}))\nonumber\\
        &= \sum_{j=1}^i \sum_{w_{j-1}} \Pr(W_{j-1} = w_{j-1})\cdot \E\left(\left(\hat{\lambda}(W_j) - \hat{\lambda}(w_{j-1})-1\right)^2\right)\label{eqn:opt2}
\intertext{Note that the expression inside the expectation in (\ref{eqn:opt2}) 
is constant when $W_j=w_{j-1}$, which holds with probability
$c_0=\mathbf{H}(w_{j-1},w_{j-1})$.  
Let $c_1,c_2,\ldots$ be other constants that depend only on $w_{j-1}$.  
Continuing, this is equal to}
        &= \sum_{j=1}^i \sum_{w_{j-1}} \left(\E\left(\left(\hat{\lambda}(W_j) - \hat{\lambda}(w_{j-1})-1\right)^2 \;\middle|\; W_j\neq w_{j-1}\right)\cdot c_1 + c_2\right)\label{eqn:opt3}
\intertext{By (\ref{eqn:opt1}), $\E\left(\hat{\lambda}(W_j) - \hat{\lambda}(w_{j-1})-1\right) = (1-\mathbf{H}(w_{j-1},w_{j-1}))^{-1}-1$, which
also depends only on $w_{j-1}$, hence
(\ref{eqn:opt3}) is equal to}
        &= \sum_{j=1}^i \sum_{w_{j-1}} \left(\Var\left(\hat{\lambda}(W_j) - \hat{\lambda}(w_{j-1}) \;\middle|\; W_j\neq w_{j-1}\right)\cdot c_3 + c_4\right).\label{eqn:opt4}
\end{align}

At this point we can ask which
estimate vector $\hat{\lambda}$
minimizes (\ref{eqn:opt4}).
The variances in (\ref{eqn:opt4}) 
are non-negative, and it is 
possible to make them all zero,
subject to (\ref{eqn:opt1}), 
by setting 
\begin{align}
\hat{\lambda}(w_j)
= 
\hat{\lambda}(w_{j-1}) + (1-\mathbf{H}(w_{j-1},w_{j-1}))^{-1}\label{eqn:opt5}
\end{align}
for every $w_j$ such that $\mathbf{H}(w_{j-1},w_j)>0$.
Note that the transitions
in $\mathbf{H}$ that occur with 
non-zero probability, 
excluding self-loops, 
form a directed arborescence (out-tree) rooted at 
the initial history $(S_0)$.  
Thus, all the constraints of the form
(\ref{eqn:opt5}) can be satisfied 
simultaneously.

To recapitulate, as a consequence of the \emph{Randomness}, \emph{Duplicates}, and \emph{Unbiased} assumptions, the
\Martingale{} estimator has minimum variance. Define the \emph{$h$-state} of a sketch state $S$, denoted $\hS$,
to be the set of hash values, that, if encountered, would cause no state transition.
Then we can write $S_i$ as $(\hat{\lambda}_i,\hS_i,\lS_i)$, 
where $\hat{\lambda}_i$ is the \Martingale{} estimate (which depends on the history), 
and $\lS_i$ is any leftover state information not implied by $\hS_i$ and $\hat{\lambda}_i$.
We have shown that $\hat{\lambda}_i$ is
the only information from the history 
useful for making cardinality estimates.
Thus, the data structure is free to 
change $\lS_i$ to any value
consistent with $\hS_i$ at will, 
and therefore $\lS_i$ should
not be stored at all.
In other words, we can simply store the 
state $S_i$ as $(\hat{\lambda}_i,\hS_i)$ and impute any $\lS_i$ which is most advantageous.\footnote{In particular, if $a_{i+1}$ is such that it would cause
$(\hat{\lambda}_i,\hS_i,\lS_i)$ 
to become
$(\hat{\lambda}_{i+1},\hS_{i+1},\lS_{i+1})$
and cause
$(\hat{\lambda}_i,\hS_i,S^{\operatorname{h}\prime}_i)$
to become
$(\hat{\lambda}_{i+1},S^{\operatorname{h}\prime}_{i+1},S^{\operatorname{l}\prime\prime}_{i+1})$, then we are free to choose our next state to be 
$(\hat{\lambda}_{i+1},\hS_{i+1})$
or $(\hat{\lambda}_{i+1},S^{\operatorname{h}\prime}_{i+1})$,
whichever is more advantageous.
As variances improve when $|\hS|$ is small, we would choose the one minimizing 
$|\hS_{i+1}|,|S^{\operatorname{h}\prime}_{i+1}|$.}
  Note that 
since $(\hS_i)$ is a dartboard sketch,\footnote{(occupied cells = hash values that cause no transition)}
$(\hat{\lambda}_i,\hS_i)$ 
is derived by a $\Martingale$ transform
and is not worse than the original sketch $(S_i)$.

\begin{rem}
We should note that under some circumstances it is possible to achieve smaller variance by violating the \emph{duplicates} and \emph{unbiasedness} assumptions. 
For example, suppose the sketch state after seeing $i$ elements were $(\hat{\lambda}_i,S_i,i)$.
If the stream is duplicate-heavy, ``$i$'' carries no useful information, but if nearly all elements are distinct, $i$ is also a good cardinality estimate.  
Since $\lambda_i \leq i$, the cardinality estimate $\min\{\hat{\lambda}_i,i\}$ 
is never worse than $\hat{\lambda}_i$ alone,
but when $\lambda_i\approx i$, it is biased and has a constant factor lower variance.  
\end{rem}

\subsection{Optimality of \Martingale{} \fishmonger{}}\label{sect:fishmonger-optimal}

Given an abstract linearizable sketching scheme \textsf{X}, 
its space is minimized by compressing it to its entropy.
On the other hand, by Theorem~\ref{thm:kappa} 
the variance of \Martingale{} \textsf{X}
is controlled by the normalized expected 
probability of changing state: 
$2\lambda\cdot \E(P_\lambda)$.  Theorem~\ref{thm:core}
lower bounds the ratio of these two quantities for any
sketch that behaves well over a sufficiently large interval
of cardinalities $\lambda \in [e^a,e^b]$.
The proof technique is very similar to~\cite{PettieW21},
as is the take-away message (that \textsf{X}=\fishmonger{} is optimal up to some assumptions).  
However, the two proofs are mathematically distinct 
as \cite{PettieW21} focuses on Fisher information while 
Theorem~\ref{thm:core} focuses on the \emph{probability of state change}.

\begin{theorem}\label{thm:core}
Fix reals $a<b$ with $d=b-a>1$. Let $\Bar{H},\Bar{R}>0$. For any linearizable sketch, let $H(\lambda)$ be the entropy of its state and   $P_\lambda$ be the probability of state change\footnote{The probability of state change $P_\lambda$ is itself a random variable.} at cardinality $\lambda$ satisfies that
\begin{enumerate}
    \item for all $\lambda >0$, $H(\lambda)\leq \Bar{H}$, and
    \item for all $\lambda \in [e^a,e^b]$, $2\lambda \E(P_\lambda) \geq \Bar{R}$,
\end{enumerate}
then 
\begin{align*}
    \frac{\Bar{H}}{\Bar{R}}\geq \frac{H_0}{2} \frac{1-\max(8d^{-1/4},5e^{-d/2})}{1+\frac{(344+4\sqrt{d})}{d}\frac{H_0}{I_0} \left(1-\max(8d^{-1/4},5e^{-d/2})\right)} = \frac{H_0}{2}(1-o_d(1)).
\end{align*}
\end{theorem}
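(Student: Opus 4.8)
The plan is to reduce the memory–variance bound to a one–cardinality statement about a well–chosen $\lambda^{\star}\in[e^{a},e^{b}]$, and to prove that statement by an interval–averaging argument in the style of~\cite{PettieW21}. First I would record the three relevant functionals of a linearizable sketch. Writing $u_{i}$ for the area of cell $c_{i}$, $\ell_{i}=-\ln u_{i}$, and $w_{i}(\lambda)=\Pr(\phi(\mathbf{Y}_{i-1})=0)$ for the probability that $c_{i}$ is not ``forced'' at cardinality $\lambda$, the chain rule for entropy together with the recursion $Y_{i}=Z_{i}\vee\phi(\mathbf{Y}_{i-1})$ and the independence of $Z_{i}$ from $\mathbf{Y}_{i-1}$ (random oracle model, Poissonized) gives, with $h(x)=-x\log_{2}x-(1-x)\log_{2}(1-x)$ the binary entropy and $\mathcal{I}$ the Fisher information of the state,
\[
H(\lambda)=\sum_{i}w_{i}(\lambda)\,h\!\paren{1-e^{-\lambda u_{i}}},\qquad
\E(P_{\lambda})=\sum_{i}w_{i}(\lambda)\,u_{i}e^{-\lambda u_{i}},\qquad
\lambda^{2}\mathcal{I}(\lambda)=\sum_{i}w_{i}(\lambda)\,\frac{(\lambda u_{i})^{2}}{e^{\lambda u_{i}}-1},
\]
and crucially $w_{i}(\lambda)\in[0,1]$ is \emph{non-increasing} in $\lambda$ (monotonicity of $\phi$ plus a coupling of the dart processes).

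Next I would put everything on a common logarithmic scale. Substituting $v=\lambda u_{i}$ and $r=\ln v$ turns $\int_{a}^{b}H(e^{s})\,ds$ into $\sum_{i}\int_{I_{i}}\tilde w_{i}(r)\,\eta(r)\,dr$ and $\int_{e^{a}}^{e^{b}}\E(P_{\lambda})\,d\lambda$ into $\sum_{i}\int_{I_{i}}\tilde w_{i}(r)\,\rho(r)\,dr$, where $I_{i}=[a-\ell_{i},\,b-\ell_{i}]$, $\tilde w_{i}(r)=w_{i}(e^{r+\ell_{i}})$, $\eta(r)=h(1-e^{-e^{r}})$ and $\rho(r)=e^{r}e^{-e^{r}}$; the identities $\int_{\mathbb{R}}\eta=\int_{0}^{\infty}h(1-e^{-v})\,\tfrac{dv}{v}=H_{0}$ (this is precisely the closed form of $H_{0}$, the per-column entropy of base-$e$ $\PCSA$) and $\int_{\mathbb{R}}\rho=\int_{0}^{\infty}e^{-v}\,dv=1$ are the reason the target constant is $H_{0}/2$. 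The analogous substitution for Fisher information yields $\sigma(r)=e^{2r}/(e^{e^{r}}-1)$ with $\int_{\mathbb{R}}\sigma=\int_{0}^{\infty}v/(e^{v}-1)\,dv=I_{0}=\zeta(2)$.

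The heart of the argument is to produce a cardinality $\lambda^{\star}\in[e^{a},e^{b}]$ at which $H(\lambda^{\star})\ge\frac{H_{0}}{2}\cdot 2\lambda^{\star}\E(P_{\lambda^{\star}})\cdot(1-o_{d}(1))$; then $\bar H\ge H(\lambda^{\star})$ and $\bar R\le 2\lambda^{\star}\E(P_{\lambda^{\star}})$ give the conclusion. I would find $\lambda^{\star}$ by averaging over $s=\ln\lambda$ after trimming a margin of width $\Theta(\sqrt{d})$ at each end of $[a,b]$ (this produces the $\sqrt{d}$ in the stated error), since over the trimmed interval the relevant cells have $I_{i}\supseteq[-K,K]$ with $K$ large, so their $\eta$- and $\rho$-integrals agree with $\int_{\mathbb{R}}$ up to the exponentially small tails of $\eta,\rho$ (giving the $e^{-d/2}$ term). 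The comparison of $\eta$ with $H_{0}\rho$ is not pointwise—$\eta(r)/\rho(r)$ is strictly decreasing from $+\infty$ down to $1/\ln 2<H_{0}$, so it dips below $H_{0}$—but its monotonicity, combined with $\tilde w_{i}$ being non-increasing, lets the surplus on $\{\eta\ge H_{0}\rho\}$ pay for the deficit on $\{\eta<H_{0}\rho\}$ column by column. The factor $H_{0}/I_{0}$ enters because part of the estimate is routed through Fisher information: I would invoke the entropy-versus-Fisher bound of~\cite{PettieW21} (already $\bar H\gtrsim\frac{H_{0}}{I_{0}}\cdot(\text{Fisher-type functional})$ up to $O(1/d)$ errors), pair it with the elementary pointwise inequality $\lambda\E(P_{\lambda})\le\lambda^{2}\mathcal{I}(\lambda)$ (equivalently $ve^{-v}\le v^{2}/(e^{v}-1)$) plus a matching lower bound for the Fisher functional, and note that when the $O(1/d)$ Fisher errors are converted back into entropy they acquire the coefficient $H_{0}/I_{0}$, producing the denominator $1+\frac{344+4\sqrt{d}}{d}\frac{H_{0}}{I_{0}}(1-\cdots)$. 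The averaging can fail to yield a good $\lambda^{\star}$ only for a ``narrow-band'' sketch, where $\lambda\E(P_{\lambda})$ is concentrated on a sub-window of $[e^{a},e^{b}]$; but then $\bar R=\min_{\lambda\in[e^{a},e^{b}]}2\lambda\E(P_{\lambda})$ is forced to be exponentially small while $\bar H$ is not, so $\bar H/\bar R\ge H_{0}/2$ holds with room to spare.

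The step I expect to be the main obstacle is exactly this central averaging/transfer estimate: controlling the total deficit $\sum_{i}\int_{I_{i}\cap\{\eta<H_{0}\rho\}}\tilde w_{i}(H_{0}\rho-\eta)$ against the available surplus \emph{without} losing a constant factor, given that the number of cells at each scale is unbounded, that cell sizes are arbitrary (not tied to any exponential partition), and that the narrow-band case must be disposed of in the same stroke. This is the analogue of the technical core of~\cite{PettieW21}, and I expect the bookkeeping—and the ungainly constant $344$—to arise from chaining the crude tail and boundary estimates there.
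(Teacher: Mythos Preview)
Your plan is essentially correct and tracks the same argument as the paper's, but the paper's execution is far more economical. Rather than re-deriving the whole averaging/transfer machinery, the paper simply observes that the entire proof of Theorem~5 in~\cite{PettieW21} uses only two abstract properties of the Fisher kernel $\dot I(t)=t^2/(e^t-1)$, namely that $\dot H(t)/\dot I(t)$ is decreasing and that $\dot I(t)\le 4e^{-t/2}$; it then checks that the state-change kernel $\dot R(t)=2te^{-t}$ satisfies the same two properties and invokes~\cite{PettieW21} verbatim, with $\int_{\mathbb R}\dot R(e^x)\,dx=2$ replacing $\int_{\mathbb R}\dot I(e^x)\,dx=I_0$ in the leading constant. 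Your monotonicity-plus-nonincreasing-$\tilde w_i$ surplus/deficit argument is exactly the content of the first of these two properties, and your tail trimming is the second, so substantively you are reconstructing the same proof.

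Two small course corrections. First, the ``find a single $\lambda^\star$'' framing is an unnecessary detour: since $H(e^x)\le\bar H$ for all $x$ and $2e^x\E(P_{e^x})\ge\bar R$ on $[a,b]$, one has directly $\bar H/\bar R\ge \mathbf H(\mathscr C\to[a,b])/\mathbf R(\mathscr C\to[a,b])$, and it is this ratio of integrals that one lower-bounds; no pigeonhole is needed. Second, there is no need to route anything through Fisher information. The appearance of $H_0/I_0$ in the error denominator is not because the proof compares $\rho$ to $\sigma$ pointwise and then invokes the entropy--Fisher bound; it is an artifact of the explicit constants in the error analysis of~\cite{PettieW21}, which carry over unchanged when $\dot R$ replaces $\dot I$. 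Using the inequality $\lambda\E(P_\lambda)\le\lambda^2\mathcal I(\lambda)$ on the main term would in fact lose the constant (you would get $H_0/(2I_0)$ rather than $H_0/2$), so it is fortunate that it is not needed anywhere.
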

\begin{proof}
By the assumptions of the theorem, we have $\frac{\int_a^b H(e^x) dx }{2\int_a^b e^x\cdot \E(P_{e^x}) dx}\leq \frac{\Bar{H}}{\Bar{R}}$. Thus it suffices to prove that 
\begin{align*}
    \frac{\int_a^b H(e^x) dx }{2\int_a^b e^x\cdot \E(P_{e^x}) dx}\geq \frac{H_0}{2}(1-o_d(1)).
\end{align*}
Now we will write the expressions in terms of the cells.

Let $\mathscr{C}$ be the of cells. By linearizability, we can write cells as $c_0,c_1,\ldots,c_{|\mathscr{C}|-1}$, where $c_i$ has area $p_i$. Let $Z_i$ be the indicator whether $c_i$ is hit by a dart and $Y_i$ be the indicator whether $c_i$ is occupied. Let $F_i=(Y_0,\ldots,Y_i)$. Since it is linearizable, there is some monotone function $\phi:\{0,1\}^*\to \{0,1\}$ such that $Y_i=Z_i\lor \phi(F_{i-1})$. Assume poissonization\footnote{A cell of size $p$ will have probability $e^{-p\lambda}$ to be without a dart at cardinality $\lambda$.}, by Lemma 13 in \cite{PettieW21}, we can write
\begin{align*}
   H(\lambda)=\sum_{i=0}^{|\mathscr{C}|-1}H_B(e^{-p_i\lambda})\Pr(\phi(F_{i-1,\lambda})=0).
   \end{align*}
 Then by the linearity of expectation, we have
   \begin{align*}
\lambda\cdot \E(P_\lambda) = \lambda \sum_{i=0}^{|\mathscr{C}|-1} p_i \Pr(Z_i=0)\Pr(\phi(F_{i-1},\lambda)=0)= \sum_{i=0}^{|\mathscr{C}|-1}p_i\lambda e^{-p_i\lambda}\Pr(\phi(F_{i-1,\lambda})=0).
\end{align*}

For clear presentation, we introduce the following definitions.
\begin{definition}\label{def:bfHR}
Fix a linearizable sketch.
Let $C\subset \mathscr{C}$ be a collection of cells and $W\subset \mathbb{R}$ be an interval
of the reals. Define:
\begin{align*}
    \mathbf{H}(C\to W) &= \int_W \sum_{c_i\in C} \dot{H}(p_ie^x)\Pr(\phi(\mathbf{Y}_{i-1,e^x})=0) dx,\\
    \mathbf{R}(C\to W) &= \int_W \sum_{c_i\in C} \dot{R}(p_ie^x)\Pr(\phi(\mathbf{Y}_{i-1,e^x})=0) dx,
\end{align*}
where 
\begin{align*}
    \dot{H}(t)&= H_B(e^{-t})=\frac{1}{\ln 2}(te^{-t}-(1-e^{-t})\ln (1-e^{-t})),\\
    \dot{R}(t)&=2t e^{-t}.
\end{align*}
Now we can write $\int_a^b H(e^x) dx$ as $\mathbf{H}(\mathscr{C}\to [a,b])$ and $2\int_a^b e^x\cdot \E(P_{e^x}) dx$ as $\mathbf{R}(\mathscr{C}\to [a,b])$. 
\end{definition}
Note that $\int_{-\infty}^\infty \dot{R}(e^{x}) dx = 2$ and it is proved in \cite{PettieW21} that $\int_{-\infty}^\infty \dot{H}(e^{x}) dx = H_0$.\footnote{$H_0=(\ln 2)^{-1}+\sum_{k=1}^\infty k^{-1}\log_2(1+1/k)$.} Thus we want to prove  $\frac{ \mathbf{H}(\mathscr{C}\to [a,b])}{\mathbf{R}(\mathscr{C}\to [a,b])} \geq \frac{\int_{-\infty}^\infty \dot{H}(e^{x}) dx}{\int_{-\infty}^\infty \dot{R}(e^{x}) dx}(1-o_d(1))$. A similar statement is proved in Theorem 5 in \cite{PettieW21} where it is showed that $\frac{ \mathbf{H}(\mathscr{C}\to [a,b])}{\mathbf{I}(\mathscr{C}\to [a,b])} \geq \frac{\int_{-\infty}^\infty \dot{H}(e^{x}) dx}{\int_{-\infty}^\infty \dot{I}(e^{x}) dx}(1-o_d(1))$ and $\mathbf{I}$ and $\dot{I}$ are defined as follows.
\begin{align*}
    \mathbf{I}(C\to W) = \int_W \sum_{c_i\in C} \dot{I}(p_ie^x)\Pr(\phi(\mathbf{Y}_{i-1,e^x})=0) dx,
\end{align*}where $\dot{I}(t)=\frac{t^2}{e^t-1}$. Note that the only difference between this theorem and Theorem 5 in \cite{PettieW21} is between $\dot{R}$ and $\dot{I}$. However, one can verify that $\dot{R}(t)$ satisfies all the properties (see the lemma below) it is used for $\dot{I}(t)$ in the proof of Theorem 5 in \cite{PettieW21}. Thus the similar lower bound is obtained here. 
\end{proof}

\begin{lemma}
The following statements are true. 
\begin{enumerate}
    \item $\frac{\dot{H}(t)}{\dot{R}(t)}$ is decreasing in $t$ on $(0,\infty)$. This corresponds to Lemma 12 in \cite{PettieW21}.
    \item $\dot{R}(t)\leq 4e^{-t/2}$ for all $t>0$. This corresponds to Lemma 20 in \cite{PettieW21}. 
\end{enumerate}
\end{lemma}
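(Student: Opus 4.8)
The plan is to verify the two statements by direct calculus, in parallel with Lemmas~12 and 20 of~\cite{PettieW21}. Statement~(2) is the easier one: multiplying through by $e^{t/2}/2$, the claim $\dot{R}(t)=2te^{-t}\le 4e^{-t/2}$ is equivalent to $te^{-t/2}\le 2$ for all $t>0$. Writing $g(t)=te^{-t/2}$, we have $g'(t)=e^{-t/2}(1-t/2)$, so $g$ increases on $(0,2)$ and decreases on $(2,\infty)$, whence $\max_{t>0}g(t)=g(2)=2/e<2$. Consequently $\dot{R}(t)=2g(t)e^{-t/2}\le(4/e)e^{-t/2}\le 4e^{-t/2}$, which is in fact slightly stronger than what is asked. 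I expect no difficulty here.

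For statement~(1) the key move is the substitution $u=e^{-t}$, a decreasing bijection of $(0,\infty)$ onto $(0,1)$, under which $te^{-t}=-u\ln u$ and hence
\[
\frac{\dot{H}(t)}{\dot{R}(t)}
=\frac{\tfrac{1}{\ln 2}\big(-u\ln u-(1-u)\ln(1-u)\big)}{-2u\ln u}
=\frac{1}{2\ln 2}\Big(1+f(u)\Big),
\qquad
f(u)\bydef\frac{(1-u)\big(-\ln(1-u)\big)}{u\big(-\ln u\big)}.
\]
Since $u$ is decreasing in $t$, it suffices to show that $f$ is \emph{increasing} on $(0,1)$, and since $f>0$ there, this is equivalent to $\tfrac{d}{du}\ln f(u)\ge 0$. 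Setting $\psi(v)=\ln\!\big(-v\ln v\big)$ we get $\ln f(u)=\psi(1-u)-\psi(u)$ and $\psi'(v)=\tfrac1v+\tfrac{1}{v\ln v}$, so that $\tfrac{d}{du}\ln f(u)=-\psi'(1-u)-\psi'(u)$; after multiplying by $u(1-u)>0$ and rearranging, the desired inequality $\tfrac{d}{du}\ln f(u)\ge 0$ becomes exactly
\[
\frac{1-u}{-\ln u}+\frac{u}{-\ln(1-u)}\ \ge\ 1 .
\]

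To close the argument I would invoke the elementary bound $-\ln x\le\frac{1-x}{x}$ for $x\in(0,1)$ (equivalently $\ln x\ge 1-\tfrac1x$, which holds because $x\mapsto\ln x-1+\tfrac1x$ vanishes at $x=1$ and is decreasing on $(0,1)$). Applying it at $x=u$ yields $\tfrac{1-u}{-\ln u}\ge u$, and at $x=1-u$ yields $\tfrac{u}{-\ln(1-u)}\ge 1-u$; adding these gives the displayed inequality, which completes statement~(1). The one place that needs care is the sign bookkeeping in the change of variables and in $\psi'$, since $\ln u$, $\ln(1-u)$, and $1-\tfrac1u$ are all negative on $(0,1)$; this is the only step where a slip would derail the argument, but it is routine rather than a genuine obstacle, and no boundary cases need separate treatment once the monotonicity of $f$ is established.
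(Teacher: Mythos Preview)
Your proof is correct. For statement~(2) you and the paper do essentially the same thing (the paper simply notes $t/2\le e^{t/2}$, which is your inequality $te^{-t/2}\le 2$).

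For statement~(1) the paper takes a slightly different route. After writing
$\frac{\dot H(t)}{\dot R(t)}=\frac{1}{2\ln 2}\bigl(1+\frac{(1-e^{-t})(-\ln(1-e^{-t}))}{te^{-t}}\bigr)$,
it \emph{factors} the non-constant part as $g(t)\cdot h(e^{-t})$ with $g(t)=\frac{1-e^{-t}}{t}$ and $h(x)=\frac{-\ln(1-x)}{x}$, and then shows separately that $g$ is decreasing in $t$ and $h$ is increasing in $x$ (so $h(e^{-t})$ is decreasing in $t$); the product of two positive decreasing functions is decreasing. You instead substitute $u=e^{-t}$, take the log-derivative of the full quotient $f(u)$, and reduce to the single inequality $\frac{1-u}{-\ln u}+\frac{u}{-\ln(1-u)}\ge 1$. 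Both arguments ultimately rest on the same elementary bound $\ln x\ge 1-\tfrac1x$ on $(0,1)$: the paper uses it (implicitly) to verify $h'(x)>0$, while you invoke it twice directly. The paper's factorization is a touch slicker since it sidesteps the sign bookkeeping you flag as the delicate step; your approach has the virtue of making the underlying inequality completely explicit.
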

\begin{proof}
\begin{enumerate}
    \item $\frac{\dot{H}(t)}{\dot{R}(t)} = \frac{1}{2\ln 2} (1-\frac{(1-e^{-t})\ln (1-e^{-t})}{t e^{-t}})$. Note that $\frac{(1-e^{-t})\ln (1-e^{-t})}{-t e^{-t}}=\frac{1-e^{-t}}{t}\cdot \frac{-\ln(1-e^{-t})}{e^{-t}}$. Let $g(t)=\frac{1-e^{-t}}{t}$ and $h(x)=-\frac{\ln(1-x)}{x}$ where $t>0$ and $x\in (0,1)$. It suffices to prove that $g(t)$ is decreasing and $g(x)$ is increasing. Indeed, $g'(t)=\frac{e^{-t}t-1+e^{-t}}{t^2}< \frac{e^{-t}e^t-1}{t^2}=0$ since $t+1< e^t$ for  $t>0$; $h'(x)=\frac{\frac{x}{1-x}+\ln(1-x)}{x^2}>0$ for $x\in (0,1)$.
    \item It suffices to prove $t/2\leq e^{t/2}$, which is true.
\end{enumerate}

\end{proof}

\begin{cor}
The MVP of any linearizable and scale-invariant sketch is at least $\frac{H_0}{2}$.
\end{cor}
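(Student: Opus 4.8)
The plan is to derive the corollary by feeding Theorem~\ref{thm:core} into the variance formula of Theorem~\ref{thm:kappa}. Let $\mathsf{X}$ be a linearizable, scale-invariant sketch with constant $\kappa=\kappa_{\mathsf{X}}$ and $m$ subsketches, and let $\bar h$ denote its limiting entropy per subsketch, so that its best (entropy-compressed) implementation occupies $m\bar h\,(1+o(1))$ bits — this is exactly how the $\MVP$ of \fishmonger{} was computed above. First I would observe that, by Theorem~\ref{thm:kappa}, the Martingale transform of $\mathsf{X}$ has relative variance $\frac{1}{2\kappa m}(1+o(1))$ and costs only one extra word, so its limiting $\MVP$ equals $(m\bar h)\cdot\frac{1}{2\kappa m}=\frac{\bar h}{2\kappa}$. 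Hence the corollary reduces to the single inequality $\bar h\ge\kappa H_0$.

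To establish $\bar h\ge\kappa H_0$ I would apply Theorem~\ref{thm:core} to $\mathsf{X}$, regarded as one linearizable sketch, with cardinality scaled as $\Lambda=\lambda m$. Fix a small $\epsilon>0$. For the entropy bound I would take $\bar H=m\bar h(1+\epsilon)$, which dominates $H(\Lambda)$ once $\Lambda/m$ is large, by the definition of $\bar h$ as a limit. For the rate bound I would use the analysis behind the constant $\kappa$ (the passage around~(\ref{eq:kappa_lambda})): $\lambda\,\E(P_{m,\lambda m})\to\kappa_\lambda$ as $m\to\infty$ and $\kappa_\lambda\to\kappa$ as $\lambda\to\infty$, so $2\Lambda\,\E(P_\Lambda)=2\lambda m\,\E(P_{m,\lambda m})\ge 2m\kappa(1-\epsilon)=:\bar R$ for all $\Lambda\in[e^a,e^b]$, provided $a$ and $m$ are large enough. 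Plugging these into Theorem~\ref{thm:core} gives
\[
\frac{\bar h(1+\epsilon)}{2\kappa(1-\epsilon)}\ =\ \frac{\bar H}{\bar R}\ \ge\ \frac{H_0}{2}\bigl(1-o_d(1)\bigr),\qquad d=b-a,
\]
and then I would let $d\to\infty$ and afterwards $\epsilon\to0$ to conclude $\bar h\ge\kappa H_0$, i.e.\ $\MVP=\frac{\bar h}{2\kappa}\ge\frac{H_0}{2}$.

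The step I expect to require the most care is matching the entropy hypothesis of Theorem~\ref{thm:core} to what scale-invariance actually gives us: the theorem as stated asks for $H(\lambda)\le\bar H$ at \emph{all} cardinalities, whereas scale-invariance only controls the entropy asymptotically, for $\Lambda/m\to\infty$. For a fixed-size sketch this is a non-issue ($\bar H$ is just the allocation), and for the smoothed dartboard sketches of interest the per-subsketch entropy is eventually non-decreasing, so its supremum equals $\bar h$; more robustly, the proof of Theorem~\ref{thm:core} only uses $\int_a^b H(e^x)\,dx\le(b-a)\bar H$, so the hypothesis can be localized to $[e^a,e^b]$ without any loss, which is the version I would actually invoke. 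The remaining ingredients — interchanging $\lim_m$ with $\E$ when evaluating $\kappa$, and carrying out the nested $d\to\infty$, $\epsilon\to0$ limits — are routine and parallel the proofs of Theorems~\ref{thm:kappa} and~\ref{thm:core}.
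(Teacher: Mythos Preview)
Your proposal is correct and follows essentially the same approach as the paper: fix $m$, view the combined sketch as a single linearizable sketch, use scale-invariance to get a uniform lower bound $\bar R$ on $2\Lambda\,\E(P_\Lambda)$ over an interval $[e^a,e^b]$, invoke Theorem~\ref{thm:core}, and then send $d=b-a\to\infty$ and the slack parameters to zero. The only cosmetic difference is that you first rewrite the $\MVP$ as $\bar h/(2\kappa)$ and then prove $\bar h\ge\kappa H_0$, whereas the paper keeps memory $\bar H$ and relative-variance bound $\delta$ separate and derives $\bar R$ from $\delta$ (going through $\kappa$) before applying Theorem~\ref{thm:core}; the two arrangements are equivalent. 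Your caveat about the entropy hypothesis of Theorem~\ref{thm:core} is well placed and matches what the paper does implicitly: it takes $\bar H$ to be the actual memory allocation (a global bound by definition), so no ``limiting entropy'' assumption is needed.
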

\begin{proof}
Let $S$ be a scale-invariant combined sketch with constant $\kappa$. First recall Definition \ref{def:scale-invariant} that for any $\lambda$, we have $\lambda \cdot P_{m,\lambda m}$ converges to $\kappa_\lambda$ almost surely as $m\to \infty$ where $P_{m,\lambda m}$ is the updating probability after $\lambda m$ insertions to a combined sketch consisting of $m$ base-sketches\footnote{The combined sketch is assumed to be smoothed.}. By the dominated convergence theorem, we have $\lambda\cdot\E(P_{m,\lambda m})$ converges to $\kappa_\lambda$. Furthermore, recall that $\kappa$ is the limit of $\kappa_\lambda$ as $\lambda \to \infty$.  Therefore, for any $\epsilon_1>0$, there exist sufficiently large $m_1$ and $\lambda_1$, such that for any $\lambda > \lambda_1$, $\lambda\cdot \E(P_{m_1,\lambda m_1})> \kappa - \epsilon_1$.

Now by Theorem \ref{thm:kappa} and the definition of ARV factor (Definition \ref{def:ARV}), for any $\epsilon_2>0$, there exist sufficiently large $m_2>m_1$ and $\lambda_2> \lambda_1$, such that for any $\lambda > \lambda_2$, 
$m_2\cdot \frac{\Var(E_{m_2,\lambda m_2})}{(\lambda m_2)^2}>\frac{1}{2\kappa}-\epsilon_2> \frac{1}{2\lambda \E(P_{m_2,\lambda m_2})+2\epsilon_1} -\epsilon_2$ where $E_{m_2,\lambda m_2}$ is the \Martingale{} estimator.

Then, fixing $m_2$, view the combined sketch as \emph{a single sketch}\footnote{Thus $E_{m_2,\lambda m_2}$ should be written as $E_{\lambda m_2}$ and $P_{m_2,\lambda m_2}$ should be $P_{\lambda m_2}$.} and we have that for any $\lambda > \lambda_2 m_2$, $\frac{\Var(E_{\lambda})}{\lambda^2}>\frac{1}{2 \lambda \E(P_\lambda) +2\epsilon_1 m_2} - \frac{\epsilon_2}{m_2}.$ Suppose the sketch uses $\Bar{H}$ bits of memory and for sufficiently large $\lambda$, the relative variance is bounded by $\delta$. Thus we have 
\begin{align*}
    \delta \geq \frac{\Var(E_\lambda)}{\lambda^2} > \frac{1}{2 \lambda \E(P_\lambda) +2\epsilon_1 m_2} - \frac{\epsilon_2}{m_2},
\end{align*}
which says
\begin{align*}
    2\lambda \E(P_\lambda)\geq \frac{1}{\delta+\frac{\epsilon_2}{m_2}}-2\epsilon_1 m_2.
\end{align*}
Invoking Theorem \ref{thm:core} where $a$ and $b$ can be chosen arbitrarily far away, as long as $b>a>\log \lambda_2$, we have 
\begin{align*}
    \frac{\Bar{H}}{\frac{1}{\delta+\frac{\epsilon_2}{m_2}}-2\epsilon_1 m_2}\geq \frac{H_0}{2}.
\end{align*}
Finally note that $\epsilon_1,\epsilon_2$ can be made arbitrarily small and $m_2$ can be made arbitrarily large. We conclude that the MVP $\Bar{H}\cdot \delta \geq \frac{H_0}{2}$.
\end{proof}

\section{Experimental Validation}\label{sect:experiments}

% i added some thing new and changed something in this section
% i think, the purpose of using 128 bits is just to show
% how curtain can work for extremely small space
% if we only want to show the algorithm works for 128 bits, m=37,
% we do not need to use the float point approximation
% the reason we are using a float point number approximation is exactly
% we want to provide a take out for engineers, when they want to do cardinality
% estimation, if they use curtain sketch,
% then with only 16 bytes, they can estimate with good accuracy
% for tasks from 1024 up to 2^64
% one last thing is, 1200 bits means m=400 for curtain, and it might still be a small number?

% but do not worry, i have made a back up below in comments

Throughout the paper we have maintained a possibly unhealthy devotion to asymptotic analysis, taking $m\to \infty$ whenever it was convenient. In practice $m$ will be a constant, and possibly a smallish constant. How do the sketches perform in the pre-asymptotic region?

In turns out that the theoretical analysis predicts the performance of \Martingale{} sketches pretty well, even whem $m$ is small. In the experiment of Figure~\ref{fig:SuperCompresionDistribution}, we fixed the sketch size at a tiny $128$ bits. Therefore \textsf{HyperLogLog} uses $m_1 = \floor{128/6} = 21$ counters. The \Martingale{} \textsf{LogLog} and \Martingale{} \Curtain{} sketches encode the martingale estimator with a floating point \emph{approximation} of $\hat{\lambda}$ in 14 bits, with a 6-bit exponent and 8-bit mantissa. Thus, \Martingale{} \textsf{LogLog} uses $m_2 = (128-14)/6 = 19$ counters,  and \Martingale{} \Curtain{} uses $m_3 = 37$.\footnote{It uses the optimal parameterization $(q,a,h)=(2.91,2,1)$
of Theorem~\ref{thm:martingale-curtain}.}

For larger sketch sizes, the distribution of $\hat{\lambda}/\lambda$ 
is more symmetric, 
and closer to the predicted performance.  
Figure~\ref{fig:1200bits} gives the empirical
distribution of $\hat{\lambda}/\lambda$ 
over 100,000 runs when $\lambda=10^6$ and the sketch size
is fixed at 1,200 bits. 
Here \Martingale \Curtain{} uses $m=400$, 
and both \Martingale{} \textsf{LogLog} 
and \textsf{HyperLogLog} use $m=200$.

The experimental and predicted relative variances and standard errors are given in Table~\ref{tab:variances}.

\begin{figure}[h]
    \centering
    \begin{minipage}[t]{0.49\textwidth}
    \includegraphics[width=\textwidth]{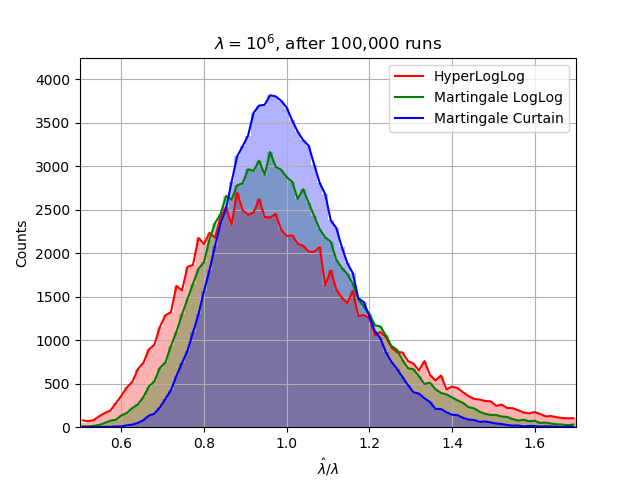}
    \caption{The sketch size is fixed at $128$ bits.}
    \label{fig:SuperCompresionDistribution}
    \end{minipage}
    \begin{minipage}[t]{0.49\textwidth}
    \includegraphics[width=\textwidth]{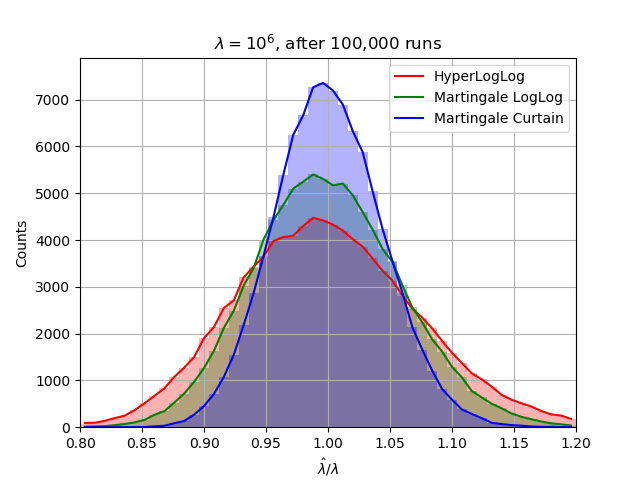}
    \caption{The sketch size is fixed at 1200 bits.}
    \label{fig:1200bits}
    \end{minipage}
\end{figure}

\begin{table}[h]
    \centering
    \scalebox{.95}{
    \begin{tabular}{|l|l|l|l|l|l|l|l|l|}
\hline\hline
\multirow{3}{*}{\textsc{Sketch}} &
\multicolumn{4}{l|}{Using $128$ bits} &
\multicolumn{4}{l|}{Using $1200$ bits}
\\\cline{2-9}
&\multicolumn{2}{l|}{Experiment}&
\multicolumn{2}{l|}{Prediction}&
\multicolumn{2}{l|}{Experiment}&
\multicolumn{2}{l|}{Prediction}\\
\cline{2-9}
&\multicolumn{1}{l|}{Var}&
\multicolumn{1}{l|}{StdErr}&
\multicolumn{1}{l|}{Var}&
\multicolumn{1}{l|}{StdErr}&
\multicolumn{1}{l|}{Var}&
\multicolumn{1}{l|}{StdErr}&
\multicolumn{1}{l|}{Var}&
\multicolumn{1}{l|}{StdErr}
\\\hline
\textsf{HyperLogLog}  \hfill &
$0.0573$ & $23.94\%$ & $0.0549$ & $23.44\%$   & $0.00541$ & $7.36\%$   & $0.00539$ & $7.35\%$  \\
\Martingale{} \textsf{LogLog} \hfill &
$0.0348$ & $18.65\%$   & $0.0365$ &  $19.10\%$  & $0.00350$ &  $5.91\%$  & $0.00347$ & $5.89\%$\\
\Martingale{} \Curtain{} \hfill &
$0.0211$ &  $14.54\%$  & $0.0208$ & $14.43\%$   & $0.00189$ &  $4.35\%$  & $0.00193$ & $4.39\%$\\
\hline\hline
    \end{tabular}
}
    \caption{The relative variance is $\frac{1}{\lambda^2}\Var(\hat{\lambda} \mid \lambda)$ and standard error is $\frac{1}{\lambda}\sqrt{\Var(\hat{\lambda} \mid \lambda)}$.
    The predictions for \Martingale{} \textsf{LogLog} and \Martingale{} \Curtain{}
    use Theorems~\ref{thm:kappa}, \ref{thm:PCSA-LL-kappa}, and~\ref{thm:curtain-kappa}.  The predictions for \textsf{HyperLogLog}
    are from Flajolet et al.~\cite[p. 139]{FlajoletFGM07}.}\label{tab:variances}
\end{table}
% page 150?

% i put the numbers here so that if you want to modify the table above you can directly access to original data.
% 128 experiment
% hll    : 0.05729866; stderr 23.937139%
% mtg ll : 0.03476456; stderr 18.645256%
% curtain: 0.02112980; stderr 14.536091%
% 128 theory
% hll    : 1.15380246/21 = 0.05494297; stderr 23.4399177537%
% mtg ll : ln2/19   = 0.0364814; stderr 19.100113%
% curtain: 2.31/111 = 0.0208108; stderr 14.425953%
% 1200 experiment
% hll    : 0.00541333; stderr = 7.357540%
% mtg ll : 0.00349581; stderr = 5.912537%
% curtain: 0.00189364; stderr = 4.351592%
% 1200 theory
% hll    : 1.07944154/200 =  0.005392077; stderr = 7.3465690688%
% mtg ll : ln2/200   =  0.003465; stderr = 5.887050%
% curtain: 2.31/1200 =  0.001925; stderr = 4.387482%
% I think it is better to put them into a table so I made it in this way.
% I guess precision to 3 digit is enough;
% I found many papers used ``accuracy'', i.e., std err, instead of the variance
% another problem with variance is that they are too small quantity that does not shock peoplr
% if we use standard error here, the numerical differences are more significant
% and yes, the problem of using standard error is before this line we have been dealing with the relative variances.

\section{Conclusion}\label{sect:conclusion}

The \Martingale{}~transform is attractive due to its simplicity and low variance, but it results in \emph{non}-mergeable sketches. 
We proved that under natural assumptions\footnote{(insensitivity to duplicates, and unbiasedness)},
it generates optimal estimators 
automatically, allowing one to design
structurally more complicated sketches, 
without having to worry about designing or analyzing \emph{ad hoc} estimators.
We proposed the \Curtain{} sketch, 
in which each subsketch only needs a constant number of bits of memory, for \emph{arbitrarily large} cardinality $U$.\footnote{Note that an $O(\log\log U)$-bit offset register is needed for the whole sketch.}

The analytic framework of 
Theorems~\ref{thm:me} and \ref{thm:kappa} simplifies Cohen~\cite{Cohen15} and Ting~\cite{Ting14},
and gives a user-friendly formula for the asymptotic relative variance (ARV)
of the \Martingale{} estimator, as a function of the sketch's constant $\kappa$. 
We applied this framework to \Martingale{} \Curtain{} as well as the \Martingale{} version of the classic sketches ($\mathsf{MinCount}$, $\mathsf{HLL}$ and $\mathsf{PCSA}$).

Assuming perfect compression, one gets the \emph{memory-variance product} ($\MVP$) of an sketch by multiplying its entropy and ARV.
It is proved that for linearizable sketches, \fishmonger{} is optimal for mergeable sketches~\cite{PettieW21}
(limiting $\MVP = H_0/I_2 \approx 1.98$). 
In this paper we proved that in the sequential (non-mergeable)
setting, if we restrict our attention 
to linearizable sketches, 
that \Martingale{} \fishmonger{} is optimal, 
with limiting $\MVP = H_0/2 \approx 1.63$ (Section \ref{sect:fishmonger-optimal}). 
We conjecture that these two lower bounds hold for general, possibly \emph{non}-linearizable sketches.

\bibliographystyle{abbrv}
\bibliography{ref}

\begin{thebibliography}{10}

\bibitem{AlonGMS99}
N.~Alon, P.~B. Gibbons, Y.~Matias, and M.~Szegedy.
\newblock Tracking join and self-join sizes in limited storage.
\newblock In {\em Proceedings 18th {ACM} Symposium on Principles of Database
  Systems (PODS)}, pages 10--20, 1999.

\bibitem{BakerL19}
D.~N. Baker and B.~Langmead.
\newblock Dashing: Fast and accurate genomic distances with hyperloglog.
\newblock {\em bioRxiv}, 2019.

\bibitem{Bar-YossefJKST02}
Z.~Bar{-}Yossef, T.~S. Jayram, R.~Kumar, D.~Sivakumar, and L.~Trevisan.
\newblock Counting distinct elements in a data stream.
\newblock In {\em Proceedings 6th International Workshop on Randomization and
  Approximation Techniques ({RANDOM})}, volume 2483 of {\em Lecture Notes in
  Computer Science}, pages 1--10, 2002.

\bibitem{Bar-YossefKS02}
Z.~Bar{-}Yossef, R.~Kumar, and D.~Sivakumar.
\newblock Reductions in streaming algorithms, with an application to counting
  triangles in graphs.
\newblock In {\em Proceedings 13th Annual {ACM}-{SIAM} Symposium on Discrete
  Algorithms (SODA)}, pages 623--632, 2002.

\bibitem{Ben-BasatEFMR18}
R.~Ben{-}Basat, G.~Einziger, S.~L. Feibish, J.~Moraney, and D.~Raz.
\newblock Network-wide routing-oblivious heavy hitters.
\newblock In {\em Proceedings of the 2018 Symposium on Architectures for
  Networking and Communications Systems ({ANCS})}, pages 66--73, 2018.

\bibitem{Blasiok20}
J.~B\l{}asiok.
\newblock Optimal streaming and tracking distinct elements with high
  probability.
\newblock {\em {ACM} Trans. Algorithms}, 16(1):3:1--3:28, 2020.

\bibitem{Broder97}
A.~Z. Broder.
\newblock On the resemblance and containment of documents.
\newblock In {\em Proceedings of Compression and Complexity of {SEQUENCES}},
  pages 21--29, 1997.

\bibitem{BuddhikaMPP17}
T.~Buddhika, M.~Malensek, S.~L. Pallickara, and S.~Pallickara.
\newblock Synopsis: {A} distributed sketch over voluminous spatiotemporal
  observational streams.
\newblock {\em {IEEE} Trans. Knowl. Data Eng.}, 29(11):2552--2566, 2017.

\bibitem{ChassaingG06}
P.~Chassaing and L.~Gerin.
\newblock {Efficient estimation of the cardinality of large data sets}.
\newblock In {\em Proceedings of the 4th Colloquium on Mathematics and Computer
  Science Algorithms, Trees, Combinatorics and Probabilities}, 2006.

\bibitem{ChenCSN11}
A.~Chen, J.~Cao, L.~Shepp, and T.~Nguyen.
\newblock Distinct counting with a self-learning bitmap.
\newblock {\em Journal of the American Statistical Association},
  106(495):879--890, 2011.

\bibitem{ChenCC17}
M.~Chen, S.~Chen, and Z.~Cai.
\newblock Counter tree: {A} scalable counter architecture for per-flow traffic
  measurement.
\newblock {\em {IEEE/ACM} Trans. Netw.}, 25(2):1249--1262, 2017.

\bibitem{ChiaDPSLDWG19}
P.~H. Chia, D.~Desfontaines, I.~M. Perera, D.~Simmons{-}Marengo, C.~Li, W.~Day,
  Q.~Wang, and M.~Guevara.
\newblock {KH}yper{L}og{L}og: {E}stimating reidentifiability and joinability of
  large data at scale.
\newblock In {\em Proceedings of the 2019 {IEEE} Symposium on Security and
  Privacy}, pages 350--364, 2019.

\bibitem{Cohen97}
E.~Cohen.
\newblock Size-estimation framework with applications to transitive closure and
  reachability.
\newblock {\em J. Comput. Syst. Sci.}, 55(3):441--453, 1997.

\bibitem{Cohen15}
E.~Cohen.
\newblock All-distances sketches, revisited: {HIP} estimators for massive
  graphs analysis.
\newblock {\em {IEEE} Trans. Knowl. Data Eng.}, 27(9):2320--2334, 2015.

\bibitem{CohenK08}
E.~Cohen and H.~Kaplan.
\newblock Tighter estimation using bottom $k$ sketches.
\newblock {\em Proc. {VLDB} Endow.}, 1(1):213--224, 2008.

\bibitem{DurandF03}
M.~Durand and P.~Flajolet.
\newblock Loglog counting of large cardinalities.
\newblock In {\em Proceedings 11th Annual European Symposium on Algorithms
  (ESA)}, volume 2832 of {\em Lecture Notes in Computer Science}, pages
  605--617. Springer, 2003.

\bibitem{ElworthWKBCBGBST20}
R.~A.~L. Elworth, Q.~Wang, P.~K. Kota, C.~J. Barberan, B.~Coleman, A.~Balaji,
  G.~Gupta, R.~G. Baraniuk, A.~Shrivastava, and T.~J. Treangen.
\newblock To petabytes and beyond: recent advances in probabilistic and signal
  processing algorithms and their application to metagenomics.
\newblock {\em Nucleic Acids Research}, 48(10):5217--5234, 2020.

\bibitem{EstanVF06}
C.~Estan, G.~Varghese, and M.~E. Fisk.
\newblock Bitmap algorithms for counting active flows on high-speed links.
\newblock {\em {IEEE/ACM} Trans. Netw.}, 14(5):925--937, 2006.

\bibitem{FlajoletFGM07}
P.~Flajolet, {\'{E}}.~Fusy, O.~Gandouet, and F.~Meunier.
\newblock {HyperLogLog: the analysis of a near-optimal cardinality estimation
  algorithm}.
\newblock In {\em Proceedings of the 18th International Meeting on
  Probabilistic, Combinatorial, and Asymptotic Methods for the Analysis of
  Algorithms (AofA)}, 2007.

\bibitem{FlajoletM85}
P.~Flajolet and G.~N. Martin.
\newblock Probabilistic counting algorithms for data base applications.
\newblock {\em J. Comput. Syst. Sci.}, 31(2):182--209, 1985.

\bibitem{Freitag19}
M.~J. Freitag and T.~Neumann.
\newblock Every row counts: Combining sketches and sampling for accurate
  group-by result estimates.
\newblock In {\em Proceedings of the 9th Biennial Conference on Innovative Data
  Systems Research (CIDR)}, 2019.

\bibitem{GibbonsT01}
P.~B. Gibbons and S.~Tirthapura.
\newblock Estimating simple functions on the union of data streams.
\newblock In {\em Proceedings 13th Annual {ACM} Symposium on Parallel
  Algorithms and Architectures ({SPAA})}, pages 281--291, 2001.

\bibitem{Giroire09}
F.~Giroire.
\newblock Order statistics and estimating cardinalities of massive data sets.
\newblock {\em Discret. Appl. Math.}, 157(2):406--427, 2009.

\bibitem{HelmiLMV12}
A.~Helmi, J.~Lumbroso, C.~Mart{\'i}nez, and A.~Viola.
\newblock {Data Streams as Random Permutations: the Distinct Element Problem}.
\newblock In {\em Proceedings of the 23rd International Meeting on
  Probabilistic, Combinatorial, and Asymptotic Methods for the Analysis of
  Algorithms (AofA)}, pages 323--338, 2012.

\bibitem{IndykW03}
P.~Indyk and D.~P. Woodruff.
\newblock Tight lower bounds for the distinct elements problem.
\newblock In {\em Proceedings 44th IEEE Symposium on Foundations of Computer
  Science {(FOCS)}, October 2003, Cambridge, MA, USA, Proceedings}, pages
  283--288, 2003.

\bibitem{JayramW13}
T.~S. Jayram and D.~P. Woodruff.
\newblock Optimal bounds for {J}ohnson-{L}indenstrauss transforms and streaming
  problems with subconstant error.
\newblock {\em {ACM} Trans. Algorithms}, 9(3):26:1--26:17, 2013.

\bibitem{KaneNW10}
D.~M. Kane, J.~Nelson, and D.~P. Woodruff.
\newblock An optimal algorithm for the distinct elements problem.
\newblock In {\em Proceedings 29th {ACM} Symposium on Principles of Database
  Systems ({PODS})}, pages 41--52, 2010.

\bibitem{Lang17}
K.~J. Lang.
\newblock Back to the future: an even more nearly optimal cardinality
  estimation algorithm.
\newblock {\em CoRR}, abs/1708.06839, 2017.

\bibitem{Lumbroso10}
J.~Lumbroso.
\newblock {An optimal cardinality estimation algorithm based on order
  statistics and its full analysis}.
\newblock In {\em {Proceedings of the 21st International Meeting on
  Probabilistic, Combinatorial, and Asymptotic Methods in the Analysis of
  Algorithms (AofA)}}, pages 489--504, 2010.

\bibitem{MarcaisSPK19}
G.~Mar\c{c}ais, B.~Solomon, R.~Patro, and C.~Kingsford.
\newblock Sketching and sublinear data structures in genomics.
\newblock {\em Annual Review of Biomedical Data Science}, 2(1):93--118, 2019.

\bibitem{PettieW21}
S.~Pettie and D.~Wang.
\newblock Information theoretic limits of cardinality estimation: {F}isher
  meets {S}hannon.
\newblock In {\em Proceedings 53rd ACM Symposium on Theory of Computing
  (STOC)}, 2021.

\bibitem{Pham17}
N.~Pham.
\newblock Hybrid {LSH:} faster near neighbors reporting in high-dimensional
  space.
\newblock In {\em Proceedings of the 20th International Conference on Extending
  Database Technology ({EDBT})}, pages 454--457, 2017.

\bibitem{ScheuermannM07}
B.~Scheuermann and M.~Mauve.
\newblock Near-optimal compression of probabilistic counting sketches for
  networking applications.
\newblock In {\em Proceedings of the 4th International Workshop on Foundations
  of Mobile Computing ({DIALM-POMC})}, 2007.

\bibitem{Ting14}
D.~Ting.
\newblock Streamed approximate counting of distinct elements: beating optimal
  batch methods.
\newblock In {\em Proceedings 20th {ACM} Conference on Knowledge Discovery and
  Data Mining ({KDD})}, pages 442--451, 2014.

\bibitem{WiresIDHW14}
J.~Wires, S.~Ingram, Z.~Drudi, N.~J.~A. Harvey, and A.~Warfield.
\newblock Characterizing storage workloads with counter stacks.
\newblock In {\em Proceedings of the 11th {USENIX} Symposium on Operating
  Systems Design and Implementation (OSDI)}, pages 335--349, 2014.

\bibitem{WoodLL19}
D.~E. Wood, J.~Lu, and B.~Langmead.
\newblock Improved metagenomic analysis with {K}raken 2.
\newblock {\em bioRxiv}, 2019.

\bibitem{XiaoCZCLLL17}
Q.~Xiao, S.~Chen, Y.~Zhou, M.~Chen, J.~Luo, T.~Li, and Y.~Ling.
\newblock Cardinality estimation for elephant flows: {A} compact solution based
  on virtual register sharing.
\newblock {\em {IEEE/ACM} Trans. Netw.}, 25(6):3738--3752, 2017.

\end{thebibliography}

\end{document}